\begin{document}
\title{Simultaneous Perturbation Methods for Adaptive Labor Staffing in
Service Systems}
\author[$\dag$]{Prashanth L.A.}
\author[$\sharp$]{H.L. Prasad}
\author[$\$$]{Nirmit Desai}
\author[$\sharp$]{Shalabh Bhatnagar}
\author[$\$$]{Gargi Dasgupta}
\affil[$^\dag$]{SequeL Team, INRIA Lille - Nord Europe, FRANCE}
\affil[$^\sharp$]{Department of Computer Science and Automation, Indian Institute of Science, INDIA}
\affil[$^\sharp$]{IBM Research, INDIA}

\renewcommand\Authands{ and }

\date{}

\maketitle

\begin{abstract}
Service systems are labor intensive due to the large variation in the
tasks required to address service requests from multiple customers.
Aligning the staffing levels to the forecasted workloads adaptively in
such systems is nontrivial because of a large number of parameters and
operational variations leading to a huge search space.  A
challenging problem here is to optimize the staffing while maintaining the
system in steady-state and compliant to aggregate service level agreement (SLA) constraints.
Further, because these parameters change on a weekly basis, the
optimization should not take longer than a few hours.  We formulate
this problem as a constrained Markov cost process parameterized by the (discrete) staffing levels. We propose novel simultaneous perturbation stochastic approximation (SPSA) based SASOC (Staff Allocation using Stochastic Optimization with Constraints) algorithms for solving the above problem. The algorithms include both first order as well as second order methods and incorporate SPSA based gradient estimates in the primal, with dual ascent for the Lagrange multipliers. Both the algorithms that we propose are online, incremental and easy to implement. Further, they involve a certain generalized smooth projection operator, which is essential to project the continuous-valued worker parameter tuned by SASOC algorithms onto the discrete set. We validated our algorithms on five real-life service systems and compared them with a state-of-the-art optimization tool-kit OptQuest. Being 25 times faster than OptQuest, our algorithms are particularly suitable for adaptive labor staffing. Also, we observe that our 
algorithms guarantee convergence and find better solutions than OptQuest in many cases.
\end{abstract}

\keywords{
Service systems,  labor optimization, Adaptive labor staffing, Simultaneous
perturbation stochastic approximation.
}

\section{Introduction}
A \emph{Service System (SS)} is an organization composed
of \begin{inparaenum}[(i)] \item the resources that support, and \item
  the processes that drive service interactions so that the outcomes
  meet customer expectations
  (\citep{alter2008service,spohrer2007steps})\end{inparaenum}. This
paper focuses on SS in the data-center management domain, where
customers own data centers and other IT infrastructures supporting
their businesses.  Owing to size, complexity, and uniqueness of these
technology installations, the management responsibilities of the same
are outsourced to specialized service providers.  A \emph{delivery
  center} is a remotely located workplace from where the service
providers manage the data-centers.  Each \emph{service request (SR)}
that arrives at a delivery center requires a specific skill and is
supported by a \emph{service worker (SW)} with the corresponding
skill set.  The SWs work in shifts which are typically aligned to the
business hours of the supported customers.  Hence, a group of
customers supported by a group of SWs, along with the operational
model of how SRs are routed constitutes an SS in this paper.  A
delivery center may consist of many SS.

We consider the problem of adaptive labor staffing in the context of
service systems. The objective is to find the optimal staffing levels
in a SS for a given dispatching policy (i.e., a map from service requests
to service workers) while maintaining system steady-state and
compliance to aggregate service level agreement (SLA) constraints.
The staffing levels constitute the worker parameter that we optimize and specify the number of workers in each shift and of each skill level.
The SLA constraints specify the target resolution time and the
aggregate percentage for an SR originating from a particular customer
and with a specified priority level. For instance, a sample SLA
constraint could specify that $95\%$ of all SRs from customer $1$ with
`urgent' priority must be resolved within $4$ hours. While the need for
SLA constraints to be met is obvious, the requirement for having
queues holding unresolved SRs bounded is also necessary because SLA
attainments are calculated only for the work completed. The problem is
challenging because analytical modeling of SS operations is difficult
due to aggregate SLA constraints and also because the SS
characteristics such as work patterns, technologies, and customers
supported change frequently. An important aspect to consider in the design of the adaptive labor staffing algorithm is its computational efficiency, as an algorithm with low running time helps in making staffing changes on a shorter timescale, for instance, every week.

We formulate this problem as a constrained Markov cost process
that depends on the worker parameter. 
To have a sense of the search space size, an SS
consisting of $30$ SWs who work in $6$ shifts and $3$ distinct skill
levels corresponds to more than $2$ trillion configurations.  We
design a novel single-stage cost function for the constrained Markov
cost process that balances the conflicting objectives of
worker under-utilization and SLA under/over-achievement. SLA under-achievement implies violation of the SLA constraint. Whereas worker under-utilization clearly points to suboptimal staffing, SLA over-achievement 
points to `over-delivery' and hence is also suboptimal. The
performance objective is a long-run average of this single stage cost
function and the goal is to find the optimum steady state worker parameter (i.e., the one that
minimizes this objective) from a discrete high-dimensional parameter set. However, our problem setting also involves
constraints relating to queue stability and SLA compliance. Thus, the
optimum worker parameter is in fact a constrained minimum. Another difficulty
in finding the optimum (constrained) worker parameter is
that the single stage cost and constraint functions can be estimated only via
simulation. Hence, the need is for a simulation-optimization algorithm that
incrementally updates the worker parameter along a descent direction,
while adhering to a set of queue stability and SLA constraints.

In this paper, we develop two novel discrete parameter simulation-based
optimization
algorithms for solving the above problem. Henceforth, we shall refer to
these algorithms as \emph{SASOC (Staff Allocation using Stochastic
  Optimization with Constraints)} algorithms. 
  The core of each of the algorithms is a multi-timescale stochastic
approximation scheme that incorporates a random perturbation based algorithm
for `primal descent' and couples it with a `dual ascent' scheme for the
Lagrange multipliers. The first order algorithm \footnote{A part of this work appeared as a short paper in ICSOC 2011 \citep{prashanth2011icsoc}} 
proposes the simultaneous perturbation stochastic approximation (SPSA) based
technique for gradient estimation in the primal. We also develop a second
order (Newton) methods that
estimates the Hessian of the objective function using SPSA and leverages
Woodbury's identity to directly estimate the inverse of the Hessian.
Both the SASOC algorithms that we propose are online, incremental and easy
to implement. 
Further, all SASOC algorithms involve a certain generalized smooth projection operator, which is essential to project the continuous-valued worker parameter tuned by SASOC algorithms onto the discrete set. The smoothness is necessary to ensure that the underlying transition dynamics of the constrained Markov cost process is itself smooth (as a function of the continuous-valued parameter) - a critical requirement to prove the convergence of all SASOC algorithms.
We evaluate our algorithms on five real-life SS in the
      data-center management domain.  For each of the SS, we collect
      operational data on work arrival patterns, service times, and
      contractual SLAs and feed this data into the simulation model
      of \citep{banerjee2011simulation}.
From the simulation experiments, we observe that our
      algorithms show overall better performance in comparison with
      the state-of-the-art OptQuest optimization toolkit
      (\citep{laguna1998optimization}). Further, our algorithms are $25$
      times faster than OptQuest and have a significantly lower execution runtime.

\subsection{Contributions to theory and methodology}
Newton-based algorithms usually suffer from the problem
of high per-iterate computational requirement because of
the need to estimate the inverse of the Hessian matrix at
each update epoch. We propose, for the first time, a method
for directly updating the inverse Hessian in Newton-based
Simultaneous Perturbation Stochastic Approximation algorithms
based on incorporating the Woodbury identity. This is seen to
result in significant computational savings as the resulting
Newton algorithm shows fast convergence. Our algorithm is based
on a novel generalized projection scheme. Since our problem
setting is one of discrete constrained optimization, we first
transform the problem for purposes of proving convergence (using the
proposed generalized projection scheme) to a continuous constrained
optimization setting. Note that SPSA is primarily a continuous
optimization technique. Our main observation is that SPSA also serves as a
powerful method in the context of discrete optimization even when
inequality constraints are considered. We prove the convergence
of the proposed SPSA algorithms. In the context of discrete optimization problems (with or
without inequality constraints) based on simulation, ours
is the first work that develops Newton-based
search algorithms.

\subsection{Contributions to practice}
Optimizing staff allocation in the context of service systems is challenging and the problem is further complicated by SLA constraints which are aggregate in nature.  Our system model (constrained Markov cost process) incorporates non-stationary workload arrivals and service times whose distribution is fitted from historical data and follows a lognormal (and not exponential) distribution. We present novel simulation optimization algorithms based on simulataneous perturbation technique that solve this problem. The proposed algorithms include both first order as well as second order optimization schemes and attempt to find the optimal staffing levels working with simulated data. Further, the proposed schemes are guaranteed to work with any given dispatching policy. Both our algorithms are online, incremental and computationally efficient - characteristics that make them amenable for their use in real service systems, especially with shorter periodicity for staff changes. From the numerical experiments based on 
data from real-life service systems, we observe that our SASOC algorithms exhibited overall superior performance in comparison to the state-of-the-art simulation optimization toolkit OptQuest. The experiments were performed with two different dispatching policies and it was observed that in each case SASOC algorithms converged rapidly to solutions of good quality at lower computational overhead as compared to OptQuest.        
 
\section{Related Work}
We now review literature in two different areas of related work: (1)
techniques pertaining to service systems analysis and (2) developments
in stochastic optimization approaches.

\subsection{Service Systems}
In \citep{vermaautomated}, a two-step mixed-integer program is
formulated for the problem of dispatching SRs within service systems.
While their goal is similar to ours, their formulation does not model
the stochastic variations in arrivals and service times. Further,
unlike our framework, the SLAs in their formulation are
not aggregated over a month long period.  In \citep{wasserkrug2008creating}, the authors propose
a scheme for shift-scheduling in the context of third-level IT support
systems.  Unlike this paper, they do not validate their method against
data from real-life third-level IT support.
In \citep{cezik2008staffing,bhulai2008simple}, simulation-optimization methods
are proposed for finding the optimal staffing in a multiskill call center,
where the constraints are on long-term SLA requirements. While the paper by
\citep{cezik2008staffing} proposes a cutting plane algorithm for solving an
integer program, \citep{bhulai2008simple} relies on obtaining a linear
programming solution. However, unlike SASOC algorithms, steady-state system
analysis is not performed there. Instead, they solve a sample problem and show
that the optimal solution of the sample problem converges to that of the exact
problem when the number of samples go to infinity. 
In \citep{robbins2008simulation}, usage
of a simulation based search method is proposed for finding the
optimal staffing levels in the context of a call-center domain.  They
evaluate the system given a staffing level with an analytical model,
which is possible in their simplified domain but would not be feasible
for service systems due to aggregate SLA constraints and dynamic
queues. An analysis of service systems using the ARENA
simulation tool is presented in \citep{brickner2010simulation}. Unlike
our model, the system there is not subjected to aggregate SLA
constraints and they do not consider preemption of
low priority SRs by higher priority SRs and assignment of higher
skilled SWs to growing queues of SRs requiring lower skill levels.
In \citep{banerjee2011simulation}, a
simulation framework for evaluating dispatching policies is
proposed. While we share their simulation model, the goal in this paper
is to develop simulation optimization methods for optimizing the worker parameter in a constrained setting.  In general, none of the above papers
propose an optimization algorithm that is geared for SS and that leverages simulation to adapt optimization search parameters, when both the objective and the constrained functions are suitable long-run averages.

In \citep{prashanth2011ss}, some algorithms based on the smoothed functional technique
for gradient estimation were proposed for the
problem of staffing optimization in service systems. The algorithms 
there used certain random perturbations based on Gaussian and Cauchy density
functions to estimate the gradient of the Lagrangian. While we use random
perturbations using i.i.d.,~symmetric, $\pm 1$-valued, Bernoulli random variables, the computational
cost involved in our algorithms is significantly low when compared to \citep{prashanth2011ss}
because generating Bernoulli distributed random variables is significantly
less expensive than generating Gaussian or
Cauchy random variates. Further,
we also propose second-order Newton based methods, which are more robust than
the first order methods in the aforementioned reference. We compare our
proposed algorithms with the ones from \citep{prashanth2011ss} in the numerical
experiments.

\subsection{Stochastic Optimization}
SPSA (\citep{spall92multivariate}) is a popular and
highly efficient simulation based local optimization scheme for
gradient estimation. SPSA has the critical
advantage that it needs only two samples of the objective function to estimate its gradient for
any $N$-dimensional parameter. In \citep{spall1997one}, a
one-simulation variant of SPSA was proposed. However, the algorithm in
\citep{spall1997one} was not found to work as well in practice as its
two simulation counterpart.  Usage of deterministic perturbations
instead of randomized was proposed in \citep{bhatnagar2003two}. The
deterministic perturbations there were based either on lexicographic
or Hadamard matrix generated sequences and were found to perform better
than their randomized perturbation counterparts. Another approach that is seen to improve the
performance of gradient SPSA is to use a chaotic nonlinear random
number generator, see \citep{bhatnagar2003multiscale}. A Newton based SPSA algorithm that needs four system simulations with Bernoulli random perturbations was proposed in \citep{spall2000adaptive}. In \citep{bhatnagar2005adaptive}, three other SPSA based estimates of the Hessian that require three, two and one system simulations, respectively, were proposed. In \citep{bhatnagar2007adaptive}, certain smoothed functional (SF) Newton algorithms that incorporate Gaussian-based perturbations were proposed.
In \citep{shalabh2011stochastic} continuous optimization techniques such as SPSA and SF, have been adapted to a setting of discrete parameter optimization. Two simulation based optimization algorithms that involve randomized projections have been proposed there for an unconstrained setting.  
 In \citep{shalabh2011constrained}, several
simulation based algorithms for constrained optimization have been
proposed. Two of the algorithms proposed there use SPSA for estimating
the gradient, after applying the Lagrange relaxation procedure to the
constrained optimization problem, while the other two incorporate SF approximation.
For a detailed survey of gradient estimation techniques in the context of simulation optimization, the reader is referred to \citep{book}.

Our SASOC algorithms differ from the stochastic optimization
approaches outlined above in various ways. Many algorithms, for
instance those proposed in
\citep{spall2000adaptive,bhatnagar2005adaptive,bhatnagar2007adaptive},
are for unconstrained optimization and in a continuous optimization setting.
However, our staff optimization
problem is for a discrete worker parameter and requires SLA and queue stability constraints to be
satisfied in the long-run-average sense.
While the algorithms of \citep{shalabh2011constrained} have
been developed for constrained optimization in the case of a continuously-valued parameter, our SASOC
algorithms optimize a discrete parameter. Further, unlike
\citep{shalabh2011constrained} where an explicit inversion of the Hessian at each
update step was advocated, we incorporate the Woodbury's identity to obtain a
novel update step for the inverse of the Hessian in our algorithm SASOC-W.
Unlike \citep{shalabh2011stochastic} where fully randomized projections were
used, we incorporate a generalized projection operator that is continuously
differentiable in the parameter and works as a deterministic operator over a
large portion of the search space and incorporates randomization over a small
portion. This helps in bringing down the computational requirement as a
deterministic projection scheme requires less computation than a fully
randomized one.
To the best of our knowledge, we are the first to present adaptations of Newton-based search approaches for constrained discrete optimization problem. 

The rest of the paper is organized as follows: First, we present the detailed problem formulation.
Second, we introduce our solution methodology and present 
simultaneous perturbation based SASOC algorithms for adaptive labor staffing. 
Third, we provide an outline of the convergence proof
and state the main results.\footnote{The detailed proofs have been provided
for review in a separate document attached to the paper.}
Fourth, we discuss the
implementation of our algorithms as well as the OptQuest algorithm and
present the performance simulation results. Finally, we provide the concluding remarks and discuss interesting future research directions.  

\section{Problem Formulation}
\label{sec:formulation}
A service system is characterized by the following entities.
\begin{itemize}[$\bullet$]
    \item A finite set of customers, denoted by $\C$, supported by the service
system.
    \item A finite set of shifts, denoted by $\A$, across which the service
workers are distributed.
    \item A finite set of skill or complexity levels, denoted by $\B$.
    \item A finite set of priority levels, denoted by the set $P$.
    \item A finite set of time intervals, denoted by $\I$, where during each
interval the arrivals stay stationary, with the number of arrivals following a
Poisson distribution whose rate parameter is given by the function $\alpha$
described next. 
    \item Arrival rates specified by the mapping $\sigma: \C \times \I
\rightarrow \R$. We assume that each of the SR arrival processes from the
various customers $C_i$ are independent and Poisson distributed with
$\alpha(C_i,I_j)$ specifying the rate parameter. Owing to the finite-buffer
nature of the system, we assume that the number of arrivals during any
interval ($\in \I$) is upper-bounded by a sufficiently large constant.
    \item Service time distributions characterized by the mapping $\tau: P \times \B \rightarrow (r_1, r_2), r_i \in \R, i=1,2$. Here $r_1$ represents the mean and $r_2$ the standard deviation of a truncated lognormal distributed random variable corresponding to a particular priority-complexity pair. In other words, if $M$ is a random variable following a normal distribution with mean $r_1$ and standard deviation $r_2$, then the truncated lognormal random variable is $e^M \wedge \top$, where $\top$ is a truncation constant that is chosen to be large in practice.     
    \item SLA constraints, given by the mapping $\gamma: \C \times P
\rightarrow (r_1, r_2), r_i \in \R, i = 1,2$. Here $\gamma(C_i,P_j) = (r_1,r_2)$
implies that the SLA target for SRs from customer $C_i$ and with priority $P_j$ is
$(r_1,r_2)$, with $r_1$ specifying the SLA percentage target and $r_2$ the
resolution time target (in hours). For instance, $\gamma(C_1,P_1) = (95,4)$
translates to the requirement that at least $95\%$ of the SRs from customer
$C_1$ with priority level $P_1$ should be closed within $4$ hours. Note that the
SLAs are computed at the end of each month and hence the aggregate SLA targets
are applicable to all SRs that are closed within the month under consideration.
Henceforth, we shall adopt the notation $\gamma_{i,j}$ to denote
$\gamma(C_i,P_j)$. 
\end{itemize}
Note that each arriving SR has a customer identifier ($\in \C$), a
priority identifier ($\in P$) and a complexity identifier ($\in \B$), whereas
any SW works in a particular shift
($\in \A$) and possesses a skill level ($\in \B$). In other words, each customer
can issue multiple SRs with their respective SLA targets and the SWs with the
right skill level and relevant shift have to pull these SRs from the complexity
queues and close them within the deadline specified by the SLA. The set $\I$ and
the mapping $\alpha$ allow us to model the variations in arrival rates better
than in a setting where the arrivals are assumed to be Poisson-distributed for
the entire period. 
Further, the time taken by an SW to complete an SR is stochastic and follows a
lognormal distribution, where the parameters of the distribution are
learned by conducting time and motion exercises described in
\citep{banerjee2011simulation}.

\begin{table}
\centering
\caption{Sample workers, utilizations and SLA targets}
\label{table:workers}
\begin{tabular}{c}
\subfigure[Workers $\theta_{i}$]{
\label{servicesystems:workers}
\begin{tabular}{|l|rrr|}
\hline
\textbf{} & \multicolumn{ 3}{c|}{\textbf{Skill levels}} \\\hline
\textbf{Shift} & \multicolumn{1}{l}{High} & \multicolumn{1}{l}{Med} & \multicolumn{1}{l|}{Low} \\ \hline
S1 & 1 & 3 & 7 \\
S2 & 0 & 5 & 2 \\
S3 & 3 & 1 & 2 \\ \hline
\end{tabular}} 
\vspace{3ex}

\\
\subfigure[Utilizations $u_{i,j}$]{
\label{servicesystems:utilizations}
\begin{tabular}{|l|rrr|}
\hline
\textbf{} & \multicolumn{ 3}{c|}{\textbf{Skill levels}} \\\hline
\textbf{Shift} & \multicolumn{1}{l}{High} & \multicolumn{1}{l}{Med} & \multicolumn{1}{l|}{Low} \\ \hline
S1 & 67\% & 34\% & 26\% \\
S2 & 45\% & 55\% & 39\% \\
S3 & 23\% & 77\% & 62\% \\ \hline
\end{tabular}}

\vspace{3ex}

\\
\subfigure[SLA targets $\gamma_{i,j}$]{
\label{ss:targetsla}
\begin{tabular}{|l|rr|}
\hline
\textbf{} & \multicolumn{ 2}{c|}{\textbf{Customers}} \\\hline
\textbf{Priority} & \multicolumn{1}{l}{Bossy Corp} & \multicolumn{1}{l|}{Cool Inc} \\ \hline
$P_1$ & 95\%4h & 89\%5h  \\
$P_2$ & 95\%8h & 98\%12h  \\
$P_3$ & 100\%24h & 95\%48h \\
$P_4$ & 100\%18h & 95\%144h  \\ \hline
\end{tabular}}
\end{tabular}
\end{table}

Table \ref{servicesystems:workers} illustrates a
simple SS configuration, specifying the staffing levels across shifts
and skill levels. This essentially constitutes the worker parameter
that we optimize. In this example, $\A = \{$S1, S2,
S3$\}$ and $\B=\{$high, medium, low$\}$. Tables
\ref{servicesystems:utilizations} and
\ref{ss:targetsla} provide sample
utilizations and SLA targets on a SS with three shifts, two
customers and four priority levels.

Figure \ref{fig_ss} shows the main components of the SS. The SRs
arrive from multiple customers and the arrival rate is specific to the
hour of week, i.e., within each hour of week, and for each
customer-priority pair, the arrivals follow a Poisson
distribution.  The parameters of this distribution are learned from
historical data over a period of at least $6$ months.  Once the SR
arrives, it is queued up in a matching complexity queue by the queue
manager and the dispatcher would then assign it to an SW based on the
dispatching policy.  For instance, in the PRIO-PULL policy, SRs are
queued in the complexity queues based directly on the priority
assigned to them by the customers. On the other hand, in the EDF
policy, the time left to SLA target deadline is used to assign the SRs
to the SWs i.e., the SW works on the SR that has the earliest
deadline. Note that we have a finite buffer system, i.e., the number of SRs in each of the 
complexity queues is upper-bounded by a sufficiently large constant. Any arriving SR that finds the corresponding complexity queue full will depart the system.

A SW works in exactly one shift (working days and times) and a SS may operate in multiple shifts.
We say that a particular configuration of workers across shifts and skill levels is feasible if 
\begin{inparaenum}[(a)]
\item the SLA constraints are met and
\item the complexity queues do not become unbounded
\end{inparaenum}
when using this configuration.
While the need for (a) is obvious, the requirement for having bounded complexity queues is also necessary. This is because SLA attainments are calculated only for work completed and not for work waiting for completion in the complexity queues. For instance, say in a given month, $100$ SRs arrive at various times from a customer to a SS and only $50$ of them are completed within the target completion time stipulated by the SLA constraints. The remaining $50$ SRs are still in progress without a known completion time and hence do not have an impact on the SLA attainment measures. Thus, a healthy SLA attainment alone is insufficient and 
the bound on the growth of complexity queues fills the gap.

\subsection{Constrained parameterized Markov Cost Process}
\label{sec:constrained-markov}

\begin{figure}
\begin{minipage}[c][\textheight]{\textwidth} 
    \centering
    \includegraphics[height=4in]{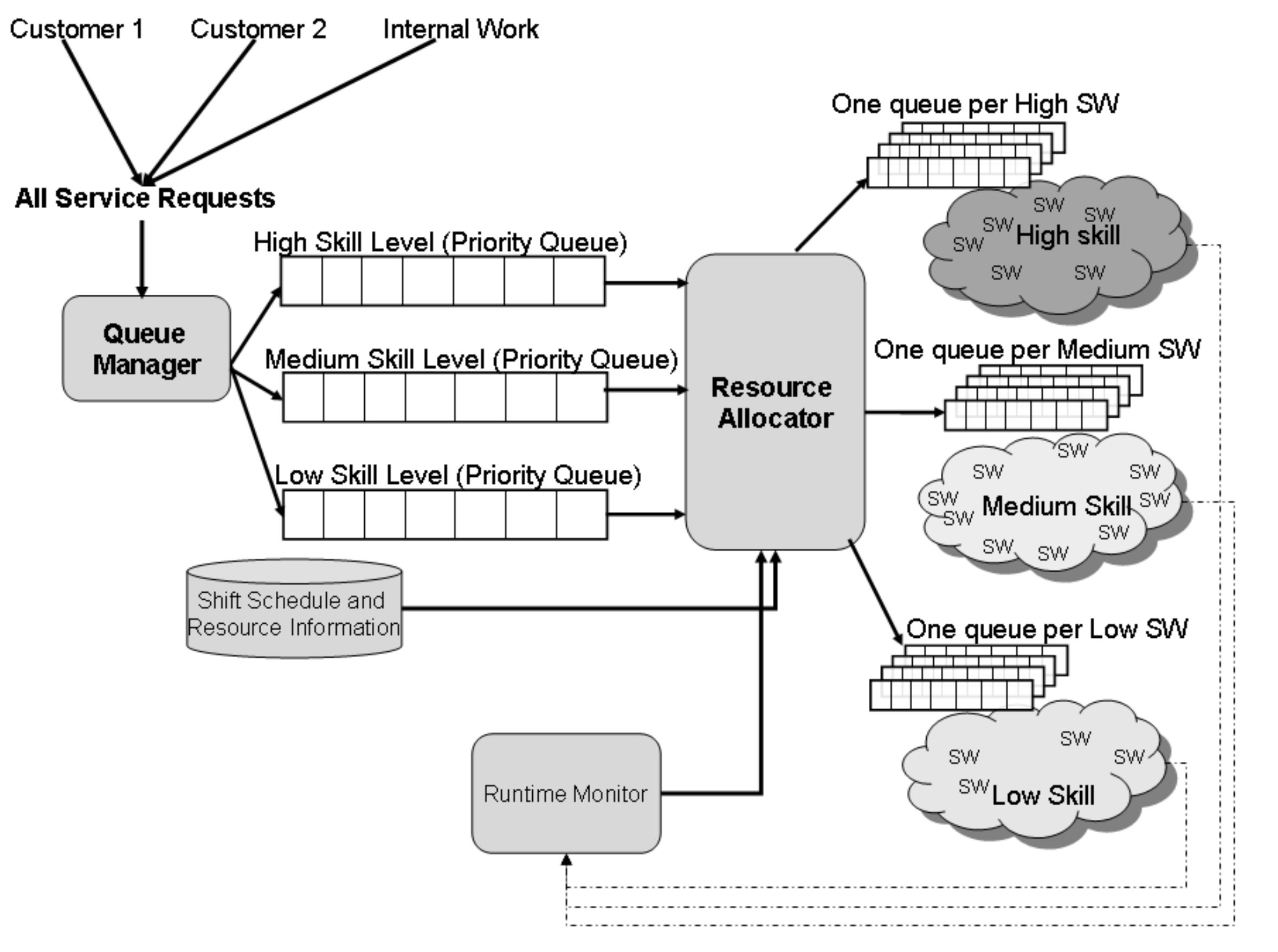}
    \caption{Operational model of an SS}
    \label{fig_ss}
\end{minipage}
\end{figure}

\begin{figure}
\begin{minipage}[c][\textheight]{\textwidth} 
\centering
\begin{tikzpicture}
\draw[very thick] (-0.2,0) -- (6.2,0);
\draw (0, 0.3) -- (0, -0.2) node [below=1pt] (n) {$n\T$};
\draw (6, 0.3) -- (6, -0.2) node [below] (n1) {$(n+1)\T$};
\node [below=4pt of n] (hatt) {$X_n$};
\node [left=8pt of n] (alloc) {Instant};
\draw[>=latex',->] (alloc) -- (n);
\node [left=8pt of hatt] (alloc) {State};
\draw[>=latex',->] (alloc) -- (hatt);

\node [below=4pt of n1] (bart) {$X_{n+1}$};

\draw[>=latex',->] (0,0.25) -- node[above] {Simulate($\theta(n),\T$)} (6,0.25);
\end{tikzpicture}
\caption{A portion of the time-line illustrating the process}
\label{fig:timeline}
\end{minipage}
\end{figure}
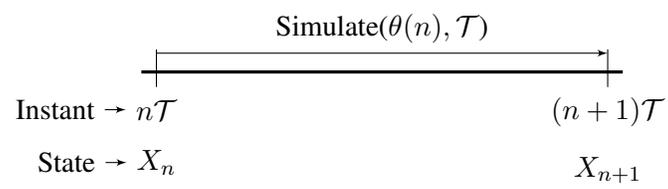

We consider the setting of a constrained parameterized Markov cost process that we describe
in detail below
\footnote{A similar framework is
considered, for instance, in \citep{marbach2001simulation,prashanth2011ss}. However, the setting
considered in \citep{marbach2001simulation} is unconstrained and the parameter is continuous-valued. Our formulation, though similar to that in \citep{prashanth2011ss}, is simpler as it does not involve hidden state components.}. 
Our setting, however, involves a discrete-time, continuous-space Markov process
represented by $\{X_n(\theta), n \ge0\}$. We describe
$X_n$ more clearly in Section \ref{sect:scc}. The transition
probabilities of this process depend on the worker parameter $\theta  =
(\theta_{1},\ldots,\theta_N)^T \in \D$,
where $N = |A| \times |B|$. In the above, $\theta_{i}$ indicates the number of
service workers whose skill level is $(i-1) \% |B|$ and whose shift index is
$(i-1) / |B|$. As an example, the worker parameter for the setting in Table
\ref{servicesystems:workers} is $\theta = (\theta_1, \ldots, \theta_9)^T$ $=
(1,3,7,0,5,2,3,1,2)^T$.
The parameter vector $\theta$ takes values in the set $\D$, where ${\displaystyle \D
\stackrel{\triangle}{=}
 \{0,1, \ldots, W_{\max}\}^N }$. Here
$W_{\max}$ serves as an upper bound for the number of workers in any shift and of any skill level. Note that one can enumerate all the points in $\D$ as $\D = \{D^1, D^2, \ldots, D^p\}$ for some $p >1$.

As illustrated in Figure \ref{fig:timeline}, the system stochastically transitions
from one state to another, while incurring a state-dependent cost. In addition,
there are state-dependent single-stage (constraint) functions described via 
$g_{i,j}(X_n), h(X_n), i = 1, \dots, |C|, j = 1, \dots, |P|$. These shall
correspond to the SLA and queue stability constraints. The state together with
the cost and constraint functions constitutes the constrained Markov cost
process. The $n$th system transition of this underlying process involves a
simulation of the service system for a fixed period $\T$ with the current worker
parameter $\theta(n)$. However, 
arrivals are stopped after time $\T$ and the service system is simulated until
the complexity queues are empty. In our experiments, $\T = 10$,
i.e., we simulate the service system for a period of ten months with the
staffing levels specified by $\theta(n)$. 
Also, note that this is a continuously running
simulation where, at discrete time instants $n\T$,
we update the worker parameter $\theta(n)$ and the simulation output causes a
probabilistic transition from the current state $X_n$ to the next state
$X_{n+1}$, while incurring a single-stage cost $c(X_n)$.  The precise
definitions of the state, the cost and the constraints functions are given in
Section \ref{sect:scc}. By an abuse of notation, we refer to the state at
instant $n\T$ as $X_n$.

\subsection{The Objective}
\label{sect:obj}
We use the long-run average cost as the performance objective in our setting. Thus, we are interested in optimizing the steady-state system performance. The optimization problem is the following:
\begin{equation}
\label{eqn:c-mp}
\begin{array}{l}
\textrm{Find } \min\limits_{\theta} J(\theta) \stackrel{\triangle}{=} \lim\limits_{n \rightarrow \infty}\frac{1}{n} \sum\limits_{m=0}^{n-1} c(X_m)\\
\textrm{subject to}\\
G_{i,j}(\theta) \stackrel{\triangle}{=} \lim\limits_{n \rightarrow \infty}\frac{1}{n} \sum\limits_{m=0}^{n-1} g_{i,j}(X_m) \le 0, \\\qquad\qquad\qquad\quad\forall i=1,\ldots,|C|, j=1,\ldots,|P|,\\
H(\theta) \stackrel{\triangle}{=} \lim\limits_{n \rightarrow \infty}\frac{1}{n} \sum\limits_{m=0}^{n-1} h(X_m) \le 0.
\end{array}
\end{equation}
We assume below that the Markov process $\{X_n\}$ under any parameter $\theta$
is ergodic. In such a case, the limits in \eqref{eqn:c-mp} are well-defined. If
this is not the case, one may replace the ``lim'' with ``limsup'' in the
definitions of $J(\theta), G_{i,j}(\theta)$ and $H(\theta)$ in
\eqref{eqn:c-mp}. 
Given the above constrained Markov cost process formulation, the optimization problem \eqref{eqn:c-mp} essentially stipulates that the optimal worker parameter $\theta^*$ should minimize the long-run average cost objective $J(\cdot)$ while maintaining queue stability in steady-state  (i.e., the long-run average of  $h(X_n)$ should not be above zero) and adhering to contractual SLAs, i.e., that the long-run average of $g_{i,j}(X_n)$ should not be above zero, for any feasible $(i,j)$-tuple.

The SASOC algorithms that we design subsequently (see Section
\ref{sec:algorithm}) use the cost $c(X_n)$ and constraint functions
$g_{i,j}(X_n), h(X_n)$ to tune the worker parameter $\theta(n)$ at instant $n\T$
and the system simulation would now continue with the updated worker parameter. 
While it is desirable to find the optimum  $\theta^* \in S$, i.e.,
\begin{align*}
\label{eq:optimal-parameter-set}
\theta^* = \mathop{{\rm argmin}} \bigg\{ J(\theta) \text{ s.t. } \theta \in \D, G_{i,j}(\theta) \le 0, i=1,\ldots,|C|,j=1,\ldots,|P|,H(\theta) \le 0\bigg\},
\end{align*}
it is in general very difficult to achieve a global minimum. We apply the Lagrange relaxation procedure to the above problem and then provide SPSA based algorithms - both first as well as second order, for finding a locally optimum parameter $\theta^*$.
We now describe in detail the state, single-stage cost and constraint functions that we adopt for the constrained Markov cost process formulated for optimizing the staffing in the context of service systems.

\subsection{State, Cost and Constraints}
\label{sect:scc}
The state $X_n$ at instant $n$ is the vector of the length of waiting SR
queues corresponding to each skill level, the current  utilization of workers
for each shift and skill level, and the current SLA attainments for each
customer and SR priority. Thus,
\begin{align}
X_{n}  = ( \N(n), u(n), \gamma'(n), q(n)),
\end{align}
where,
\begin{itemize}[$\bullet$]
    \item $\N(n) = (\N_1(n), \ldots, \N_{|B|}(n))^T$, with  $\N_i(n)$ being the number of SRs in 
the system queue corresponding to skill level $i \in \B$. As all the complexity queues are of 
finite size, we have $\N_i(n) \le \varsigma , i=1,\ldots,|B|$, where $\varsigma >0$ 
is a sufficiently large constant.
    \item The utilization vector $u(n) = (u_{1,1}(n),\ldots,u_{|A|,|B|}(n))$,
with each  $u_{i,j}(n) \in [0,1]$ being the average utilization of the workers in
shift $i$ and skill level $j$, at instant $n$.
    \item The SLA attainment vector $\gamma'(n) = (\gamma'_{1,1}(n),\ldots,\gamma'_{|C|,|P|}(n))$, 
with $\gamma'_{i,j}(n) \in [0,1]$ being the SLA attainment for customer $i$ and priority $j$, at instant $n$.
    \item $q(n)$ is an indicator variable that denotes the queue feasibility status of the system at instant $n$. In other words,  $q(n)$ is $0$ if the growth rate of the SR queues (for each complexity) is beyond a threshold and is $1$ otherwise. We need $q(n)$ to ensure system steady-state which is independent of SLA attainments because the latter are computed only
on the SRs that were completed and not on those queued up in the system.
\end{itemize}
Let $S$ denote the state space. We observe that $S$ is a compact set. This is
because each of the state components in $X_n$ take values in sets that are closed and bounded. In
particular, each element of $u(n)$, $\gamma'(n)$ takes values in $[0,1]$ and $0
\le q(n) \le 1$, respectively. The system SR queues $\N$ are also of finite length and hence,
$X_n$ is bounded. 

Considering that the queue lengths, utilizations and SLA attainments at instant
$n+1$ depend only on the state $X_n$ at instant $n$, we
observe that $\{X_n(\theta), n \ge 0\}$ is a constrained 
Markov cost process for any given (fixed) parameter $\theta$.
We now describe in detail the single-stage cost function, whose long-run average sum we try to optimize in \eqref{eqn:c-mp}.
We let the cost function $c(X_n)$ have the form:
\begin{equation}
\label{eqn:singlestagecost}
\begin{array}{l}
c(X_n)  =    r \times \left( 1 -  \sum_{i=1}^{|A|}\sum_{j=1}^{|B|} \alpha_{i,j} \times u_{i,j}(n) \right) + s \times \left( \dfrac{\sum_{i=1}^{|C|}\sum_{j=1}^{|P|} \left | \gamma'_{i,j}(n) - \gamma_{i,j} \right |}{|C|\times|P|}\right),
\end{array}
\end{equation}
where $r,s \ge 0$ and $r + s =1$. Further, $0 \le \gamma_{i, j} \le 1$ denotes the contractual SLA for customer $i$ and priority $j$. The single-stage cost function here is a linear function of the state and remains bounded. In fact, from \eqref{eqn:singlestagecost}, we observe that $0 \le c(X_n) \le 1$. This is because $u_{i,j}(n), \gamma_{i,j}, \gamma'_{i,j}(n) \in [0,1]$ and each component in \eqref{eqn:singlestagecost} is upper-bounded by $1$.

The cost function is designed to balance between two conflicting objectives of maximizing the utilization of workers and meeting the SLA requirements simultaneously.
By the first component in \eqref{eqn:singlestagecost}, we seek to minimize the under-utilization of workers as it is more fine-grained and hence, allows tighter minimization in comparison to minimizing just the sum of workers across shifts and skill levels. The second component in \eqref{eqn:singlestagecost} represents the over/under-achievement of SLAs, which is the distance between attained and the contractual SLAs. While the need for meeting the target SLAs motivates the under-achievement part in the second component, it is also necessary to minimize over-achievement of SLAs. This is because an over-achieved SLA, for instance meeting $100\%$ instead of the target of $95\%$ for a particular customer, while being desirable for the customer,
requires more time and effort from some of the workers and does not bring in additional rewards.

Note that the first term in (\ref{eqn:singlestagecost}) uses a weighted sum of utilizations over workers from each shift and across each skill level.  Further, the weights $\alpha_{i,j}$ are fixed and not time-varying. Using historical data on SR arrivals, the percentage of workload arriving in each shift and for each skill level is obtained. These percentages decide the weights $\alpha_{i,j}$ used in \eqref{eqn:singlestagecost}, that in turn satisfy
\[0 \le \alpha_{i,j} \le 1, \textrm{ and }\sum_{i=1}^{|A|}\sum_{j=1}^{|B|} \alpha_{i,j} = 1,\]
for $i=1,2,\ldots,|A|,$ and $j=1,2,\ldots,|B|$. This prioritization of workers helps in optimizing the worker set based on the given workload. For instance, if $70\%$ of the SRs requiring low skill worker attention arrive in shift $1$, then one may set $\alpha_{1,0} = 0.7$, in the cost function \eqref{eqn:singlestagecost}, where $0$ denotes the low skill level index.

The single-stage constraint functions $g_{i,j}(\cdot), h(\cdot), i = 1, \dots, |C|, j = 1, \dots, |P|,$ are given by:
\begin{align}
g_{i,j}(X_n) & = \gamma_{i,j} - \gamma'_{i,j}(n), \forall i=1,\ldots,|C|, j=1,\ldots,|P|,\label{eqn:sla-constraints}\\[1ex]
h(X_n) & = 1 - q(n).\label{eqn:feasibility-constraint}
\end{align}
Here (\ref{eqn:sla-constraints}) specifies that the attained SLA levels should be equal to or above the contractual SLA levels for each customer-priority tuple. Further, (\ref{eqn:feasibility-constraint}) ensures that the SR queues for each complexity in the system stay bounded. In the constrained optimization problem formulated below, we attempt to satisfy these constraints in the long-run average sense (see (\ref{eqn:c-mp})).

The SASOC algorithms treat the parameter as continuous-valued and tune it accordingly. Let us denote this continuous version of the worker parameter by $\bar\theta = (\bar\theta_1,\ldots,\bar\theta_N)$. Note that $\bar\theta_i \in [0, W_{\max}], i=1,2,\ldots,N$. We now design a smooth projection operator $\Gamma$ that projects $\bar\theta$ on to the discrete space $\D$ so that the same can be used for performing the simulation of the service system. We call the $\Gamma$-operator as a generalized projection scheme as it lies in between a fully deterministic projection scheme based on mere rounding off and a completely randomized scheme, whereby depending on the value of $\bar\theta_j$ (for any $j=1,\ldots,N$) one can find points $\D^k$ and $\D^{k+1}$ with $\D^k < \D^{k + 1}$, $\D^k, \D^{k + 1} \in \D$ such that $\D^k$ and $\D^{k + 1}$ are the immediate neighbours of $\bar\theta_j$ in the set $\D$. Then, one sets the corresponding discrete parameter as
\begin{equation}
 \theta_j =
  \begin{cases}
   \D^{k + 1} &  \text{w.p. } \dfrac{\bar\theta_j - \D^k }{\D^{k + 1} - \D^k}, \\
   \D^k &  \text{w.p. } \dfrac{\D^{k + 1} - \bar\theta_j }{\D^{k + 1} - \D^k},
  \end{cases}
\end{equation}
where, w.p.~ stands for `with probability'.

\subsection{A Generalized Projection Operator}
\label{sec:gammaproj}
For any $\bar\theta = (\bar\theta_1,\ldots,\bar\theta_N)$ with $\bar\theta_j \in
[0, W_{\max}], j = 1,2,\ldots,N$, we define a projection operator
$\Gamma(\bar\theta) = (\Gamma_1(\bar\theta_1),\ldots,\Gamma_N(\bar\theta_N)) \in
\D$ which projects any $\bar\theta$ onto the discrete set $\D$ as follows:

For convenience, lets enumerate the elements of $\D$ as
$\D = \{\D^1,\D^2,\ldots,\D^p\}$ for some $p>1$.
Let $\zeta >0$ be a fixed real number and $\bar\theta_i$ be such that $\D^j \le
\bar\theta_i \le \D^{j+1}, \D^j < \D^{j+1}$ for some $\D^j, \D^{j+1} \in \D$.
Let us consider an interval of length $2 \zeta$ around the midpoint of $[\D^j,
\D^{j+1}]$ and denote it as $[\tilde \D_1, \tilde \D_2]$, where $\tilde \D_1 =
\frac{\D^j+\D^{j+1}}{2} - \zeta$ and $\tilde \D_2 = \frac{\D^j+\D^{j+1}}{2} +
\zeta$.
 Then, \\$\Gamma_i(\bar\theta_i)$ for $\theta_i \in [\D^j,\tilde\D_1] \cup
[\tilde\D_2,\D^{j+1}]$ is defined by
\begin{equation}
 \Gamma_i(\bar\theta_i) =
  \begin{cases}
   0 &  \text{if } \bar\theta_i < 0 \\
   \D^j & \text{if } \bar\theta_i \le \frac{\D^j+\D^{j+1}}{2} - \zeta  \\
   \D^{j+1} & \text{if } \bar\theta_i \ge \frac{\D^j+\D^{j+1}}{2} + \zeta \\
   W_{\max} & \text{if } \bar\theta_i \ge W_{\max}.
  \end{cases}
\end{equation}
Further, $\Gamma_i(\bar\theta_i)$ for $\bar{\theta}_i \in [\tilde\D_1,\tilde\D_2]$ is
given by
\begin{equation}
 \Gamma_i(\bar\theta_i) =
  \begin{cases}
   \D^j & \text{w.p. } f(\frac{\tilde\D_2 - \bar\theta_i}{2\zeta})   \\
   \D^{j+1} & \text{w.p. } 1- f(\frac{\tilde\D_2 - \bar\theta_i}{2\zeta}) \\
  \end{cases}
\end{equation}
In the above, $f$ is any continuously
differentiable function defined on $[0,1]$ such that $f(0)=0$ and $f(1)=1$. Note
that we deterministically project onto either $\D^j$ or $\D^{j+1}$ if
$\bar\theta_i$ is outside of the interval $[\tilde\D_1, \tilde\D_2]$. Further,
for $\bar\theta_i \in [\tilde\D_1, \tilde\D_2]$, we project randomly using a
smooth function $f$. It is necessary to have a smooth projection operator to
ensure convergence of our SASOC algorithms as opposed to a deterministic
projection operator that would project $\bar\theta_i \in
[\D^j,\frac{\D^j+\D^{j+1}}{2})$ to $\D^j$ and $\bar\theta_i \in
[\frac{\D^j+\D^{j+1}}{2}, \D^{j+1}]$ to $\D^{j+1}$. The problem with a
deterministic projection operator is that there is a jump at the midpoint of the interval and hence, when
extended for any $\theta$ in the convex hull $\bar\D$, the transition dynamics
of the process $\{X_n, n\ge0\}$  is not continuously
differentiable. A non-smooth projection operator makes the dynamics non-smooth at the boundary
points.

The SASOC algorithms that we present subsequently tune the worker parameter in
the convex hull of $\D$, denoted by $\bar\D$, a set that can be defined as
$\bar\D = [0, W_{\max}]^N$. This idea has been used in
\citep{shalabh2011stochastic} for an unconstrained discrete
optimization
problem. However, the projection operator used there was a fully randomized
operator. The generalized projection scheme that we incorporate has the
advantage that while it ensures that the transition dynamics of the parameter
extended Markov process is smooth (as desired), it requires a lower
computational effort because in a large portion of the parameter space (assuming
$\zeta$ is small), the projection operator is essentially deterministic.

We also require another projection operator $\bar\Gamma$ that projects any
$\theta \in \R^N$ onto the set $\bar\D$ and is defined as $\bar\Gamma(\theta) =
(\bar\Gamma_1(\theta_1), \ldots, \bar\Gamma_N(\theta_N))$, where
$\bar\Gamma_i(\theta_i) = \min(0, \max(\theta_i, W_{\max}))$, $i=1,\ldots,N$.
Thus, $\bar\Gamma(\cdot)$ keeps the parameter updates within the set $\bar\D$
and $\Gamma(\cdot)$ projects them to the discrete set $\D$. The projected
updates are then used as the parameter values for conducting the simulation of
the service system.

\subsection{Assumptions}
We now make the following standard assumptions: One amongst (A2) and (A2')
will be assumed for the algorithms that follow.

\begin{description}
\item[\textbf{(A1)}] The Markov process $\{X_n(\theta) , n \geq
0\}$ under a given dispatching policy and parameter $\theta$ is ergodic.
\item[\textbf{(A2)}] The single-stage cost functions $c(\cdot)$,
$g_{i,j}(\cdot)$ and $h(\cdot)$ are all continuous. The long-run average cost
$J(\cdot)$ and constraint functions $G_{i,j}(\cdot), H(\cdot)$ are twice
continuously differentiable with bounded third derivative.
\item[\textbf{(A2')}] The single-stage cost functions $c(\cdot)$,
$g_{i,j}(\cdot)$ and $h(\cdot)$ are all continuous. The long-run average cost
$J(\cdot)$  and constraint functions $G_{i,j}(\cdot), H(\cdot)$ are continuously
differentiable with bounded second derivative .
\item[\textbf{(A3)}] The step-sizes $\{a(n)\}$, $\{b(n)\}$ and $\{d(n)\}$
satisfy
\[
\begin{array}{l}
\sum_{n} a(n) = \sum_{n} b(n) = \sum_n d(n) =\infty; \sum_n (a^2(n) + b^2(n) +
d^2(n)) < \infty,\\[2ex]
 \dfrac{b(n)}{d(n)}, \dfrac{a(n)}{b(n)} \rightarrow 0 \text{ as } n \rightarrow
\infty.
\end{array}
\]
\end{description}

Assumption (A1) ensures that the process $\{X_n\}$ is stable for any given
$\theta$ and ensures that the long-run averages of the single stage cost and
constraint functions in \eqref{eqn:c-mp} are well-defined.
As stated earlier, we require one of (A2) and (A2') for our various algorithms. More
specifically, (A2) will be assumed for Hessian based schemes, while (A2') will
be assumed for gradient approaches. (A2) and (A2') are technical requirements
needed to push through suitable Taylor's arguments in order to prove the
convergence of the algorithms. The first two conditions in (A3) are standard
requirements for step-size sequences and the last condition there ensures a separation of
time scales between the different recursions in SASOC algorithms discussed in
detail in Section \ref{sec:algorithm}.

\begin{remark}
    As seen before, the state space $S$ is compact and ergodicity of the
underlying Markov process $\{X_n\}$ will follow if one ensures that there
is at least one worker for each complexity class.
\end{remark}
\section{Our algorithms}
\label{sec:algorithm}
The constrained long-run average cost optimization problem
(\ref{eqn:c-mp}) can be expressed using the standard Lagrange
multiplier theory as an unconstrained optimization problem given
below.
\begin{eqnarray}
\nonumber
\max_{\lambda} \min_{\theta} L(\theta, \lambda) &\stackrel{\triangle}{=}& \lim_{n \rightarrow \infty}\frac{1}{n} \sum_{m=0}^{n-1} E\left \{ c(X_m) + \sum\limits_{i = 1}^{|C|} \sum\limits_{j = 1}^{|P|}\lambda_{i,j} g_{i,j}(X_m) + \lambda_f h(X_m) \right \}\\
\label{eqn:Lagrangian}
&=& J(\theta) + \sum\limits_{i = 1}^{|C|} \sum\limits_{j = 1}^{|P|}\lambda_{i,j} G_{i,j}(\theta)
+ \lambda_f H(\theta),
\end{eqnarray}
where $\lambda_{i,j} \ge 0, \quad \forall i=1,\ldots,|C|,
j=1,\ldots,|P|$ represent the Lagrange multipliers corresponding to
constraints $g_{i,j}(\cdot)$ and $\lambda_f$ represents the Lagrange
multiplier for the constraint $h(\cdot)$, in the optimization problem
(\ref{eqn:c-mp}). Also, $\lambda = (\lambda_{i,j}, \lambda_f,
i=1,\ldots,|C|, j=1,\ldots,|P|)^T$. The function $L(\theta, \lambda)$
is commonly referred to as the Lagrangian. An optimal $\left (
\theta^*, \lambda^* \right )$ is a saddle point for the Lagrangian,
i.e., $L(\theta, \lambda^*) \ge L(\theta^*, \lambda^*) \ge L(\theta^*,
\lambda)$, $\forall\theta$, $\forall\lambda$. Thus, it is necessary to design an algorithm which descends
in $\theta$ and ascends in $\lambda$ in order to find the optimum point. The
simplest iterative procedure for this purpose would use the gradients
of the Lagrangian with respect to $\theta$ and $\lambda$ to descend
and ascend respectively. However, for the given system, the computation
of gradient with respect to $\theta$ would be intractable due to lack
of a closed form expression of the Lagrangian. Thus, a simulation
based algorithm is required. The above explanation suggests that an
algorithm for computing an optimal $\left ( \theta^*, \lambda^* \right
)$ would need three stages in each of its iterations.

\begin{enumerate}
 \item The inner-most stage which performs one or more simulations
   over several time steps and aggregates data, i.e., does the averaging of the single-stage cost and constraint functions $c(\cdot), g_{i,j}(\cdot)$ and $h(\cdot)$ for any given $\theta$ and $\lambda$ updates.
 \item The next outer stage which estimates the gradient of the Lagrangian along $\theta$ and updates $\theta$ along a descent direction. This stage would perform several iterations for a given $\lambda$ and find a good estimate of $\theta$; and
 \item The outer-most stage which updates the Lagrange multipliers $\lambda$ along an ascent direction, using the converged values of the inner two loops.
\end{enumerate}

The above three steps will have to be performed iteratively till the
solution converges to a saddle point described previously.
Note that the loops are nested in the sense that the loop in iteration (1) would be a sub-loop for iteration (2). Likewise, iteration (2) would be a sub-loop for iteration (3). Thus, in between two successive updates of an outer loop (iterations (2) or (3)), one would potentially have to wait for a long time for convergence of the inner loop procedure (iteration (1) or iterations (1) and (2), respectively). This problem gets addressed by using
simultaneous updates to all three stages in a stochastic recursive
scheme but with different step-size schedules, the outer-most having
the smallest while the inner-most having the largest of step-sizes. The
resulting scheme is a multiple time-scale stochastic approximation
algorithm \citep[Chapter 6]{borkar2008stochastic}.

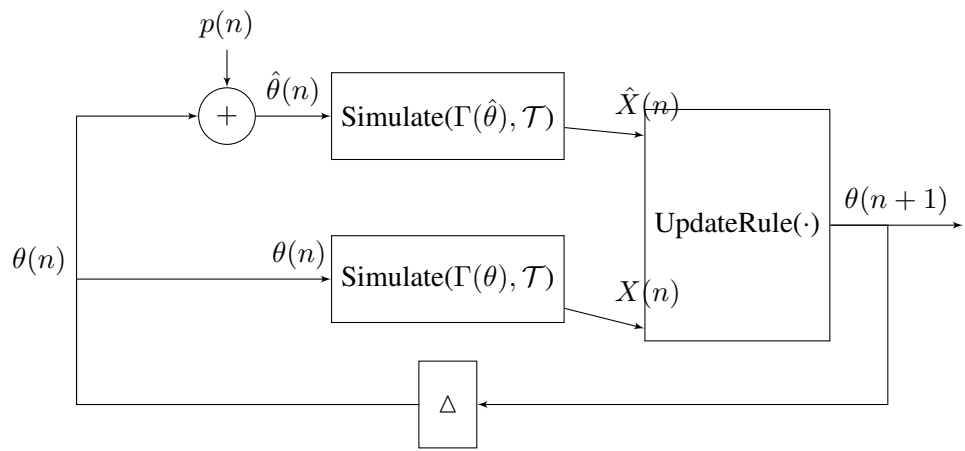
\begin{figure}
\begin{minipage}[c][\textheight]{\textwidth}
\centering
\tikzstyle{block} = [draw, fill=white, rectangle,
    minimum height=3em, minimum width=6em]
\tikzstyle{sum} = [draw, fill=white, circle, node distance=1cm]
\tikzstyle{input} = [coordinate]
\tikzstyle{output} = [coordinate]
\tikzstyle{pinstyle} = [pin edge={to-,thin,black}]
\scalebox{1.0}{\begin{tikzpicture}[auto, node distance=2cm,>=latex']
    \node (theta) {};
    \node [sum, above right=0.8cm of theta, xshift=1cm] (perturb) {$+$};
    \node [above=0.5cm of perturb] (noise) {$p(n)$};
    \node [block, right=1cm of perturb] (psim) {Simulate($\Gamma(\hat\theta), \T$)};
    \node [block, below=1cm of psim] (sim) {Simulate($\Gamma(\theta), \T$)};
    \node [block, below right=1.5cm of psim, minimum height=8em, yshift=1.74cm] (update) {UpdateRule($\cdot$)};
    \node [block, minimum width=2em, below=0.5cm of sim] (delay) {$\vartriangle$};
    \draw [->] (perturb) -- node {$\hat\theta(n)$} (psim);
    \draw [->] (noise) -- (perturb);
    \draw [->] (psim) -- node {$\hat X(n)$} (update.136);
    \draw [->] (sim) -- node {$X(n)$} (update.228);
    \draw [->] (update) -- node {$\theta(n + 1)$} +(3, 0);
    \draw [->] (update) -- +(2, 0) |- (delay);
    \draw [->] (delay) -| node [near end] {$\theta(n)$} ($(perturb) - (2, 0)$) -- (perturb);
    \draw [->] (delay) -| ($(sim) - (psim) + (perturb) - (2, 0)$) -- node [very near end]  {$\theta(n)$} (sim);
\end{tikzpicture}}
\caption{Overall flow of the algorithm \ref{algorithm:sasoc-g-complete-algorithm}.}
\label{fig:sasoc-g-algorithm-flow}
\end{minipage}
\end{figure}

\begin{minipage}[c][\textheight]{\textwidth}
\begin{nonfloatalgorithm}{Skeleton of SASOC algorithms}
\label{algorithm:sasoc-g-complete-algorithm}
\begin{algorithmic}
\INPUT \ \\
\begin{itemize}[$\bullet$]
\item $R$, a large positive integer;
\item $\theta_0$, initial parameter vector; $p(\cdot)$; $\Delta$; $K \ge 1$
\item UpdateRule(), the algorithm-specific update rule for the worker parameter $\theta$ and Lagrange multiplier $\lambda$.
\item Simulate($\theta, \T$) $\rightarrow X$, the simulator of the SS
\end{itemize}
\OUTPUT $\theta^* \stackrel{\triangle}{=} \Gamma(\theta(R))$.
\hrule\vspace{1ex}
\STATE $\theta \leftarrow \theta_0$, $n \leftarrow 1$
\LOOP
\STATE Observe $\hat J^{\theta} \leftarrow$ Simulate($\Gamma(\theta(n)),\T$).
\STATE $\hat{X} \leftarrow$ Simulate($\Gamma(\theta(n) + p(n)),\T$).
\STATE UpdateRule().
\STATE $n \leftarrow n + 1$
\IF{$n = R$}
\STATE Terminate and output $\Gamma(\theta(R))$.
\ENDIF
\ENDLOOP
\end{algorithmic}
\end{nonfloatalgorithm}
\end{minipage}

The overall flow of all SASOC algorithms can be diagrammatically represented as in Figure
\ref{fig:sasoc-g-algorithm-flow}. Each iteration of the algorithm involves two simulations (each for a period $\T$) - one with $\Gamma(\theta(n))$, i.e., the current estimate of the parameter projected using the generalized projection operator so that it takes values in the discrete set $\D$ and the other with the (projected) perturbed parameter, $\Gamma(\theta(n) + p(n))$, where the perturbation $p(n)$ is algorithm-specific. 
For instance, in the case of SASOC-G, $p(n) = \delta \Delta(n)$ and for SASOC-H/W, 
$p(n) = \delta_1 \Delta(n) + \delta_2 \hat\Delta(n)$, respectively. The rationale behind the choice of 
$p(n)$ will be subsequently clarified when the individual SASOC algorithms are presented.  In every stage of SASOC algorithms, the two simulations are carried out as shown in Figure \ref{fig:sasoc-g-algorithm-flow}. Using the state values of the two simulations, $X(n)$ and $\hat X(n)$, the worker parameter $\theta$ is updated in an algorithm-specific 
manner. Algorithm \ref{algorithm:sasoc-g-complete-algorithm} gives the structure of all three of our incremental update SASOC algorithms.

\subsection{SASOC-G Algorithm}
\label{sec:sasoc-g}
SASOC-G is a three time-scale stochastic approximation algorithm that does primal descent using a two-measurement SPSA while performing dual ascent on the Lagrange multipliers. 

\subsubsection{SPSA based gradient estimate}
Here, the gradient of the Lagrangian w.r.t. $\theta$ is obtained according to
\begin{align}
\label{eqn:gradestimate}
 \nabla_\theta L(\theta,\lambda) = \lim_{\delta\downarrow 0} E\left[ \left(
\frac{L(\theta +\delta\Delta,\lambda) -
L(\theta,\lambda)}{\delta}\right) \Delta^{-1}\right], 
\end{align}
where $\Delta$ is a vector (of the same dimension as $\theta$)
of perturbation random variables that are independent, zero-mean, $\pm$-valued and have the symmetric Bernoulli distribution. More general distributions on these random variables can be chosen as described in \citep{spall92multivariate,spall2000adaptive}. In \eqref{eqn:gradestimate}, $\Delta^{-1}$ represents element-wise inverse of the $\Delta$ vector. This is a one-sided estimate whose convergence is shown in \citep[Lemma 1]{chen1999kiefer}.
%

\subsubsection{Update rule of SASOC-G}
From the form of the gradient estimator, it is clear that the Lagrangian function would be needed to compute the gradient estimate. However, for our problem, obtaining a closed-form expression for the Lagrangian itself is an intractable task.
We overcome this by running two simulations with parameters $\Gamma(\theta(n))$ and $\Gamma(\theta(n) + p(n))$. Here, $p(n) = \delta \Delta(n)$, a choice motivated by the form of the gradient estimate in \eqref{eqn:gradestimate}. Using the output of the two simulations, we estimate the quantities $L(\theta + \delta \Delta, \lambda)$ and $L(\theta, \lambda)$, respectively, on the faster timescale. These estimates are in turn used to tune the worker parameter $\theta$ in the negative gradient descent direction. For $\lambda_{i, j}$ and $\lambda_f$, values of $g_{i,j}(\cdot)$ and $h(\cdot)$ respectively provide a stochastic ascent direction, proof of which will be given later in Theorem \ref{theorem:sasoc-g-lambda}. Since maximization of the Lagrangian w.r.t. $\lambda_{i, j}$ and $\lambda_f$ represents the outer-most step, these parameters are updated on the slowest time-scale. The overall update rule for this scheme, SASOC-G, is as follows: For all $n \ge 0$,
\begin{equation}
\label{eqn:spsa-update-rule}
\left .\scalebox{0.96}{$\begin{array}{l}
\theta_{i}(n+1)  =  \bar\Gamma_i \left( \theta_{i}(n) + b(n)
\left(\frac{\bar{L}(nK) - \bar{L}'(nK)}{\delta \triangle_{i}(n)}
\right) \right),\forall i = 1, 2, \dots, N,\\[2ex]
\text{where for $m=0,1,\ldots,K-1$,}\\[2ex]
\bar{L}(nK+m+1) = \bar{L}(nK+m) + \\
d(n)(c(X_{nK+m}) + \sum\limits_{i=1}^{|C|}\sum\limits_{j=1}^{|P|} \lambda_{i,j}(nK) g_{i,j}(X_{nK+m}) + \lambda_f h(X_{nK+m}) - \bar{L}(nK+m) ), \\[4ex]
\bar{L}'(nK+m+1) = \bar{L}'(nK+m) + \\
d(n)(c(\hat{X}_{nK+m}) + \sum\limits_{i=1}^{|C|}\sum\limits_{j=1}^{|P|} \lambda_{i,j}(nK) g_{i,j}(\hat{X}_{nK+m}) + \lambda_f h(\hat{X}_{nK+m}) - \bar{L}'(nK+m) ),  \\[4ex]
\lambda_{i,j}(n+1) = \left( \lambda_{i,j}(n) + a(n)g_{i,j}(X_n) \right)^{+}, \forall i = 1, 2, \dots, |C|, j = 1, 2, \dots, |P|, \\[2ex]
\lambda_f(n+1) = \left( \lambda_f(n) + a(n)h(X_n) \right)^{+}.
\end{array}$} \right \}
\end{equation}
In the above,
\begin{itemize}[$\bullet$]
\item $K \ge 1$ is a fixed parameter which controls the rate of update of $\theta$ in relation to that of $\bar{L}$ and $\bar{L}'$. This parameter allows for accumulation of updates to $\bar{L}$ and $\bar{L}'$ for $K$ iterations in between two successive $\theta$ updates;
\item $X_m$ represents the state at iteration $m$ from the simulation run with nominal parameter $\Gamma(\theta_{[\frac{n}{K}]})$ while $\hat{X}_m$ represents the state at iteration $m$ from the simulation run with perturbed parameter $\Gamma(\theta_{[\frac{n}{K}]} + \delta \Delta_{[\frac{n}{K}]})$. Here $[\frac{n}{K}]$ denotes the integer portion of $\frac{n}{K}$. For simplicity, hereafter we use $\theta$ to denote $\theta_{[\frac{n}{K}]}$ and $\theta + \delta\Delta$ to denote $\theta_{[\frac{n}{K}]} + \delta \Delta_{[\frac{n}{K}]}$;
\item $\delta > 0$ is a fixed perturbation control parameter while $\Delta$ is a vector of perturbation random variables that are independent, zero-mean and have the symmetric Bernoulli distribution;
\item The operator $\bar\Gamma(\cdot)$ ensures that the updated value for $\theta$ stays within the convex hull $\bar\D$ and is defined in Section \ref{sec:gammaproj}; and
\item $\bar{L}$ and $\bar{L}'$ represent Lagrange estimates corresponding to $\theta$ and $\theta + \delta \Delta$ respectively.
\end{itemize}

We achieve separation of time-scales between the recursions of $\theta_{i}, \bar{L}, \bar{L}'$ and $\lambda$ via the difference in the step-sizes $a(n), b(n)$ and $d(n)$ (see \textit{(A3)}).
The chosen step-sizes ensure that the recursions of Lagrange multipliers $\lambda_{i,j}$ proceed `slower' in comparison to those of the worker parameter $\theta$, while the updates of the average cost - $\bar{L}$ and $\bar{L}'$ proceed the fastest.

\subsection{SASOC-H Algorithm}
\label{sec:sasoc-h}
This is  a second-order algorithm for adaptive labour staffing which uses SPSA based techniques to estimate both the gradient and the Hessian. As discussed before, the overall algorithm structure is represented by 
Figure \ref{fig:sasoc-g-algorithm-flow} with $p(n) = \theta(n) + \delta_1 \Delta(n) + \delta_2 \hat \Delta(n)$ in this case. Thus, each iteration of the algorithm involves two simulations (each for a period $\T$) - 
one with $\Gamma(\theta(n))$ and the other with $\Gamma(\theta(n) + \delta_1 \Delta(n) + 
\delta_2 \hat \Delta(n))$ as the respective parameters in the $n$th iteration cycle. As explained in 
the next section, the two perturbation sequences $\Delta$ and $\hat\Delta$ are used to 
estimate both the gradient and the Hessian of the Lagrangian w.r.t. $\theta$.

\subsubsection{SPSA based simultaneous estimates for the gradient and the Hessian}
Suppose the Lagrangian in (\ref{eqn:Lagrangian}) is twice differentiable w.r.t. $\theta$, then we can look at possible second order schemes for computing updates to $\theta$. 
If the Lagrangian (\ref{eqn:Lagrangian}) were a quadratic, then the exact solution for the
$\theta$ update to reach the minimum point would have been $-[\nabla^2_\theta L (\theta_0)]^{-1} \nabla_\theta L(\theta_0)$ with $\theta_0$ as the starting point, i.e.,  \[\theta^* = \theta_0 -[\nabla^2_\theta L (\theta_0)]^{-1} \nabla_\theta L(\theta_0), \] would be the optimal parameter. 
For a higher-degree Lagrangian, the above solution can be used with a step-size parameter iteratively till convergence to an optimal $\theta^*$. Let $\Delta$ and $\widehat\Delta$ be two independent 
vectors of perturbation random variables that are independent, zero-mean, $\pm 1$-valued and have the symmetric Bernoulli distribution. More general distributions for $\Delta$ and $\widehat\Delta$ may however be used, see \citep{spall92multivariate,
spall2000adaptive}. We use the following estimates for the gradient and Hessian, respectively, as described in \citep[Section 3.2.1]{shalabh2011constrained}:
\[ \nabla_\theta L(\theta,\lambda) = \lim_{\delta_1,\delta_2\downarrow 0} E\left[ \left(
\frac{L(\theta +\delta_1\Delta+\delta_2\widehat{\Delta},\lambda) -
L(\theta,\lambda)}{\delta_2}\right) \widehat{\Delta}^{-1}\right], \]
\[ \nabla_\theta^2 L(\theta,\lambda) = \lim_{\delta_1,\delta_2\downarrow 0}
E\left[ \Delta^{-1}
\left( \frac{L(\theta +\delta_1\Delta+\delta_2\widehat{\Delta},\lambda) -
L(\theta,\lambda)}{\delta_1\delta_2}\right) \left(\widehat{\Delta}^{-1}\right)^T\right], \]
where $\Delta^{-1}$ and $\widehat\Delta^{-1}$ represent vectors of element-wise inverses 
of the $\Delta$ and $\widehat\Delta$ vectors respectively. Thus, the inner terms of the above two expectations can be used for estimating the Hessian and also updating $\theta$.

\subsubsection{Update rule of SASOC-H}
For $n \ge 0$, we have

\begin{align}
\label{eqn:hessian-update-rule}
\theta_{i}(n+1)  =  \bar\Gamma_i \left( \theta_{i}(n) + b(n)
\sum\limits_{j = 1}^{N} M_{i, j}(n) \left(\dfrac{\bar{L}(nK) - \bar{L}'(nK)}{\delta_2 \widehat\triangle_{j}(n)}
\right) \right),\\
H_{i, j}(n + 1) = H_{i, j}(n) + b(n) \left ( \dfrac{\bar{L}'(nK) - \bar{L}(nK)}{\delta_1 \triangle_{j}(n) \delta_2 \widehat\triangle_{i}(n)} - H_{i, j}(n) \right ),
\end{align}
for $i,j=1,\ldots,N$. Note that
\begin{itemize}[$\bullet$]
\item The update equations corresponding to $\bar L, \bar{L}'$, $\lambda_{i,j}, i=1,\ldots,|C|, j=1,\ldots,|P|$ and $\lambda_f$ are the same as in SASOC-G \eqref{eqn:spsa-update-rule}. However, note that the perturbed parameter in this case is $(\theta(n) + \delta_1\Delta(n) + \delta_2 \widehat\Delta(n))$. Thus, 
unlike SASOC-G, $\hat{X}_m$ represents the state at iteration $m$ from the simulation run with 
perturbed parameter $\Gamma(\theta(n) + \delta_1\Delta(n) + \delta_2 \widehat\Delta(n))$, 
while $X_m$ continues to have the same interpretation as with SASOC-G.
\item $\delta_1, \delta_2 > 0$ are fixed perturbation control parameters while
$\Delta$ and $\widehat\Delta$ are two independent vectors of perturbation random
variables that are independent, zero-mean, $\pm 1$-valued, and have the symmetric
Bernoulli distribution;
\item $H = [H_{i,j}]_{i = 1,j = 1}^{|A|\times |B|, |A|\times |B|}$ represents
the Hessian (second-derivative w.r.t. $\theta$) estimate of the Lagrangian.
$H(0)$ is a positive definite and symmetric matrix. We let $H(0) =  \omega I$,
with $\omega > 0$ and $I$ being the identity matrix; and
\item $M(n) = \Upsilon(H(n))^{-1}= [M(n)_{i,j}]_{i = 1,j = 1}^{|A|\times |B|, |A|\times |B|}$ represents the inverse of the Hessian estimate $H$ of the Lagrangian, where $\Upsilon(\cdot)$ is a projection operator ensuring that the Hessian estimates remain symmetric and positive definite. The $\Upsilon$ operation is assumed to satisfy assumption \textit{(A4)}.
\end{itemize}

\vspace{4pt}
\noindent {\bf Assumption (A4)}

\vspace{4pt}
\noindent The projection operator $\Upsilon(\cdot)$ projects a square matrix to a symmetric positive definite matrix. If $\{A_n\}$ and $\{B_n\}$ are sequences of matrices in ${\cal R}^{N\times N}$
such that ${\displaystyle \lim_{n\rightarrow \infty} \parallel A_n-B_n \parallel}$ $= 0$,
then ${\displaystyle \lim_{n\rightarrow \infty} \parallel \Upsilon(A_n)- \Upsilon(B_n) \parallel}$
$= 0$
as well. Further, for any sequence $\{C_n\}$ of matrices in ${\cal R}^{N\times N}$,
if ${\displaystyle \sup_n \parallel C_n\parallel}$
$<\infty$,
then $\sup_n \parallel \Upsilon(C_n)\parallel < \infty$ and
$\sup_n \parallel \{\Upsilon(C_n)\}^{-1} \parallel <\infty,$ as well.

\vspace{4pt}


\subsection{Efficient implementation of SASOC-H}
\label{sec:sasoc-w}

The SASOC-H algorithm is more robust than SASOC-G. However, it requires computation of the inverse of the Hessian $H$ at each stage which is a computationally intensive operation. We propose an enhancement using Woodbury's identity to the previous algorithm that results in significant computational gains. In particular, an application of Woodbury's identity brings down the computational complexity from $O(n^3)$\footnote{The popular 
Gauss-Jordan procedure for matrix inverse of a matrix requires $O(n^3)$ computations.} 
to $O(n^2)$ where $n = |A| \times |B|$.

\subsubsection{Woodbury's Identity based Update for Hessian Inverse}
  Woodbury's identity states that \[(A + BCD)^{-1} = A^{-1} - A^{-1} B \left ( C^{-1} + D A^{-1} B \right )^{-1} D A^{-1}\] where $A$ and $C$ are invertible square matrices and $B$ and $D$ are rectangular matrices of appropriate sizes. The Hessian update in (\ref{eqn:hessian-update-rule}) without projection can be rewritten as
\[H(n + 1) = (1 - b(n)) H(n) +  P(n) Z(nK) Q(n)\]
where \\
$P(n) = \dfrac{1}{\delta_1} \left [ \dfrac{1}{\Delta_1(n)},\dfrac{1}{\Delta_2(n)}, \ldots ,\dfrac{1}{\Delta_{|A| \times |B|}(n)} \right ]^T \kern-1ex, Q(n) = \dfrac{1}{\delta_2} \left [ \dfrac{1}{\widehat\Delta_1(n)}, \dfrac{1}{\widehat\Delta_2(n)}, \ldots ,\dfrac{1}{\widehat\Delta_{|A| \times |B|}(n)} \right ],$ and $Z(nK) = b(n) \left (\bar{L}'(nK) - \bar{L}(nK) \right )$.

Now, applying the Woodbury's identity to $H(n + 1)^{-1} = M(n + 1)$ gives us the following
update:
\[M(n + 1) = \left (\dfrac{M(n)}{1 - b(n)} \left [ I - \dfrac{b(n) \left ( \bar{L}'(nK) - \bar{L}(nK) \right ) P(n) Q(n) M(n)}{1 - b(n) + b(n) \left ( \bar{L}'(nK) - \bar{L}(nK) \right ) Q(n) M(n) P(n) } \right ] \right ),\] which is a recursive update rule for directly updating the matrix $M(n)$, which is the inverse of $H(n), n \ge 0$.

The modified update scheme of SASOC-H after incorporating the Woodbury's
identity for estimating the inverse of the Hessian, is as follows: For $n \ge
0$,
\begin{align}
\label{eqn:wudbury-update-rule}
\theta_{i}(n+1)  = & \bar\Gamma_i \left( \theta_{i}(n) + b(n)
\sum\limits_{j = 1}^{N} M_{i, j}(n) \left(\dfrac{\bar{L}(nK) - \bar{L}'(nK)}{\delta_1 \widehat\triangle_{j}(n)}
\right) \right),\\\nonumber
M(n + 1) = &\Upsilon \left (\dfrac{M(n)}{1 - b(n)} \left [ I - \dfrac{b(n) \left ( \bar{L}'(nK) - \bar{L}(nK) \right ) P(n) Q(n) M(n)}{1 - b(n) + b(n) \left ( \bar{L}'(nK) - \bar{L}(nK) \right ) Q(n) M(n) P(n) } \right ] \right ).
\end{align}
In the above, $M(0)$ is initialized to $\omega I$, $I$ being an identity matrix
and $\omega > 0$. 
The rest of the update rule corresponding to $\bar L, \bar{L}'$, $\lambda_{i,j},
i=1,\ldots,|C|, j=1,\ldots,|P|$ and $\lambda_f$ are the same as before (see
\eqref{eqn:spsa-update-rule}--\eqref{eqn:hessian-update-rule}).

Our SASOC algorithms differ from the algorithms of \citep{shalabh2011constrained} in the following ways:
\begin{inparaenum}[(i)]
\item Unlike the algorithms of \citep{shalabh2011constrained} which are for a continuous-valued parameter, our SASOC algorithms are for a constrained discrete optimization setting and involve a generalized projection operator that renders the transition probabilities of the extended Markov process for any $\theta \in \bar\D$ smooth.
\item Since the SASOC-W algorithm does not involve explicit computation of the
Hessian inverse, it is computationally more efficient than the second-order
algorithms of \citep{shalabh2011constrained}.
\end{inparaenum}    

\section{Notes on convergence}
\label{sec:convergence}
Here we provide a sketch of the convergence of SASOC-G and SASOC-H
algorithms.\footnote{The detailed proofs of the various results are provided
in a supplementary file for review}.

\subsection*{Step 1: Extension of the transition dynamics $p_{i,j}(\theta)$}
The first step in the convergence analysis is common to both the
SASOC algorithms and involves the extension of the transition dynamics
$p_\theta(i,j)$ of the constrained parameterized Markov process to the
convex hull $\bar\D$.

Recall that the discrete parameter $\theta$ of the Markov process $\{X_n(\theta)\}$ takes values in the set $\D$ defined earlier. Using the members of $\D$, one can extend the transition dynamics $p_\theta(i,j)$ of the underlying Markov process to any $\theta$ in the convex hull $\bar\D$ as follows:
\begin{equation}
     p_\theta(i,j) = \sum\limits_{k=1}^{p} \beta_k(\theta) p_{D^k}(i,j), \quad \forall \theta \in \bar\D, i,j \in S,
\label{eq:pthetabar}
\end{equation}
where the weights $\beta_k(\theta)$ satisfy $0 \le \beta_k(\theta) \le 1, k=1,\ldots,p$ and $\sum\limits_{k=1}^{p} \beta_k(\theta) = 1$. For this choice of $\beta_k(\theta)$, $p_\theta(i,j), i,j \in S, \theta \in \bar\D$ can be seen to satisfy the properties of transition probabilities. We now explain the manner in which these weights are obtained.  It is worth noting here that the weights $\beta_k(\theta)$ must be continuously differentiable in order to ensure that the extended transition probabilities are continuously differentiable as well and our SASOC algorithms converge. Moreover, in the SASOC algorithms, we do not require an explicit computation of these weights while trying to solve the constrained optimization problem \eqref{eqn:c-mp}. Consider the case when $\theta = \theta_1$ and suppose $\theta_1$ lies between $D^j$ and $D^{j+1}$ (both members of $\D$). By construction, $\beta_k(\theta_1)$ will correspond to the probability with which projection is done on $[D^j,D^{j+1}]$ and is obtained 
using the $\Gamma$-projection operator as follows:  Let us consider an interval of length $2 \zeta$ around the midpoint of $[D^j, D^{j+1}]$ and denote it as $[\tilde D_1, \tilde D_2]$, where $\tilde D_1 = \frac{\D^j+\D^{j+1}}{2} - \zeta$ and $\tilde D_2 = \frac{\D^j+\D^{j+1}}{2} + \zeta$.
Then, the weights $\beta_k(\theta_1)$ are set in the following manner:
$\beta_k(\theta_1) = 0, \forall k \notin \{j,j+1\}$ and $\beta_j(\theta_1), \beta_{j+1}(\theta_1)$ is given by:

\begin{equation}
 (\beta_j(\theta_1),\beta_{j+1}(\theta_1)) =
  \begin{cases}
   (1,0) &  \text{if } \theta_1 \in\left[D^j,\tilde D_1 \right] \\
   (f(\frac{\tilde D_2 - \theta_1}{2\zeta}), 1- f(\frac{\tilde D_2 - \theta_1}{2\zeta})) & \text{if } \theta_1 \in \left[\tilde D_1, \tilde D_2\right]  \\
   (0,1) & \text{if } \theta_1 \in \left[\tilde D_2,D^{j+1}\right]
  \end{cases}
\end{equation}
In the above, $f$ to be a continuously differentiable function defined on $[0,1]$ such that $f(0)=0$ and $f(1)=1$ and the $\Gamma$-projection is derived from such an $f$. The above can be similarly extended when the parameter $\theta$ has $N$ components. It can thus be seen that $\beta_k(\theta), k=1,\ldots,p$ are continuously differentiable functions of $\theta$. Thus, from  \eqref{eq:pthetabar} and the fact that $\beta_k(\theta)$ are continuously differentiable, it can be seen that the extended transition dynamics $p_\theta(i,j), \forall \theta \in \bar\D, i,j \in S$ are continuously differentiable.
We now claim the following:
\begin{lemma}
\label{lemma:sasocequivalence}
Under the extended dynamics $p_\theta(i,j),i,j \in S$ of the Markov process
$\{X_n(\theta)\}$ defined over all $\theta \in \bar \D$, we have
\begin{enumerate}[(i)]
    \item SASOC-G algorithm is analogous to its continuous counterpart where $\Gamma(\theta)$ and $\Gamma(\theta + \delta \triangle)$ are replaced by $\bar\Gamma(\theta)$ and $\bar\Gamma(\theta + \delta \triangle)$ respectively.
    \item SASOC-H algorithm is analogous to its continuous counterpart
where $\Gamma(\theta)$ and $\Gamma(\theta + \delta_1 \triangle + \delta_2
\hat\triangle)$ are replaced by $\bar\Gamma(\theta)$ and $\bar\Gamma(\theta
+\delta_1 \triangle + \delta_2 \hat\triangle)$ respectively.
\end{enumerate}
\end{lemma}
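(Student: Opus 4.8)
The plan is to establish that the two schemes induce \emph{identical} conditional transition kernels on the state process $\{X_n\}$ given the parameter iterates, and then to observe that---because the single-stage cost $c(\cdot)$ and constraint functions $g_{i,j}(\cdot), h(\cdot)$ depend on the state alone---every recursion in \eqref{eqn:spsa-update-rule}--\eqref{eqn:hessian-update-rule} is driven by quantities having the same conditional law in both schemes. Equivalence ``in law'' of the iterate trajectories (hence of the limiting ODEs used in the subsequent convergence analysis) then follows. The whole argument rests on the construction of the weights $\beta_k(\cdot)$ in \eqref{eq:pthetabar} as precisely the probabilities with which the generalized projection $\Gamma$ maps a point of $\bar\D$ onto the discrete grid $\D$.

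First I would fix $\theta \in \bar\D$ and unwind the randomization inside $\Gamma$. Since $\Gamma$ acts coordinate-wise and each coordinate is sent to one of its two neighbouring grid points with the smooth probabilities prescribed in Section~\ref{sec:gammaproj}, the random point $\Gamma(\theta)$ equals $D^k$ with probability $\beta_k(\theta)$, where $\beta_k(\theta)$ is the product over coordinates of the per-coordinate projection probabilities---exactly the weights appearing in \eqref{eq:pthetabar}. Setting up this coordinate-wise product structure carefully, together with $\sum_k \beta_k(\theta) = 1$ and $\beta_k(\theta) \in [0,1]$, is the bookkeeping that the proof needs.

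Next, conditioning on $X_n = i$ and $\theta(n) = \theta$, I would marginalize over the $\Gamma$-draw. In the actual scheme the discrete parameter is sampled once at the start of the length-$\T$ interval and held fixed, so the $\T$-period transition is $p_{D^k}(i,\cdot)$ conditional on $\Gamma(\theta) = D^k$; as this draw is independent of the Markov transition noise and of the past given $\theta$, the marginal one-step kernel is $\sum_k \beta_k(\theta)\, p_{D^k}(i,j) = p_\theta(i,j)$. Since $\theta(n) \in \bar\D$ throughout (the update applies $\bar\Gamma$), we have $\bar\Gamma(\theta(n)) = \theta(n)$, and this is exactly the extended kernel used by the continuous counterpart; the same computation applies verbatim to the perturbed simulation with $\Gamma(\theta + \delta\Delta)$ versus $\bar\Gamma(\theta + \delta\Delta)$, and for SASOC-H to $\theta + \delta_1\Delta + \delta_2\hat\Delta$, so that $\hat X_n$ acquires the same conditional law as well. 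With the kernels identified, an induction on $n$ (a coupling argument) closes the proof: the maps in \eqref{eqn:spsa-update-rule}--\eqref{eqn:hessian-update-rule} sending $(X_{nK+m}, \hat X_{nK+m}, \Delta(n), \hat\Delta(n), \lambda(n))$ to the next iterate are the same measurable functions in both schemes, so if the iterate laws agree at step $n$ and the state kernels agree, they agree at step $n+1$. Parts (i) and (ii) differ only in which perturbed parameter feeds $\hat X_n$ and in the extra Hessian recursion, which is itself a state-only-driven function of $\bar L, \bar L'$, so (ii) follows by the identical argument.

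The main obstacle is the marginalization identity of the previous paragraph, and specifically its boundary cases. One must verify that when the perturbed argument $\theta + \delta\Delta$ (respectively $\theta + \delta_1\Delta + \delta_2\hat\Delta$) leaves the hull $\bar\D$, the clipping built into $\Gamma$ (its $\bar\theta_i < 0$ and $\bar\theta_i \ge W_{\max}$ cases) and the clipping performed by $\bar\Gamma$ coincide, so that the projection probabilities $\beta_k$ evaluated by the actual scheme match those of the extended dynamics at $\bar\Gamma(\cdot)$. Once this consistency is checked, the convex-combination identity $p_\theta = \sum_k \beta_k(\theta)\, p_{D^k}$ holds everywhere on $\bar\D$ and the equivalence is immediate; the remaining steps are routine, since the continuous differentiability of the $\beta_k(\cdot)$---and hence of $p_\theta(i,j)$---required for the downstream ODE analysis was already established around \eqref{eq:pthetabar}.
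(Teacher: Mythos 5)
Your proposal is correct and follows essentially the same route as the paper's own proof: you identify the randomization probabilities of the generalized projection $\Gamma$ with the weights $\beta_k(\cdot)$ in \eqref{eq:pthetabar}, marginalize over the $\Gamma$-draw to show the effective one-step kernel equals the extended kernel $p_{\bar\Gamma(\theta)}(i,j)=\sum_k\beta_k(\bar\Gamma(\theta))\,p_{D^k}(i,j)$, and separately verify the boundary case where the perturbed parameter leaves $\bar\D$ so that the clipping inside $\Gamma$ agrees with $\bar\Gamma$ --- exactly the interior/boundary split the paper uses. Your explicit coupling/induction step to propagate equality in law through the recursions is only a more careful rendering of what the paper compresses into ``the two systems are analogous.''
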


\subsection*{Step 2: Analysis of fastest timescale recursion}
The fastest time-scale in SASOC-G is $\{d(n)\}$ which is used
to update the Lagrangian estimates $\bar{L}$ and $\bar{L}'$ corresponding to
simulations with $\theta$ and $\theta + \delta \Delta$ respectively. First, we
show that these estimates indeed converge to the Lagrangian values $L(\theta,
\lambda)$ and $L(\theta + \delta \Delta, \lambda)$ defined in
(\ref{eqn:gradestimate}). By the choice of
step-sizes satisfying (A3), we have a time-scale separation between the updates
to the Lagrangian estimates $\bar{L}_n$ and the parameters - $\theta$ and
$\lambda$. Hence, for the purpose of analysis of these
Lagrangian estimates, $\theta$ and $\lambda$ can be
assumed to be time invariant quantities. We now have the following result:

\begin{lemma}
\label{lemma:Lagrangian}
\begin{enumerate}[(i)]
    \item For SASOC-G algorithm, $\|\bar{L}(n) - L(\theta(n), \lambda(n)) \|
\rightarrow 0$ w.p. 1, as $n \rightarrow \infty$.
    \item For SASOC-H algorithm, 
$
\|\bar{L}(n) - L(\theta(n), \lambda(n)) \|, \|\bar{L}'(n) -
L(\theta(n) + \delta_1 \Delta(n) + \delta_2 \widehat\Delta(n), \lambda(n)) \|
\rightarrow 0 \textrm{ as } n \rightarrow \infty.
$
\end{enumerate}
\end{lemma}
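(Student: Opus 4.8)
The plan is to read both recursions as stochastic approximation schemes running on the \emph{fastest} timescale $\{d(n)\}$ and to apply the ODE method. I would prove the statement for $\bar L$ in SASOC-G in full; the argument for $\bar L'$, and for both estimates in part (ii), is verbatim the same, the only change being the (frozen) parameter at which the underlying chain is simulated, namely $\theta + \delta\triangle$ for $\bar L'$ in SASOC-G and $\theta + \delta_1\triangle + \delta_2\widehat\triangle$ in SASOC-H. The first move is to exploit the timescale separation in \textit{(A3)}: since $b(n)/d(n)\to 0$ and hence also $a(n)/d(n)\to 0$, the iterates $\theta(n)$ and $\lambda(n)$ are quasi-static relative to the $\bar L$ recursion, so by the standard two-timescale argument (\citep[Chapter 6]{borkar2008stochastic}) they may be held fixed at constant values $\theta,\lambda$ for the purpose of analysing $\bar L$, with an $o(1)$ error. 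Writing $\psi_\lambda(x) = c(x) + \sum_{i,j}\lambda_{i,j}g_{i,j}(x) + \lambda_f h(x)$, the update collapses to the Robbins--Monro form $\bar L_{k+1} = \bar L_k + d(\lfloor k/K\rfloor)(\psi_\lambda(X_k) - \bar L_k)$; since $K$ is fixed, the effective per-step gains $d(\lfloor k/K\rfloor)$ still satisfy the conditions of \textit{(A3)}.

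Next I would identify the limiting ODE. Under \textit{(A1)} the chain $\{X_k(\theta)\}$ has a unique stationary distribution $\nu_\theta$, and by the very definition of the long-run averages in \eqref{eqn:c-mp} and \eqref{eqn:Lagrangian} one has $E_{\nu_\theta}[\psi_\lambda(X)] = J(\theta) + \sum_{i,j}\lambda_{i,j}G_{i,j}(\theta) + \lambda_f H(\theta) = L(\theta,\lambda)$. The recursion therefore tracks the linear ODE $\dot\ell(t) = L(\theta,\lambda) - \ell(t)$, whose unique globally asymptotically stable equilibrium is $\ell^\ast = L(\theta,\lambda)$.

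The substantive work lies in controlling the noise $\psi_\lambda(X_k) - L(\theta,\lambda)$, because the $X_k$ are correlated along the trajectory rather than i.i.d. I would route this through the Poisson equation for the kernel $p_\theta$: ergodicity \textit{(A1)}, compactness of the state space $S$, and boundedness of the single-stage functions ($0\le c\le 1$, $g_{i,j}\in[-1,1]$, $h\in[0,1]$, with $\lambda$ finite) together guarantee a bounded solution $\hat\psi_\lambda$ of $\hat\psi_\lambda - p_\theta\hat\psi_\lambda = \psi_\lambda - L(\theta,\lambda)$. Substituting this identity decomposes the Markov noise into a square-integrable martingale-difference sequence plus telescoping remainders of order $O(d(\cdot))$; the martingale term converges almost surely because $\sum_n d^2(n) < \infty$ (by \textit{(A3)}) and $\hat\psi_\lambda$ is bounded, and the remainders are summable for the same reason. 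Boundedness of the iterates $\{\bar L_k\}$ is immediate, since the affine map $\bar L \mapsto \bar L + d(\psi_\lambda - \bar L)$ drives $\bar L$ into the bounded range of $\psi_\lambda$.

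With boundedness, the martingale decomposition, and global asymptotic stability of the ODE in hand, the ODE method (\citep{borkar2008stochastic}) yields $\bar L_k \to L(\theta,\lambda)$ a.s. for each frozen pair $(\theta,\lambda)$; re-introducing the slow drift of $\theta(n),\lambda(n)$ via the tracking lemma for two-timescale schemes upgrades this to $\|\bar L(n) - L(\theta(n),\lambda(n))\| \to 0$ w.p.~1, which is exactly the claim, and the identical argument at the perturbed parameter gives the corresponding statements for $\bar L'$ and for part (ii). I expect the main obstacle to be the continuous-space Poisson-equation step: exhibiting a bounded, sufficiently regular solution $\hat\psi_\lambda$ so that the Markov noise is genuinely negligible \emph{uniformly} over the frozen values $\theta\in\bar\D$ (using the smoothness of the extended dynamics $p_\theta(i,j)$ established in Step~1), rather than merely for a single fixed $\theta$; the remaining pieces are routine two-timescale bookkeeping.
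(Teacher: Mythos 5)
Your proposal is correct and shares the overall skeleton of the paper's proof---freeze $\theta,\lambda$ via the two-timescale argument of \textit{(A3)}, recognize the Robbins--Monro form tracking the linear ODE $\dot\ell(t) = L(\theta,\lambda)-\ell(t)$ with globally stable equilibrium $L(\theta,\lambda)$, and handle $\bar L'$ and part (ii) verbatim at the perturbed parameter---but the key step, neutralizing the correlation in the Markov noise, is done by a genuinely different decomposition. The paper centers the noise at the one-step conditional expectation: it splits $l(X_m)-\bar L(m)$ into a martingale difference $M_{m+1} = l(X_m) - E[\,l(X_m)\mid\mathcal{F}_{m-1}]$, whose weighted sums converge by the martingale convergence theorem and square summability of $\{d(n)\}$, plus a bias term $\xi_1(m)=E[\,l(X_m)\mid\mathcal{F}_{m-1}]-L(\theta,\lambda)$, which it dismisses by invoking ergodicity ``on the natural timescale'' in the sense of \citep[Chapter 6.2]{borkar2008stochastic}, and it concludes with the Hirsch lemma. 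You instead route the same difficulty through the Poisson equation $\hat\psi_\lambda - p_\theta\hat\psi_\lambda = \psi_\lambda - L(\theta,\lambda)$, splitting the noise into a martingale-difference part plus telescoping remainders of order $O(d(\cdot))$, both handled by square summability. The trade-off is instructive: the paper's route avoids exhibiting a Poisson solution on the continuous state space, but its treatment of $\xi_1(m)$ is the least explicit step of its argument (pointwise, $E[\,l(X_m)\mid\mathcal{F}_{m-1}]$ is a function of $X_{m-1}$ and does not converge to $L(\theta,\lambda)$; only the natural-timescale averaging carries that step). Your route makes that averaging rigorous and self-contained, at the price of exactly the obstacle you flag: proving existence, boundedness and regularity of $\hat\psi_\lambda$ uniformly over the frozen $(\theta,\lambda)$, which needs more than bare ergodicity on a continuous state space (e.g., uniform ergodicity or exploiting the smoothness of the extended kernel $p_\theta$ from Step 1). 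The endgames---Hirsch lemma versus Lasalle/ODE method plus the tracking lemma---are equivalent for this linear, globally stable ODE, so neither approach gains or loses anything there.
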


\subsection*{Step 3: Analysis of the $\theta$-recursion}
We show that the evolution of $\theta$ in SASOC-G descends in the
Lagrangian value and converges to a limiting set that depends on $\lambda$. For
this purpose, we first show that the resulting martingale from the $\theta$
update recursion in (\ref{eqn:spsa-update-rule}) is convergent and then use
$V^{\lambda}(\cdot) = L (\theta,\lambda)$ as an associated Lyapunov function for
the following ODE 
\begin{equation}
\label{eqn:sasoc-g:theta-ode}
\dot{\theta}(t) = \check{\Gamma}\left ( -\nabla_\theta L(\theta(t), \lambda)
\right
),
\end{equation}
where $\check{\Gamma}$ is defined as follows: For any bounded continuous
function
$\epsilon(\cdot)$,
\begin{equation}
\label{eqn:Pi-bar-operator}
\check{\Gamma}(\epsilon(\theta(t))) = \lim\limits_{\eta \downarrow 0}
\dfrac{\Gamma(\theta(t) + \eta \epsilon(\theta(t))) - \theta(t)}{\eta}.
\end{equation}
The projection operator $\check{\Gamma}(\cdot)$ ensures that the evolution of
$\theta$ via the ODE (\ref{eqn:sasoc-g:theta-ode})
stays within the bounded set $\bar\D$. Again for the analysis of the
$\theta$-update, the value of $\lambda$ which is updated on the slowest
time-scale is assumed constant.

\begin{theorem}
Under (A1), (A2') and (A3), with $\lambda(n)\equiv\lambda,\forall n$, in the limit as $\delta \rightarrow 0$,
$\theta(R) \rightarrow \theta^* \in K^{\lambda}$ almost surely as $R \rightarrow
\infty$, where $K^{\lambda} = \{
\theta \in S: \check{\Gamma}\left ( -\nabla L(\theta(t), \lambda) \right ) = 0
\}$.
\end{theorem}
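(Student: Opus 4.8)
\subsection*{Proof proposal}

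The plan is to treat the $\theta$-recursion in \eqref{eqn:spsa-update-rule} as a projected single-timescale stochastic approximation scheme and to analyze it by the ODE method, exploiting the timescale separation guaranteed by (A3). By Lemma \ref{lemma:sasocequivalence} I first pass to the continuous extension in which $\Gamma$ is replaced by $\bar\Gamma$ and $p_\theta(i,j)$ is the smooth extension constructed in Step~1, so that $L(\cdot,\lambda)$ is continuously differentiable on $\bar\D$. Since $a(n)/b(n)\to 0$, the Lagrange multiplier evolves more slowly than $\theta$ and may be held fixed at the constant value $\lambda$, exactly as in the hypothesis. Since $b(n)/d(n)\to 0$, the Lagrangian estimates equilibrate on the fastest scale, so by Lemma \ref{lemma:Lagrangian} I may replace $\bar L(nK)$ and $\bar L'(nK)$ in the $\theta$-update by $L(\theta(n),\lambda)$ and $L(\theta(n)+\delta\Delta(n),\lambda)$, incurring only an asymptotically negligible error. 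This reduces matters to showing that the resulting one-timescale projected iteration tracks the ODE \eqref{eqn:sasoc-g:theta-ode}.

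Next I would decompose the (reduced) increment into a mean field, a martingale difference, and a bias. Conditioning on $\mathcal{F}_n=\sigma(\theta(m),\Delta(m),m\le n)$ and Taylor-expanding $L(\theta+\delta\Delta,\lambda)$ about $\theta$ to first order --- valid under (A2'), whose bounded second derivative controls the remainder --- together with the independence, zero-mean and $\pm1$-valued property of the symmetric Bernoulli perturbations, yields
\[
E\!\left[\frac{\bar L(nK)-\bar L'(nK)}{\delta\,\Delta_i(n)}\,\Big|\,\mathcal{F}_n\right]
= -\,\nabla_{\theta_i} L(\theta(n),\lambda) + O(\delta),
\]
because the off-diagonal cross terms $\Delta_j(n)/\Delta_i(n)$, $j\neq i$, vanish in conditional expectation. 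The residual is a martingale difference $M_i(n+1)$ whose conditional second moment is bounded, since the iterates remain in the compact hull $\bar\D$ (guaranteed by $\bar\Gamma$), $L(\cdot,\lambda)$ and its derivatives are bounded on $\bar\D$, and $|\Delta_i(n)|=1$. Hence $\sum_n b(n)M(n+1)$ converges almost surely by the martingale convergence theorem, using $\sum_n b^2(n)<\infty$ from (A3).

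With this decomposition I would invoke the Kushner--Clark theorem for projected stochastic approximation to conclude that, for each fixed $\delta$, the iterates $\{\theta(n)\}$ almost surely converge to an internally chain transitive invariant set of the $O(\delta)$-perturbed projected ODE associated with \eqref{eqn:sasoc-g:theta-ode}. To identify this set I use $V^{\lambda}(\theta)=L(\theta,\lambda)$ as a Lyapunov function: along the unperturbed ODE,
\[
\frac{d}{dt}V^{\lambda}(\theta(t))
= \left\langle \nabla_\theta L(\theta(t),\lambda),\ \check\Gamma\!\left(-\nabla_\theta L(\theta(t),\lambda)\right)\right\rangle \le 0,
\]
the inequality being the standard descent property of the projection $\check\Gamma$ defined in \eqref{eqn:Pi-bar-operator}, with equality precisely on the set $K^{\lambda}$ where $\check\Gamma(-\nabla_\theta L)=0$. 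LaSalle's invariance principle then forces the trajectories into $K^{\lambda}$. Finally, letting $\delta\downarrow 0$ (the outer limit in the statement) removes the $O(\delta)$ bias, so the $\delta$-dependent limit set collapses to $K^{\lambda}$ and the nested conclusion $\theta(R)\to\theta^*\in K^{\lambda}$ follows.

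The hard part will be twofold. First, making rigorous the interchange of the two limits --- first $R\to\infty$ for fixed $\delta$, then $\delta\downarrow 0$ --- requires showing that the family of limit sets depends upper-semicontinuously on $\delta$ and shrinks to $K^{\lambda}$, which rests on a uniform-in-$\delta$ bound on the Taylor remainder afforded by (A2'). Second, the Lyapunov computation is delicate at the boundary of $\bar\D$, where one must verify that $\langle\nabla_\theta L,\check\Gamma(-\nabla_\theta L)\rangle$ stays nonpositive and vanishes only on $K^{\lambda}$ even when the projection is active; this is exactly where the continuous differentiability of the extended dynamics established in Step~1, and hence of $L(\cdot,\lambda)$, is indispensable.
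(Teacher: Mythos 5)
Your proposal is correct and follows essentially the same route as the paper's proof: timescale separation to freeze $\lambda$ and replace $\bar L,\bar L'$ by true Lagrangian values via Lemma \ref{lemma:Lagrangian}, viewing the $\theta$-recursion as a noisy Euler discretization of the projected ODE \eqref{eqn:sasoc-g:theta-ode}, and concluding with $V^{\lambda}(\cdot)=L(\cdot,\lambda)$ as Lyapunov function together with LaSalle's invariance theorem, with the $O(\delta)$ SPSA bias vanishing as $\delta\downarrow 0$. Your write-up is in fact more explicit than the paper's (which outsources the bias estimate to the cited one-sided SPSA result and states the Euler-discretization and limit-interchange steps without elaboration), but the underlying argument is the same.
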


Similarly, we show that the parameter updates
$\theta(n)$ of SASOC-H converge to a limit point of the ODE
\begin{equation}
\label{eqn:sasoc-h:theta-ode}
\dot{\theta}(t) = \check{\Gamma}\left ( - \Upsilon(\nabla^2_\theta L(\theta(t),
\lambda))^{-1} \nabla_\theta L(\theta(t), \lambda) \right ).
\end{equation}

\begin{theorem}
Under (A1), (A2), (A3) and (A4), with $\lambda(n)\equiv\lambda,\forall n$,
in the limit as $\delta_1, \delta_2 \rightarrow
0$,
$\theta(R) \rightarrow \theta^* \in \bar{K}^{\lambda}$ almost surely as $R
\rightarrow \infty$, where \[\bar{K}^\lambda = \left \{ \theta \in S:
\dfrac{d L (\theta(t),
\lambda)}{dt} = - \nabla_{\theta} L (\theta(t), \lambda)^T
\Upsilon(\nabla^2_\theta L(\theta(t), \lambda))^{-1} \nabla_\theta L(\theta(t),
\lambda) = 0 \right \}.\]
\end{theorem}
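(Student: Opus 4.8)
The plan is to establish this as the middle-time-scale (the $\{b(n)\}$ recursion) convergence result for the coupled $(\theta,H)$ iteration of SASOC-H, with $\lambda$ frozen by appeal to the time-scale separation in (A3). First I would invoke Lemma \ref{lemma:sasocequivalence}(ii) to replace the discrete projections $\Gamma(\cdot)$ by their continuous counterparts $\bar\Gamma(\cdot)$, so that the entire analysis may be carried out over the convex hull $\bar\D$ using the smoothly extended, continuously differentiable transition dynamics $p_\theta(i,j)$. Since $\{d(n)\}$ is faster and $\{a(n)\}$ slower than $\{b(n)\}$, Lemma \ref{lemma:Lagrangian}(ii) permits substituting the running estimates by the true Lagrangian values, $\bar{L}(nK)\to L(\theta(n),\lambda)$ and $\bar{L}'(nK)\to L(\theta(n)+\delta_1\Delta(n)+\delta_2\widehat\Delta(n),\lambda)$, while $\lambda(n)\equiv\lambda$ is held constant throughout.

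The core step is the analysis of the Hessian recursion. Under (A2) I would Taylor-expand $L(\theta+\delta_1\Delta+\delta_2\widehat\Delta,\lambda)-L(\theta,\lambda)$ to second order, with remainder controlled by the bounded third derivative. Dividing by $\delta_1\delta_2\Delta_j\widehat\Delta_i$ and taking conditional expectation over the independent, zero-mean, $\pm1$-valued Bernoulli perturbations — using $\Delta_k^2=\widehat\Delta_k^2=1$, independence of $\Delta$ and $\widehat\Delta$, and $E[\Delta_k\Delta_{k'}]=0$ for $k\neq k'$ — I expect the surviving cross terms to single out $\nabla^2_\theta L(\theta,\lambda)_{i,j}$, leaving bias of order $O(\delta_1+\delta_2)$. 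The fluctuation is a martingale difference with bounded second moment, so by $\sum_n b^2(n)<\infty$ in (A3) and a standard Kushner--Clark argument the $H$-iterates track the ODE $\dot H(t)=\nabla^2_\theta L(\theta(t),\lambda)-H(t)$ (in the limit $\delta_1,\delta_2\downarrow 0$), whose unique globally asymptotically stable equilibrium for frozen $\theta$ is $H=\nabla^2_\theta L(\theta,\lambda)$.

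Next I would treat the $\theta$-update as a projected, preconditioned SPSA descent. The same Taylor and conditional-expectation computation applied to the one-sided estimate $(\bar{L}(nK)-\bar{L}'(nK))/(\delta_2\widehat\Delta_j(n))$ gives conditional mean $-\nabla_\theta L(\theta,\lambda)_j$ up to $O(\delta_1+\delta_2)$ bias, the $\delta_1/\delta_2$ term vanishing in expectation by independence of $\Delta$ and $\widehat\Delta$. Assumption (A4) guarantees that $M(n)=\Upsilon(H(n))^{-1}$ stays bounded, symmetric and positive definite and varies continuously with $H(n)$; hence, as $H(t)\to\nabla^2_\theta L(\theta(t),\lambda)$, the $\theta$-iterates track the ODE (\ref{eqn:sasoc-h:theta-ode}). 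Taking $V^\lambda(\theta)=L(\theta,\lambda)$ as a Lyapunov function, on the interior of $\bar\D$ one has $\frac{d}{dt}L(\theta(t),\lambda)=-\nabla_\theta L^T\,\Upsilon(\nabla^2_\theta L)^{-1}\,\nabla_\theta L\le 0$, with equality exactly on $\bar{K}^\lambda$ by positive definiteness of $\Upsilon(\nabla^2_\theta L)^{-1}$; at the boundary $\check\Gamma$ projects the drift so that $L$ remains non-increasing. Thus $L(\theta(t),\lambda)$ converges, the trajectory converges to $\bar{K}^\lambda$, and a Hirsch/Kushner--Clark transfer from the ODE to the discrete iterates yields $\theta(R)\to\theta^*\in\bar{K}^\lambda$ a.s. as $R\to\infty$ and $\delta_1,\delta_2\downarrow 0$.

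The main obstacle I anticipate is that $H$ and $\theta$ evolve on the \emph{same} time-scale $\{b(n)\}$, so the substitution of $\nabla^2_\theta L(\theta(t),\lambda)$ for $H(t)$ in (\ref{eqn:sasoc-h:theta-ode}) is not immediate from a clean singular-perturbation separation; it must instead be justified by analyzing the coupled $(\theta,H)$ ODE and exploiting that the $H$-component is linearly attracted to $\nabla^2_\theta L(\theta,\lambda)$ uniformly in $\theta$. Closely tied to this are the need to control the SPSA Hessian bias (requiring both the $\delta_1,\delta_2\downarrow 0$ limit and the bounded third derivative in (A2)) and the verification, via (A4), that the projected inverse $M(n)$ remains a legitimate positive-definite preconditioner, so that $L$ genuinely serves as a Lyapunov function for the descent.
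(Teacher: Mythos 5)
Your proposal is correct and follows essentially the same route as the paper's proof: freeze $\lambda$ by time-scale separation, pass to the continuous parameter over $\bar\D$ via Lemma \ref{lemma:sasocequivalence}(ii), replace $\bar L,\bar L'$ by the true Lagrangian values (Lemma \ref{lemma:Lagrangian}), establish asymptotic unbiasedness of the SPSA gradient and Hessian estimates so that $H(n)$ tracks $\nabla^2_\theta L(\theta(n),\lambda)$ and $M(n)\to\Upsilon(\nabla^2_\theta L)^{-1}$ under (A4), and finally view the $\theta$-recursion as a noisy Euler discretization of the ODE (\ref{eqn:sasoc-h:theta-ode}) with $V^\lambda(\cdot)=L(\cdot,\lambda)$ as Lyapunov function, concluding by LaSalle's invariance theorem. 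The only difference is one of detail rather than of route: where you derive the Taylor-expansion bias bounds, the martingale fluctuation control, and the same-time-scale coupled $(\theta,H)$ tracking from first principles, the paper outsources precisely these facts to earlier results (Propositions 4.9 and 4.10 of its constrained-SPSA reference and Lemma A.9 of its adaptive-Newton reference), so your explicit treatment of the coupled recursion fills in a step the paper handles by citation.
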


Note that $K^\lambda$ and $\bar{K}^\lambda$ can differ in spurious fixed points
on the boundary of $\bar{\D}$.

\subsection*{Step 4: : Analysis of the $\lambda$-recursion}
For $\{\lambda(n)\}$ updates on the slowest time-scale $\{a(n)\}$, we can assume
that $\theta$ has converged to $\theta^* \in K^\lambda$. We show that
$\lambda_{i, j}$s and $\lambda_f$ converge respectively
to the limit points of the ODEs \[\begin{array}{l}
\dot{\lambda}_{i,j}(t) = \check\Pi \left ( G_{i, j}(\theta^*) \right ), \forall
i
=
1, 2, \dots, |C|, j = 1, 2, \dots, |P|,\\[1ex]
\dot{\lambda}_f(t) = \check\Pi \left ( H(\theta^*) \right ),
\end{array}\]
where $\theta^*$ is the converged parameter value of SASOC-G/H corresponding to
Lagrange parameter $\lambda(t) \stackrel{\triangle}{=} (\lambda_{i,j}(t),
\lambda_f(t), i=1,\ldots,|C|, j=1,\ldots,|P|)^T$, and for any bounded continuous
functions $\bar{\epsilon}(\cdot)$, \[\check\Pi(\bar{\epsilon}(\lambda(t))) =
\lim\limits_{\eta \downarrow 0} \dfrac{(\lambda(t) + \eta
\bar{\epsilon}(\lambda(t)))^+ - \lambda(t)}{\eta}.\] 
Here again, the projection
operator $\check\Pi$ ensures that the evolution of each component
of $\lambda$ stays non-negative.
From the definition of the Lagrangian given in (\ref{eqn:Lagrangian}),
the gradient of the Lagrangian w.r.t. $\lambda_{i, j}$ can be seen to  be
$G_{i,j}(\theta^*)$ and that w.r.t. $\lambda_f$ is $H(\theta^*)$. Thus, the
above ODEs suggest that in SASOC-G/H $\lambda_{i, j}$s' and $\lambda_f$ are
ascending in the Lagrangian value and converge to a local maximum point.
We now have the following result:
 
\begin{theorem}
\label{theorem:sasoc-g-lambda}
Let $F^{\theta^*} =
\left \{ \lambda \ge 0 : \check\Pi \left ( G_{i, j}(\theta^*) \right ) = 0,
\forall
i = 1, 2, \dots, |C|, j = 1, 2, \dots, |P|; \check\Pi \left ( H(\theta^*) \right
)
= 0 \right \}.$ Then, $\lambda(R) \rightarrow \lambda^*$ for some $\lambda^*
\in F^{\theta^*}$ w.p. 1 as $R \rightarrow \infty$.
\end{theorem}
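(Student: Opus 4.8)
The plan is to treat the $\lambda$-recursion in \eqref{eqn:spsa-update-rule} as a projected stochastic approximation on the slowest timescale $\{a(n)\}$ and to analyze it via the ODE method of \citep[Chapter 5--6]{borkar2008stochastic}. First I would invoke the separation of timescales guaranteed by (A3): since $a(n)/b(n)\to 0$, from the viewpoint of the $\lambda$-iterates both the Lagrangian estimates $\bar L, \bar L'$ (Lemma \ref{lemma:Lagrangian}) and the worker parameter $\theta$ have essentially equilibrated, so that $\theta(n)$ may be replaced by its limit $\theta^*=\theta^*(\lambda(n))\in K^{\lambda(n)}$ obtained from the preceding $\theta$-recursion theorem. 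This lets me rewrite the update for $\lambda_{i,j}$ (and analogously $\lambda_f$) as a noisy Euler discretization of the projected ODE $\dot\lambda_{i,j}(t)=\check\Pi(G_{i,j}(\theta^*))$, $\dot\lambda_f(t)=\check\Pi(H(\theta^*))$.

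Second, I would verify the standard stochastic-approximation hypotheses. Writing $g_{i,j}(X_n)=G_{i,j}(\theta^*)+(g_{i,j}(X_n)-E[g_{i,j}(X_n)\mid\mathcal F_n])+(E[g_{i,j}(X_n)\mid\mathcal F_n]-G_{i,j}(\theta^*))$, the middle term is a martingale difference and the last term is a vanishing bias. The martingale differences are square-integrable and uniformly bounded because $g_{i,j}$ and $h$ are continuous on the compact state space $S$ (by (A2)/(A2')), hence bounded; together with $\sum_n a(n)^2<\infty$ this makes the associated martingale convergent. The bias vanishes by the ergodicity assumption (A1), which ensures that the empirical averages of $g_{i,j}(X_m)$ and $h(X_m)$ formed on the faster timescale converge to $G_{i,j}(\theta^*)$ and $H(\theta^*)$. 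With these noise conditions and boundedness of the iterates in hand, the Kushner--Clark projected-SA result applies, so $\lambda(n)$ tracks the above ODE and converges almost surely to its set of asymptotically stable equilibria.

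Third, I would identify that limiting set with $F^{\theta^*}$. By construction $\check\Pi$ projects the drift so as to keep each component of $\lambda$ nonnegative, and the equilibria of the projected ODE are exactly the points where the projected drift vanishes, i.e., $\check\Pi(G_{i,j}(\theta^*))=0$ for all $(i,j)$ and $\check\Pi(H(\theta^*))=0$ --- precisely the defining conditions of $F^{\theta^*}$. Since $L(\theta,\lambda)$ in \eqref{eqn:Lagrangian} is linear (hence concave) in $\lambda$ with $\nabla_{\lambda_{i,j}}L=G_{i,j}(\theta^*)$ and $\nabla_{\lambda_f}L=H(\theta^*)$, the ODE is projected gradient ascent in $\lambda$; using the value function $\lambda\mapsto L(\theta^*(\lambda),\lambda)$, whose $\lambda$-gradient again equals $(G_{i,j}(\theta^*),H(\theta^*))$ by the envelope (stationarity) property of $\theta^*$ for the inner minimization, as a Lyapunov function shows that the ascent trajectories converge to $F^{\theta^*}$, yielding $\lambda(R)\to\lambda^*\in F^{\theta^*}$ almost surely as $R\to\infty$.

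The step I expect to be the main obstacle is establishing boundedness (stability) of the $\lambda$-iterates, since $\check\Pi$ enforces only nonnegativity and not an upper bound, so a priori the multipliers could diverge. I would address this by invoking a feasibility (Slater-type) condition guaranteeing that the relevant multipliers lie in a bounded set, or by an explicit stability argument (for instance a Borkar--Meyn-type scaling analysis) showing the scaled ODE has the origin as its unique globally asymptotically stable point. A secondary but delicate point is rigorously justifying that $\theta$ tracks the slowly varying $\theta^*(\lambda)$ uniformly, which the two-timescale framework supplies but which must be stated with care so that the drift identification in the second step is legitimate.
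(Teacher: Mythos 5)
Your proposal follows essentially the same route as the paper's proof: the paper likewise rewrites the $\lambda$-update as $\lambda_{i,j}(n+1) = \lambda_{i,j}(n) + a(n)\left[ G_{i,j}(\theta^*) + N_{n+1} + M_{n+1}\right]$ with a vanishing bias term $N_{n+1}$ handled via ergodicity on the natural timescale, a square-integrable martingale-difference term $M_{n+1}$ whose associated martingale converges by step-size square-summability, and then invokes a standard projected stochastic approximation result (Borkar's Extension 3 of Section 2.2, in place of your Kushner--Clark citation) to identify the limit set with $F^{\theta^*}$. The iterate-boundedness issue you flag is not treated explicitly in the paper's proof either; it is left implicit in the appeal to the cited projected-SA machinery.
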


\subsection*{Step 5: Convergence to a locally saddle point}

Finally, we argue that the algorithm indeed converges to a (local) saddle
point of the Lagrangian.
Suppose $H_1$ denote a local neighborhood in which $\theta^*$ is
a minimum. Then, through an application of the envelope theorem of mathematical economics \citep[pp.
964-966]{mas1995microeconomic}, applied in the `Caratheodory sense' \citep[Lemma 4.3, pp.211]{borkar2005actor},
it can be seen that
\[\lambda^* \in \arg\min_{\lambda\in H_2} \min_{\theta\in H_1} L(\theta,\lambda),\]
where $H_2$ is some local neighborhood that contains $\lambda^*$.
The SASOC algorithms thus converge to a locally saddle point. As mentioned
at the beginning of this section, the detailed proofs of the above results
are available in an attached supplementary file.

\section{Simulation Experiments}
\label{sec:simulation}

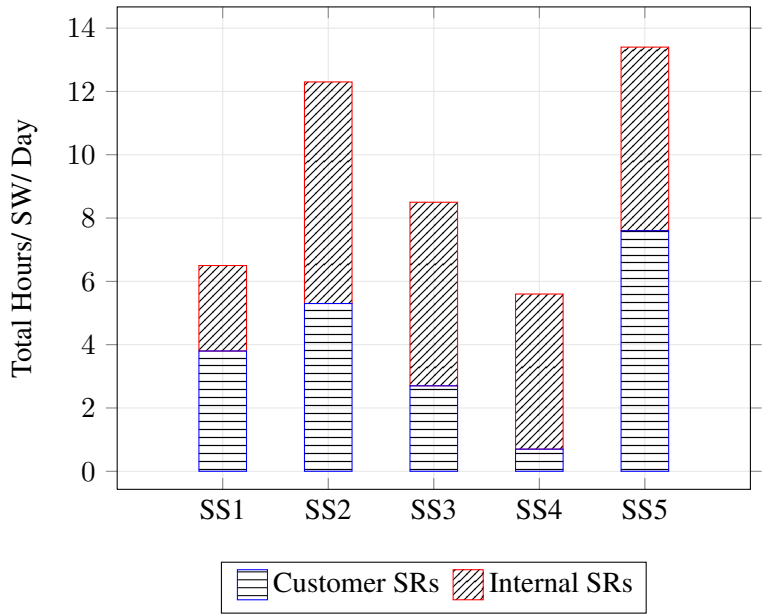
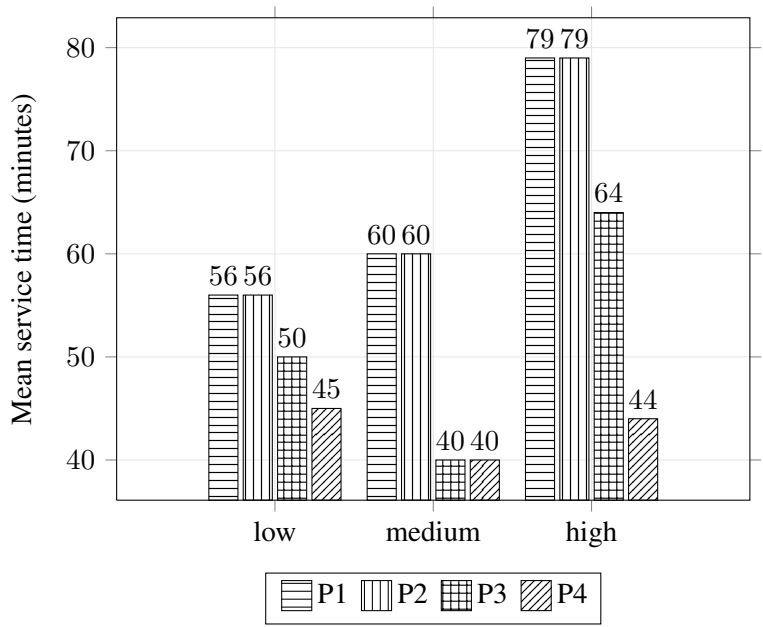
\begin{figure}
\centering
\begin{tabular}{l}
    \subfigure[Total work volume statistics for each SS]
    {
	
	\tabl{c}{\scalebox{1.0}{\begin{tikzpicture}
	\begin{axis}[
	ybar stacked,
	legend style={at={(0.5,-0.15)},anchor=north,legend columns=-1},
	legend image code/.code={\path[fill=white,white] (-2mm,-2mm) rectangle (-3mm,2mm); \path[fill=white,white] (-2mm,-2mm) rectangle (2mm,-3mm); \draw (-2mm,-2mm) rectangle (2mm,2mm);},
	ylabel={Total Hours/ SW/ Day},
	symbolic x coords={1, SS1, SS2, SS3, SS4, SS5, 2},
	xmin={1},
	xmax={2},
	xtick=data,
	ytick align=outside,
	bar width=18pt,
	grid,
	grid style={gray!20},
	width=10cm,
	height=8cm,
	]
	\addplot+[ybar,pattern=horizontal lines] coordinates {(SS1,3.8) (SS2,5.3) (SS3,2.7) (SS4,0.7) (SS5,7.6)}; 
	\addplot+[ybar,pattern=north east lines] coordinates {(SS1,2.7) (SS2,7) (SS3,5.8) (SS4,4.9) (SS5,5.8)}; 
	\legend{Customer SRs, Internal SRs}
	\end{axis}
	\end{tikzpicture}}\\[1ex]}
            \label{fig_average_workload}
    } \\
    \subfigure[Estimated mean service times for a SS]
    {
	\tabl{c}{\scalebox{1.0}{\begin{tikzpicture}
	\begin{axis}[
	ybar={2pt},
	legend style={at={(0.5,-0.15)},anchor=north,legend columns=-1},
	legend image code/.code={\path[fill=white,white] (-2mm,-2mm) rectangle (-3mm,2mm); \path[fill=white,white] (-2mm,-2mm) rectangle (2mm,-3mm); \draw (-2mm,-2mm) rectangle (2mm,2mm);},
	ylabel={Mean service time (minutes)},
	symbolic x coords={1, low, medium, high, 2},
	xmin={1},
	xmax={2},
	xtick=data,
	ytick align=outside,
	bar width=11pt,
	nodes near coords,
	grid,
	grid style={gray!20},
	width=10cm,
	height=8cm,
	]
	\addplot[pattern=horizontal lines] coordinates {(low,56) (medium,60) (high,79)}; 
	\addplot[pattern=vertical lines]   coordinates {(low,56) (medium,60) (high,79)}; 
	\addplot[pattern=grid]             coordinates {(low,50) (medium,40) (high,64)}; 
	\addplot[pattern=north east lines] coordinates {(low,45) (medium,40) (high,44)}; 
	\legend{P1, P2, P3, P4}
	\end{axis}
	\end{tikzpicture}}\\[1ex]}
	\label{fig_servicetime}
    } 
\end{tabular}
\caption{Characteristics of the service systems used for simulation}
\end{figure}
 \begin{figure}
    \centering
    \begin{tabular}{c}
    \subfigure[SS1 and SS2 work arrival pattern]
    {
                \begin{tabular}{l}
            \tabl{c}{\includegraphics[width=4.5in]{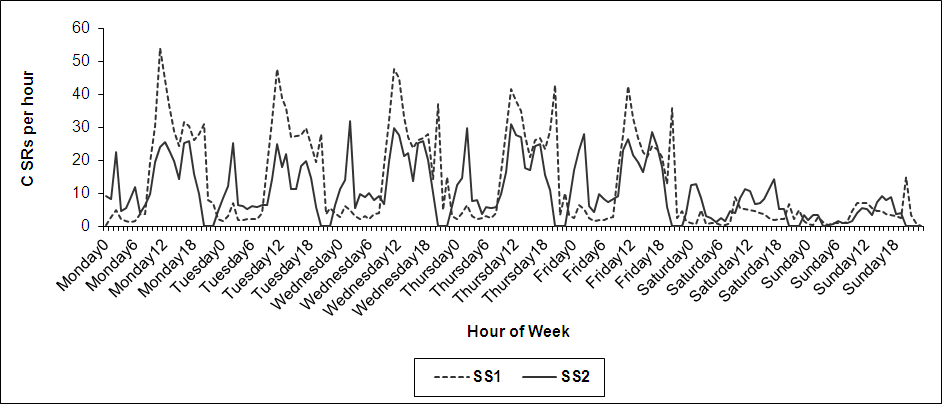}\\[2ex]}
            \end{tabular}
            \label{fig_arrivals_ss123}
    }\\ 
    \subfigure[SS3, SS4 and SS5 work arrival pattern]
    {
                \begin{tabular}{l}
            \tabl{c}{\includegraphics[width=4.5in]{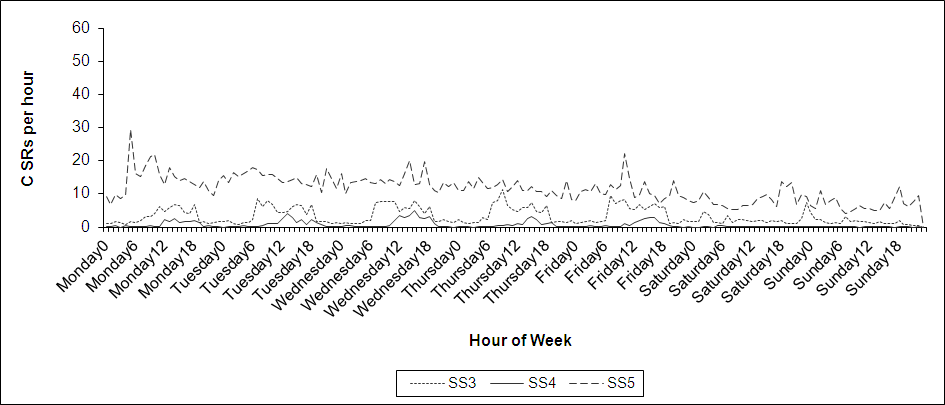}\\[2ex]}
            \end{tabular}
            \label{fig_arrivals_ss45}
    }
    \end{tabular}
    \caption{Work arrival patterns over a week for each SS}
    \label{fig_arrivals}
\end{figure}

We use the simulation framework developed in
\citep{banerjee2011simulation} for implementing all our algorithms. 
A number of dispatching policies have been developed in
\citep{banerjee2011simulation}. In particular, we study the PRIO-PULL
and EDF policies for performance comparisons of the various
algorithms. In addition to the three SASOC algorithms, we implemented an
algorithm that uses the state-of-the-art optimization tool-kit OptQuest, for the sake of comparison. 
OptQuest employs an
array of techniques including scatter and tabu search, genetic
algorithms, and other meta-heuristics for the purpose of optimization
and is quite well-known as a hybrid search tool for solving simulation optimization
problems (\citep{laguna1998optimization}). In particular, we have
used the scatter search variant of OptQuest for our experiments.

We choose five real-life SS from two different countries providing
server support to IBM's customers. The five SS cover a
variety of characteristics such as high vs. low workload, small
vs. large number of customers to be supported, small vs. big staffing
levels, and stringent vs. lenient SLA constraints. Collectively, these five SS staff more than 200 SWs with $40\%$, $30\%$, and $30\%$ of them having low, medium, and high skill level, respectively. Also, these SS support more than $30$ customers each, who make more than $6500$ SRs every week with each customer having a distinct pattern of arrival depending on its business hours and seasonality of business domain.
 Figure \ref{fig_average_workload} shows the total work hours per SW per day
for each of the SS. The bottom part of the bars denotes customer SR work,
i.e., the SRs raised by the customers whereas the top part of the bars
denotes internal SR work, i.e, the SRs raised internally for overhead work
such as meetings, report generation, and HR activities.  This
segregation is important because the SLAs apply only to customer SRs. Internal SRs
do not have deadlines but they may contribute to queue growth.  Note
that while average work volumes are significant, they may not directly
correlate to SLA attainment. Figure \ref{fig_servicetime} shows the effort data,
i.e., the mean time taken to resolve an SR (a lognormal distributed random variable in our setting)
across priority and complexity classes.
 As shown in Figures \ref{fig_arrivals}, the
arrival rates for SS4 and SS5 show much higher peaks than SS1, SS2,
and SS3, respectively, although their average work volumes are comparable. The
variations are significant because during the peak periods, many SRs
may miss their SLA deadlines and influence the optimal staffing
result.

Some of the specific details of the service system
setting (see Section \ref{sec:formulation}) are as follows:
$\I$, the set of time intervals, contains one element for each hour of the
week. Hence, $|\I|$ = $168$. 
The set of priority levels, $P=\{P_1, P_2, P_3, P_4\}$, where, $P_1 > P_2 >
P_3 > P_4$. The set of skill levels $\B$ is $\{\textrm{High, Medium, Low}\}$,
where, High $>$ Medium $>$ Low. The simulation framework also involves the
swing and preemption policies and the reader is referred to
\citep{banerjee2011simulation} for a detailed description of this.

We implemented our
SASOC algorithms on the simulation framework from
\citep{banerjee2011simulation} for both the perturbed and the
unperturbed simulations (see $X$ and $\hat{X}$ computations in
Algorithm~\ref{algorithm:sasoc-g-complete-algorithm}).
For our SASOC algorithms, the
simulations were conducted for $1000$ iterations, with each iteration
having $20$ simulation replications - ten each with unperturbed
parameter $\theta$ and perturbed parameter $\hat\theta$,
respectively.  Each replication simulated the operations of the
respective SS for a $30$ day period.  Thus, we set $R=1000$ and $K=10$
for SASOC algorithms. On the other hand, for the OptQuest algorithm,
simulations were conducted for $5000$ iterations, with each iteration
of $100$ replications of the SS.

For all the SASOC algorithms, we set the weights
in the single-stage cost function $c(X_m)$, see
(\ref{eqn:singlestagecost}), as $r = s = 0.5$. We thus give equal
weightage to both the worker utilization and the SLA over-achievement
components. The indicator variable $q$ used in the constraint
(\ref{eqn:feasibility-constraint}) was set to $0$ (i.e., infeasible)
if the queues were found to grow by $1000\%$ over a two-week period
during simulation. We performed a sensitivity study for the paramter $\delta$
and found that the choice of $0.5$ gave the best results. For the second order
methods, the perturbation control parameters $\delta_1$ and $\delta_2$ were
both set to $0.5$. The function $f$ in the generalized projection operator was
set as $f(x) = x$, with the parameter $\zeta = 0.1$. Each of the experiments were run on a machine with dual core
Intel
$2.1$ GHz processor and $3$ GB RAM.

The
$\Upsilon$ operator implemented for SASOC-W can be described as
follows. Let $\hat{H}$ be the Hessian update which needs to be
projected. The following sequence of operations represent this
projection. \begin{inparaenum}[(i)] \item $\hat{H} \leftarrow
  \frac{(\hat{H} + \hat{H}^T)}{2}$; \item Perform eigen-decomposition
  on $\hat{H}$ to get all eigen-values and corresponding eigen-vectors; \item Project each eigen-value to $\left [ \epsilon,
    \frac{1}{\epsilon} \right ]$ where $1 > \epsilon > 0$. $\epsilon$
  is chosen to be a small number so as to allow for larger range of
  values, but not too small to avoid singularity. The upper limit in
  the projection range is to avoid singularity of the inverse of the
  Hessian estimate; and \item Reconstruct $\hat{H}$ using the
  projected eigen-values but with same eigen-vectors. \end{inparaenum}
The $\Upsilon$ operator in the case of SASOC-H
with diagonal Hessian is one that simply projects each diagonal entry to $\left [
  \epsilon, \frac{1}{\epsilon} \right ]$. It is easy to see that the
$\Upsilon$ operator satisfies assumption \textit{(A4)}. For a closely related
modification of the Hessian, the reader is referred to \citep{gill1981practical}.
In our experiments, we
set $\epsilon = 0.01$.

On each SS, we compare our SASOC algorithms with the OptQuest
algorithm using $W_{sum}$ and mean utilization as the performance metrics. Here $W_{sum}
\stackrel{\triangle}{=} \sum_{i=1}^{|A|}\sum_{j=1}^{|B|} \theta_{i,j}$ is
the sum of workers across shifts and skill levels. The mean utilization here refers to a weighted average of the utilization percentage achieved for each skill level, with the weights being the fraction of the workload corresponding to each skill level.

As evident in Figures \ref{fig_arrivals_ss123} and \ref{fig_arrivals_ss45}, the SS pools SS1, SS2 and SS3 are characterized by a flat SR arrival pattern, whereas SS4 and SS5 are characterized by a bursty SR arrival pattern. We present and analyze the results on these pools separately, starting with the flat arrival pools in the next section.

\subsection{Flat-Arrival SS pools}

\begin{figure}
\begin{minipage}[c][\textheight]{\textwidth}
    \centering
    \begin{tabular}{c}
    \subfigure[$W^*_{sum}$ for PRIO-PULL]
    {
\tabl{c}{\scalebox{1.0}{\begin{tikzpicture}
\begin{axis}[
ybar={2pt},
legend style={at={(0.5,-0.15)},anchor=north,legend columns=-1},
legend image code/.code={\path[fill=white,white] (-2mm,-2mm) rectangle (-3mm,2mm); \path[fill=white,white] (-2mm,-2mm) rectangle (2mm,-3mm); \draw (-2mm,-2mm) rectangle (2mm,2mm);},
ylabel={$W_{sum}^*$},
symbolic x coords={1, SS1, SS2, SS3, 2},
xmin={1},
xmax={2},
xtick=data,
ytick align=outside,
bar width=16pt,
nodes near coords,
grid,
grid style={gray!20},
width=11cm,
height=9cm,
]
\addplot[pattern=horizontal lines] coordinates {(SS1,124) (SS2,0) (SS3,74)}; 
\addplot[pattern=vertical lines]   coordinates {(SS1,68) (SS2,67) (SS3,76)}; 
\addplot[pattern=north east lines] coordinates {(SS1,49) (SS2,63) (SS3,76)};
%
\legend{OptQuest, SASOC-SPSA, SASOC-H}
\end{axis}
\end{tikzpicture}}\\[1ex]}
\label{fig_wsum_priopull_ss123}
    } \\
    \subfigure[$W^*_{sum}$ for EDF]
    {
\tabl{c}{\scalebox{1.0}{\begin{tikzpicture}
\begin{axis}[
ybar={2pt},
legend style={at={(0.5,-0.15)},anchor=north,legend columns=-1},
legend image code/.code={\path[fill=white,white] (-2mm,-2mm) rectangle (-3mm,2mm); \path[fill=white,white] (-2mm,-2mm) rectangle (2mm,-3mm); \draw (-2mm,-2mm) rectangle (2mm,2mm);},
ylabel={$W_{sum}^*$},
symbolic x coords={1, SS1, SS2, SS3, 2},
xmin={1},
xmax={2},
xtick=data,
ytick align=outside,
bar width=16pt,
nodes near coords,
grid,
grid style={gray!20},
width=11cm,
height=9cm,
]
\addplot[pattern=horizontal lines] coordinates {(SS1,142) (SS2,0) (SS3,78)}; 
\addplot[pattern=vertical lines]   coordinates {(SS1,68) (SS2,80) (SS3,77) }; 
\addplot[pattern=north east lines] coordinates {(SS1,63) (SS2,79) (SS3,76) }; 

\legend{OptQuest, SASOC-SPSA, SASOC-H}
\end{axis}
\end{tikzpicture}}\\[1ex]}

                                \label{fig_wsum_edf_ss123}
    }    \end{tabular}
    \caption[Performance of OptQuest and SASOC algorithms]{Performance of OptQuest and SASOC algorithms on SS1, SS2 and SS3\footnote{Note: OptQuest is infeasible over SS2}}
    \label{fig_priopull_ss123}
\end{minipage}
\end{figure}
\begin{figure}
    \centering
    \begin{tabular}{c}
    \subfigure[$W^*_{sum}$ for PRIO-PULL]
    {
\tabl{c}{\scalebox{1.0}{\begin{tikzpicture}
\begin{axis}[
ybar={2pt},
legend style={at={(0.5,-0.15)},anchor=north,legend columns=-1},
legend image code/.code={\path[fill=white,white] (-2mm,-2mm) rectangle (-3mm,2mm); \path[fill=white,white] (-2mm,-2mm) rectangle (2mm,-3mm); \draw (-2mm,-2mm) rectangle (2mm,2mm);},
ylabel={$W_{sum}^*$},
symbolic x coords={1, SS4, SS5, 2},
xmin={1},
xmax={2},
xtick=data,
ytick align=outside,
bar width=16pt,
nodes near coords,
grid,
grid style={gray!20},
width=11cm,
height=9cm,
]
\addplot[pattern=horizontal lines] coordinates {(SS4,35) (SS5,39) }; 
\addplot[pattern=vertical lines]   coordinates {(SS4,44) (SS5,46) }; 
\addplot[pattern=north east lines] coordinates {(SS4,53) (SS5,55) }; 
\legend{OptQuest, SASOC-SPSA, SASOC-H}
\end{axis}
\end{tikzpicture}}\\[1ex]}

\label{fig_wsum_priopull_ss45}
    } \\
    \subfigure[$W^*_{sum}$ for EDF]
    {
\tabl{c}{\scalebox{1.0}{\begin{tikzpicture}
\begin{axis}[
ybar={2pt},
legend style={at={(0.5,-0.15)},anchor=north,legend columns=-1},
legend image code/.code={\path[fill=white,white] (-2mm,-2mm) rectangle (-3mm,2mm); \path[fill=white,white] (-2mm,-2mm) rectangle (2mm,-3mm); \draw (-2mm,-2mm) rectangle (2mm,2mm);},
ylabel={$W_{sum}^*$},
symbolic x coords={1,  SS4, SS5, 2},
xmin={1},
xmax={2},
xtick=data,
ytick align=outside,
bar width=16pt,
nodes near coords,
grid,
grid style={gray!20},
width=11cm,
height=9cm,
]
\addplot[pattern=horizontal lines] coordinates {(SS4,121) (SS5,76)}; 
\addplot[pattern=vertical lines]   coordinates {(SS4,45) (SS5,45) }; 
\addplot[pattern=north east lines] coordinates {(SS4,56) (SS5,57) }; 

\legend{OptQuest, SASOC-SPSA, SASOC-H}
\end{axis}
\end{tikzpicture}}\\[1ex]}

                                \label{fig_wsum_edf_ss45}
    }    \end{tabular}
    \caption{Performance of OptQuest and SASOC for two different dispatching
policies on SS4 and SS5}
    \label{fig_priopull_ss45}
\end{figure}
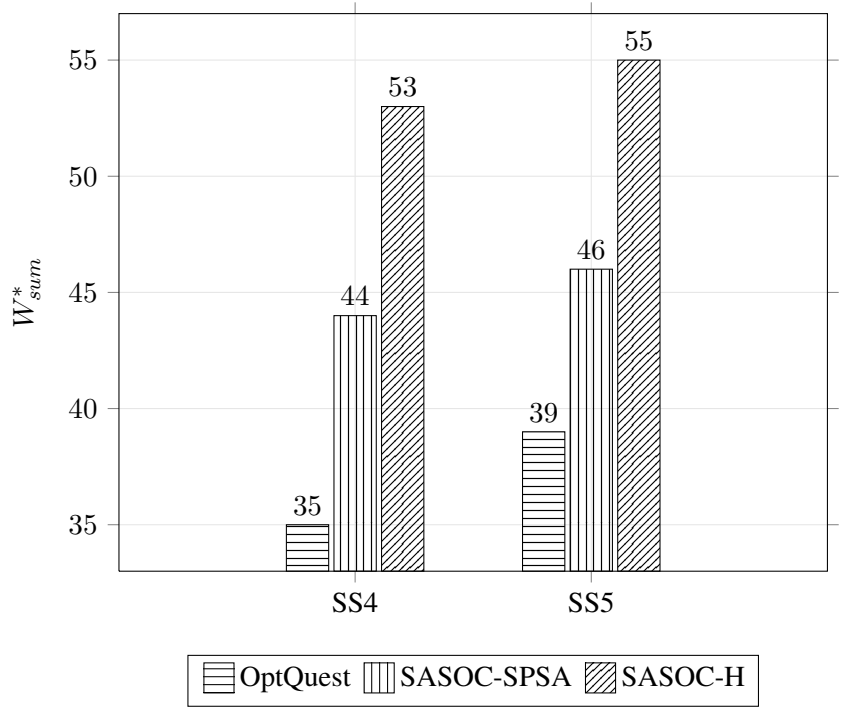
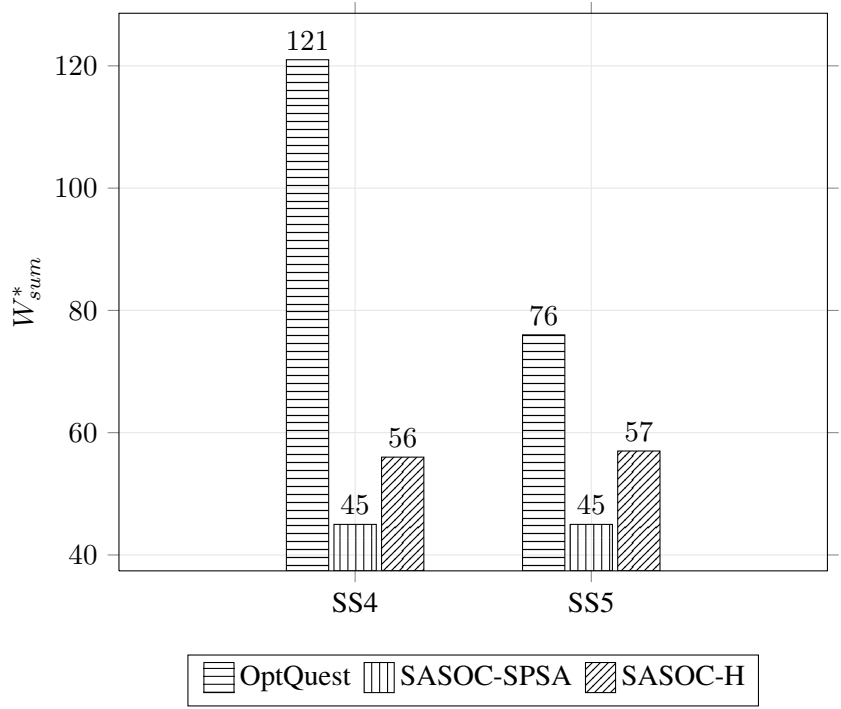

\begin{figure}
\begin{minipage}[c][\textheight]{\textwidth}
    \centering
\tabl{c}{\scalebox{0.8}{\begin{tikzpicture}
\begin{axis}[
ybar={2pt},
legend style={at={(0.5,-0.15)},anchor=north,legend columns=-1},
legend image code/.code={\path[fill=white,white] (-2mm,-2mm) rectangle
(-3mm,2mm); \path[fill=white,white] (-2mm,-2mm) rectangle (2mm,-3mm); \draw
(-2mm,-2mm) rectangle (2mm,2mm);},
ylabel={$W_{sum}^*$},
symbolic x coords={1, SS1, SS4, 2},
xmin={1},
xmax={2},
xtick=data,
ytick align=outside,
bar width=16pt,
nodes near coords,
grid,
grid style={gray!20},
width=16cm,
height=11cm,
]
\addplot[pattern=horizontal lines]   coordinates {(SS1,68) (SS4,45) }; 
\addplot[pattern=vertical lines] coordinates {(SS1,63) (SS4,56) }; 
\addplot[pattern=grid] coordinates {(SS1,67) (SS4,80) }; 
\addplot[pattern=north east lines] coordinates {(SS1,63) (SS4,74) }; 

\legend{SASOC-SPSA, SASOC-H, SASOC-SF-N, SASOC-SF-C}
\end{axis}
\end{tikzpicture}}\\[1ex]}
\label{fig:sf-compare}
\caption[Performance comparison of SASOC algorithms with smoothed functional algorithms]{Performance comparison of SASOC with smoothed functional
algorithms. \footnote{Note: the underlying dispatching policy is EDF.}}
\end{minipage}
\end{figure}

\begin{figure}
    \centering
    \begin{tabular}{cc}
    \subfigure[Using PRIO-PULL on SS4]
    {
\includegraphics[width=3.7in,angle=270]{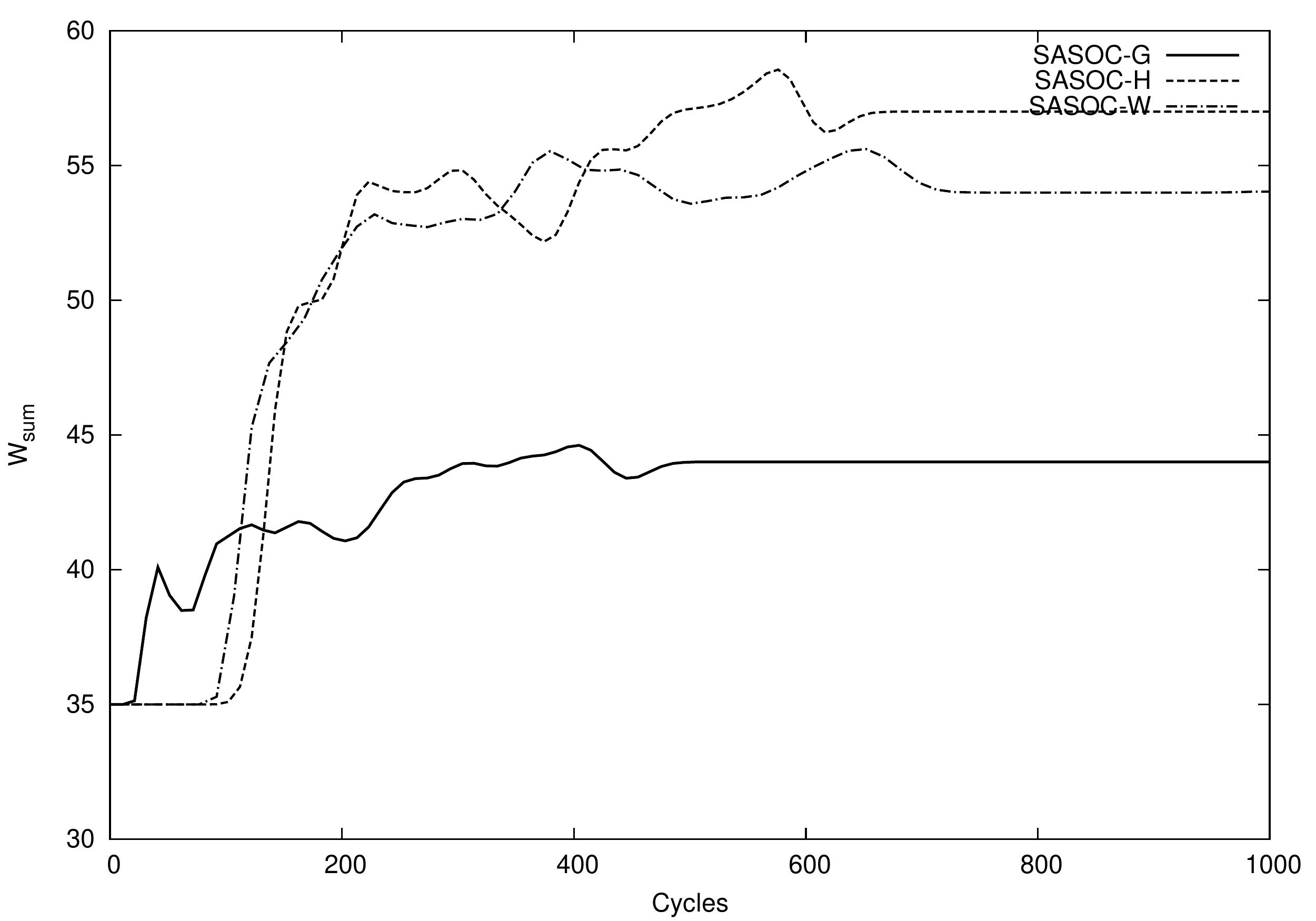}
        \label{fig_bra06_pp}
   }&\\
   \subfigure[Using EDF on SS1]
   {
\includegraphics[width=3.7in,angle=270]{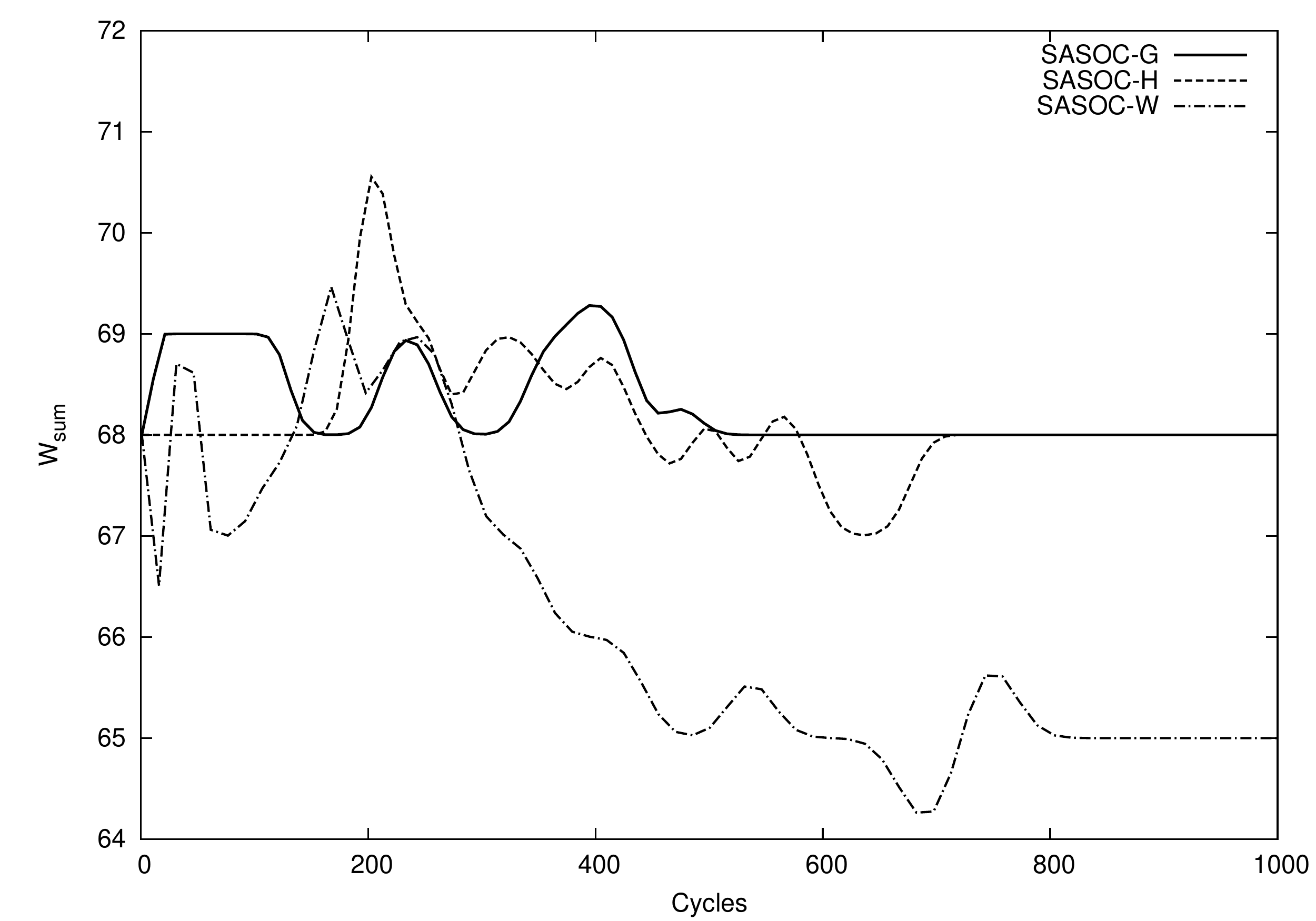}
       \label{fig_arg05_edf}
   }
    \end{tabular}
    \caption{Convergence of $W_{sum}$ as a function of number of cycles for
different SASOC algorithms - Illustration on SS1 and SS4 for two dispatching
policies}
    \label{fig_wsum_conv}
\end{figure}

\begin{figure}
    \centering
\tabl{c}{\scalebox{0.8}{\begin{tikzpicture}
\begin{axis}[
ybar={2pt},
legend style={at={(0.5,-0.15)},anchor=north,legend columns=-1},
legend image code/.code={\path[fill=white,white] (-2mm,-2mm) rectangle
(-3mm,2mm); \path[fill=white,white] (-2mm,-2mm) rectangle (2mm,-3mm); \draw
(-2mm,-2mm) rectangle (2mm,2mm);},
ylabel={$W_{sum}^*$},
symbolic x coords={1, SS1, SS2, SS3, SS4, SS5, 2},
xmin={1},
xmax={2},
xtick=data,
ytick align=outside,
bar width=14pt,
nodes near coords,
grid,
grid style={gray!20},
width=16cm,
height=11cm,
]
\addplot[pattern=horizontal lines] coordinates {(SS1,24) (SS2,0) (SS3,95) (SS4,31) (SS5,54)    }; 
\addplot[pattern=vertical lines]   coordinates {(SS1,54) (SS2,39) (SS3,100) (SS4,96) (SS5,91)  }; 
\addplot[pattern=grid]             coordinates {(SS1,100) (SS2,70) (SS3,100) (SS4,89) (SS5,68) }; 
\legend{OptQuest, SASOC-SPSA, SASOC-H}
\end{axis}
\end{tikzpicture}}\\[1ex]}
\label{fig:mean-util}
\caption{Performance of OptQuest and SASOC for EDF dispatching policy. The mean utilization values have been rounded to nearest integer.}
\end{figure}
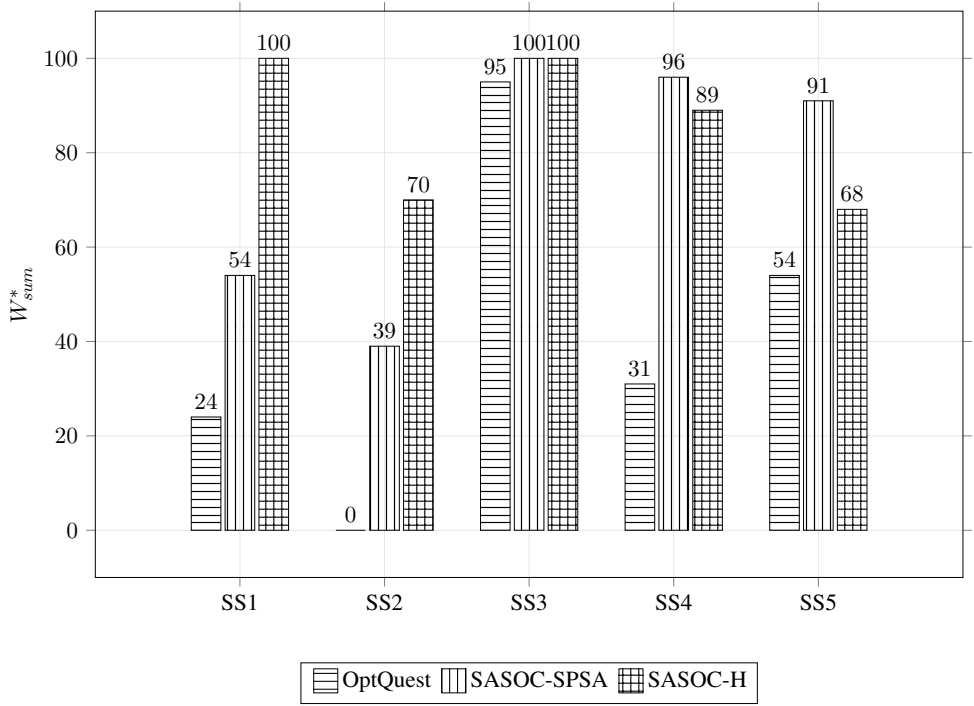

Figures \ref{fig_wsum_priopull_ss123} and \ref{fig_wsum_edf_ss123} compare the
$W^*_{sum}$ achieved for OptQuest and SASOC algorithms using PRIO-PULL
and EDF on three real life SS with a flat SR arrival pattern (see Figure \ref{fig_arrivals_ss123}). 
Here $W^*_{sum}$ denotes the value
obtained upon convergence of $W_{sum}$. On these SS pools, namely
SS1, SS2 and SS3, respectively, we observe that our SASOC algorithms
find a better value of $W^*_{sum}$ as compared to OptQuest.
Note in particular that on SS1, SASOC algorithms perform significantly better than OptQuest with an improvement of nearly $100\%$.
Further, on SS2, OptQuest is seen to be infeasible whereas all the
SASOC algorithms obtain a feasible and good allocation.

It is evident that SASOC algorithms consistently outperform the OptQuest algorithm on these SS pools.
Further, among the SASOC algorithms, we observe that SASOC-W finds
better solutions in general as compared to the other two SASOC
algorithms. Further, we observe that in all our experiments that include both flat as well as bursty arrival pools, the optimal worker parameter obtained by all our SASOC algorithms is feasible, i.e., satisfies both the SLA as well as the queue stability constraints.

Figure \ref{fig_wsum_edf_ss123} presents similar results for the case of the
EDF dispatching policy. The behavior of OptQuest and SASOC algorithms
was found to be similar to that of PRIO-PULL with SASOC showing
performance improvements over OptQuest here as well.

\subsection{Bursty-Arrival SS pools}

Figures \ref{fig_wsum_priopull_ss45} and \ref{fig_wsum_edf_ss45} compare the
$W^*_{sum}$ achieved for OptQuest and SASOC algorithms using PRIO-PULL
and EDF on two real life SS with a bursty SR arrival pattern (see Figure \ref{fig_arrivals_ss45}).
From these performance plots, we observe that OptQuest is seen to be slightly better than SASOC-G and
SASOC-W when the underlying dispatching policy is PRIO-PULL, whereas in the case of EDF dispatching policy,
the SASOC algorithms clearly outperform OptQuest. The execution time advantage of SASOC algorithms over OptQuest hold in the case of these pools as well.

Computational efficiency is a significant factor for any adaptive labor staffing algorithm. For instance, if a candidate labor staffing algorithm takes too long to find the optimal staffing levels, it is not amenable for making staffing changes in a real SS. Both from the number of simulations required as well as the wall clock run time standpoints, SASOC algorithms are better than OptQuest. This is because
OptQuest requires $5000$ iterations with each iteration of $100$
replications, whereas the SASOC algorithms require $1000$
iterations of $20$ replications each in order to find $W^*_{sum}$.
This results in a $25X$ speedup for SASOC algorithms and also manifests in the wall clock runtimes of SASOC algorithms because simulation run-times are
proportional to the number of SS simulations. We observe that the SASOC
algorithms result in at least $10$ to $15$ times improvement as compared to OptQuest from the wall clock runtimes perspective. For
instance, on SS1 the typical run-time of OptQuest was found to be $24$
hours, whereas SASOC algorithms took less than $2.5$ hours each to converge.

In fact, we observed in the case of SS2, OptQuest does not find a feasible solution even after repeated runs  for $5000$ search iterations. Also, because OptQuest
depends heavily on SLA attainments and respective confidence intervals
of previous iterations, it requires higher number of replications than
SASOC.  Further, we observed that SASOC algorithms converge within $500$ iterations in all our experiments. Thus, SASOC algorithms require $25$ times less number of simulations as compared to OptQuest, while searching for the optimal SS
configuration. This runtime advantage ensures that an SS manager can make staffing changes even at the granularity of every week by making use of SASOC algorithms and the same may not be possible with OptQuest due to its longer runtimes. 

\subsection{Comparsion with SF approaches}
Figure \ref{fig:sf-compare} compares the $W^*_{sum}$ achieved with EDF as the
dispatching policy for the SASOC algorithms with the smoothed functional (SF)
based schemes from \citep{prashanth2011ss}. We observe that the SASOC algorithms
perform on par with the Cauchy variant (SASOC-SF-C), while performing better
than the Gaussian variant of the algorithm from \citep{prashanth2011ss}. An
important advantage with our SASOC algorithms in comparison with the SF based
approaches, especially the Cauchy variant, is the low computational overhead. While
our algorithms require Bernoulli random variable for perturbing the worker
parameter, the SF approaches require Gaussian or Cauchy random variables for
the same. Further, the second order method that we propose here (SASOC-H) is
more robust in comparison to the first order SF approaches and through the use
of Woodbury's identity, we also achieve low computational overhead as well.

\subsection{Empirical Convergence of $\theta$}
We observe that the parameter $\theta$ (and hence $W_{sum}$)
converges to the optimum value for each of the SS pools
considered. This is illustrated by the convergence plots in
Figures \ref{fig_bra06_pp} and \ref{fig_arg05_edf}. This is a
significant feature of SASOC as our algorithms are seen to converge analytically
(see Section \ref{sec:convergence}) and the plots confirm the
same. In contrast, the OptQuest algorithm is not proven to converge to
the optimum even after repeated runs, as illustrated in the case of SS2 in Figure
\ref{fig_wsum_priopull_ss123}.

\subsection{Mean utilization results}
We present the
utilization percentages across different skill levels (low, medium and high) in 
Figure \ref{fig:mean-util}. The underlying dispatching policy here is EDF. The results for the case of PRIO-PULL are similar. We observe mean utilization of workers is a crucial factor for a labor staffing algorithm and it is evident from Figure \ref{fig:mean-util} that SASOC algorithms exhibit a higher mean utilization of workers and hence, better overall performance in comparison to the OptQuest algorithm.

From the above performance comparisons over SS pools with flat as well as bursty SR arrival patterns,
it is evident that our SASOC
algorithms, which converge to a local saddle point, show overall better performance in comparison with the scatter search-based algorithm of OptQuest. Among the SASOC algorithms, we observe that the second order algorithms (SASOC-H and SASOC-W) perform
better than the first order algorithm (SASOC-G) in many cases, with
SASOC-W being marginally better than SASOC-H.

\section{Conclusions}
\label{sec:conclusion}
We motivated the discrete optimization problem of adaptively determining optimal staffing
levels in SS and proposed two novel SASOC algorithms for
solving this problem.  The aim was to find an optimum
worker parameter that minimizes a certain long-run cost objective, while
adhering to a set of constraint functions, which are also long run averages.
All SASOC algorithms are simulation-based optimization methods as the
single-stage cost and constraint functions are observable only via simulation
and no closed form expressions are available.
For solving
the constrained optimization problem, we applied the Lagrange
relaxation procedure and used an SPSA based scheme for performing
gradient descent in the primal and at the same time, ascent in the
dual, for the Lagrange multipliers. All SASOC algorithms also incorporated a smooth (generalized) projection operator that helped imitate a continuous parameter system with suitably defined transition dynamics. Using the theory of
multi-timescale stochastic approximation, we presented the convergence
proof of our algorithms.  Numerical experiments were performed to
evaluate each of the algorithms based on real-life SS data against the
state-of-the-art simulation optimization toolkit OptQuest in the
current context.  SASOC algorithms in general showed overall superior performance
compared to OptQuest, as they (a) exhibited more than an order of
magnitude faster convergence than OptQuest, (b) consistently found
solutions of good quality and in most cases better than those found by
OptQuest, and (c) showed guaranteed convergence even in scenarios
where OptQuest did not find feasibility even after repeated runs for $5000$ iterations.
Given the quick convergence of SASOC algorithms (in minutes), they are
particularly suitable for adaptive labor staffing where a few days of
optimization run like in OptQuest would fail to keep up with the
changes.  By comparing the results of the SASOC algorithms on two
independent dispatching policies, we showed that SASOC's performance
is independent of the operational model of SS.

As future work, one may consider single-stage cost function enhancements that
include worker salaries as well as other relevant monetary costs, apart from staff
utilization and SLA attainment factors. 
An orthogonal direction of future work in this context is to
develop skill updation algorithms, i.e., derive novel work dispatch policies
that improve the skills of the workers beyond their current levels by way of
assigning work of a higher complexity. However, the setting is still constrained
and the SLAs would need to be met while improving the skill levels of the
workers. The skill updation scheme could then be combined  with the SASOC
algorithms presented in this paper to optimize the
staffing levels on a slower timescale.

\newpage
\appendix
\section{Appendix: Convergence Analysis}
Here we provide a sketch of the convergence of the SASOC-G and SASOC-H algorithms. The first step in the convergence analysis is common to all the SASOC algorithms and involves the extension of the transition dynamics $p_\theta(i,j)$ of the constrained parameterized hidden Markov process to the convex hull $\bar\D$.

\subsection*{Extension of the transition dynamics $p_{i,j}(\theta)$}
Recall that the discrete parameter $\theta$ of the Markov process $\{X_n(\theta)\}$ takes values in the set $\D$ defined earlier. Using the members of $\D$, one can extend the transition dynamics $p_\theta(i,j)$ of the underlying Markov process to any $\theta$ in the convex hull $\bar\D$ as follows:
\begin{equation}
     p_\theta(i,j) = \sum\limits_{k=1}^{p} \beta_k(\theta) p_{D^k}(i,j), \quad \forall \theta \in \bar\D, i,j \in S,
\label{eq:pthetabar}
\end{equation}
where the weights $\beta_k(\theta)$ satisfy $0 \le \beta_k(\theta) \le 1, k=1,\ldots,p$ and $\sum\limits_{k=1}^{p} \beta_k(\theta) = 1$. $p_\theta(i,j), i,j \in S, \theta \in \bar\D$ can be seen to satisfy the properties of transition probabilities. It is worth noting here that the weights $\beta_k(\theta)$ must be continuously differentiable in order to ensure that the extended transition probabilities are continuously differentiable as well and our SASOC algorithms converge. Moreover, in the SASOC algorithms, we do not require an explicit computation of these weights while trying to solve the constrained optimization problem equation (1) of the main paper. Consider the case when $\theta = (\theta_1)^T$ and suppose $\theta_1$ lies between $D^j$ and $D^{j+1}$ (both members of $\D$). By construction, $\beta_k(\theta)$ will correspond to the probability with which projection is done on $[D^j,D^{j+1}]$ and is obtained using the $\Gamma$-projection operator as follows:  Let us consider an interval of length $2 \
zeta$ around the midpoint of $[D^j, D^{j+1}]$ and denote it as $[\tilde D_1, \tilde D_2]$, where $\tilde D_1 = \frac{\D^j+\D^{j+1}}{2} - \zeta$ and $\tilde D_2 = \frac{\D^j+\D^{j+1}}{2} + \zeta$.
Then, the weights $\beta_k(\theta)$ are set in the following manner:
$\beta_k(\theta) = 0, \forall k \notin \{j,j+1\}$ and $\beta_j(\theta), \beta_{j+1}(\theta)$ is given by: 

\begin{equation}
 (\beta_j(\theta_1),\beta_{j+1}(\theta_1)) =
  \begin{cases}
   (1,0) &  \text{if } \theta \in\left[D^j,\tilde D_1 \right] \\
   (f(\frac{\tilde D_2 - \theta_1}{2\zeta}), 1- f(\frac{\tilde D_2 - \theta_1}{2\zeta})) & \text{if } \theta_1 \in \left[\tilde D_1, \tilde D_2\right]  \\
   (0,1) & \text{if } \theta \in \left[\tilde D_2,D^{j+1}\right]
  \end{cases}
\end{equation}
In the above, $f$ is obtained from the definition of $\Gamma$-projection and hence, is a continuously differentiable function defined on $[0,1]$ such that $f(0)=0$ and $f(1)=1$. The above can be similarly extended when the parameter $\theta$ has $N$ components. It can thus be seen that $\beta_k(\theta), k=1,\ldots,p$ are continuously differentiable functions of $\theta$. Thus, from  \eqref{eq:pthetabar} and the fact that $\beta_k(\theta)$ are continuously differentiable, it can be seen that the extended transition dynamics $p_\theta(i,j), \forall \theta \in \bar\D, i,j \in S$ are continuously differentiable.

We now claim the following:

\begin{lemma}
\label{lemma:markov}
    For any $\theta \in \bar\D$, $\{ X_n(\theta), n\ge 0\}$ is ergodic Markov.
\end{lemma}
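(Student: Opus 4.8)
The plan is to leverage the fact, guaranteed by (A1), that each of the finitely many ``vertex'' kernels $p_{D^k}(\cdot,\cdot)$, $k=1,\ldots,p$, is the transition law of an ergodic chain, and to transfer this property to the convex combination $p_\theta$ defined in \eqref{eq:pthetabar}. The Markov property itself is immediate: for a fixed $\theta\in\bar\D$ the kernel $p_\theta$ is time-homogeneous and the state at instant $n+1$ is drawn according to $p_\theta(X_n,\cdot)$, so $\{X_n(\theta)\}$ is a time-homogeneous Markov chain on the compact state space $S$. It remains only to establish ergodicity.

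The key structural observation is an entrywise domination. Since $\sum_{k=1}^p\beta_k(\theta)=1$ with $\beta_k(\theta)\ge 0$, for every $\theta$ there is at least one index $k_0=k_0(\theta)$ with $\beta_{k_0}(\theta)>0$. Writing $p_\theta = \beta_{k_0}(\theta)\,p_{D^{k_0}} + R_\theta$, where $R_\theta=\sum_{k\ne k_0}\beta_k(\theta)p_{D^k}$ is a nonnegative kernel, and expanding the $m$-step kernel while discarding all cross terms (each of which is nonnegative) yields
\begin{equation*}
p_\theta^{\,m}(x,\cdot)\;\ge\;\beta_{k_0}(\theta)^{m}\,p_{D^{k_0}}^{\,m}(x,\cdot),\qquad m\ge 1,
\end{equation*}
in the sense of nonnegative measures. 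Thus $p_\theta$ can reach in $m$ steps every set that $p_{D^{k_0}}$ can reach, and it inherits irreducibility and aperiodicity directly from the ergodic vertex chain $p_{D^{k_0}}$.

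To conclude ergodicity on the continuous-valued state space I would upgrade (A1) to a Doeblin (minorization) condition for the vertex kernels: since $S$ is compact and each $p_{D^k}$ is ergodic, there exist $m_0$, $\epsilon_0>0$ and a probability measure $\nu$ with $p_{D^{k_0}}^{\,m_0}(x,\cdot)\ge \epsilon_0\,\nu(\cdot)$ uniformly in $x$. Feeding this into the domination bound gives $p_\theta^{\,m_0}(x,\cdot)\ge \beta_{k_0}(\theta)^{m_0}\epsilon_0\,\nu(\cdot)$ uniformly in $x$, i.e.\ $p_\theta$ satisfies the same minorization with the strictly positive constant $\beta_{k_0}(\theta)^{m_0}\epsilon_0$. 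Standard theory then yields a unique invariant distribution and uniform (geometric) ergodicity, so $\{X_n(\theta)\}$ is ergodic and its long-run averages are well defined — exactly what is needed to extend $J(\cdot),G_{i,j}(\cdot),H(\cdot)$ to all of $\bar\D$.

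I expect the main obstacle to be this last step: passing from the qualitative ergodicity asserted in (A1) to a quantitative minorization on a genuinely continuous state space, and verifying that the domination inequality is valid at the level of kernels rather than finite stochastic matrices. If one instead treats $S$ as finite (or works with a suitable discretization), irreducibility together with aperiodicity already give ergodicity and the minorization step can be bypassed; in the general case one must additionally invoke compactness of $S$ and a Feller/weak-continuity property of the vertex kernels to secure the small-set structure underlying the Doeblin condition.
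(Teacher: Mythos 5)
Your decomposition argument is sound as far as it goes, and it is genuinely different from the paper's treatment: the paper disposes of this lemma with a one-line appeal to Lemma 2 of \citep{shalabh2011stochastic}, whereas you attempt a self-contained proof. The domination $p_\theta^{\,m}(x,\cdot)\ge\beta_{k_0}(\theta)^m\,p_{D^{k_0}}^{\,m}(x,\cdot)$ is valid at the level of kernels, not just finite matrices: composition of nonnegative kernels is monotone, so discarding the cross terms in the expansion of $(\beta_{k_0}p_{D^{k_0}}+R_\theta)^m$ is legitimate, and hence $\psi$-irreducibility and aperiodicity do transfer from any vertex chain with $\beta_{k_0}(\theta)>0$ to the mixture chain.

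The gap is exactly where you flag it, and it is genuine rather than cosmetic. Ergodicity of the vertex chains as postulated in (A1) is a purely qualitative property, while the uniform minorization $p_{D^{k_0}}^{\,m_0}(x,\cdot)\ge\epsilon_0\,\nu(\cdot)$ for all $x\in S$ is equivalent, by Meyn--Tweedie theory, to the whole space being a small set, i.e.\ to \emph{uniform} ergodicity, which is strictly stronger. Compactness of $S$ does not bridge this: for example, the Feller chain $X_{n+1}=U_{n+1}X_n$ on $[0,1]$ with $U_n$ i.i.d.\ uniform is uniquely ergodic (time averages of continuous functions converge for every initial state) yet admits no minorization of the above form, since $p^n(x,\{0\})=0$ for $x>0$ while the invariant measure is $\delta_0$. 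So the Doeblin bound cannot be extracted from (A1); it would have to be imposed as an additional hypothesis, or derived from extra regularity (e.g.\ strong Feller vertex kernels plus a support condition) that the paper never assumes. Moreover, you need this step not merely for rates but for the conclusion itself: on a continuous state space, irreducibility and aperiodicity alone do not yield positive Harris recurrence, and without that the long-run averages defining $\bar J$, $\bar G_{i,j}$, $\bar H$ on $\bar\D$ need not exist from every initial condition. A completion that stays closer to the paper's assumptions would observe that the extended kernel is Feller for fixed $\theta$ (this is precisely what the continuous weights $\beta_k$ buy), invoke Krylov--Bogolyubov on the compact $S$ for existence of an invariant probability measure, and obtain uniqueness from the transferred $\psi$-irreducibility; but even then almost-sure convergence of the time averages from arbitrary initial states requires a Harris-type argument you have not supplied. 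In fairness, the paper supplies none either --- its proof is by citation --- but as written your proof assumes strictly more than (A1) provides.
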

\begin{proof}
    Follows in a similar manner as Lemma 2 of \cite{shalabh2011stochastic}.
\end{proof}

Now, define analogues of the long-run average cost and constraint functions for any $\theta \in \bar\D$ as follows:
\begin{equation}
\label{eqn:c-mpbar}
\begin{array}{l}
\bar J(\theta) \stackrel{\triangle}{=} \lim\limits_{n \rightarrow \infty}\frac{1}{n} \sum\limits_{m=0}^{n-1} c(X_m(\theta)), \theta \in \bar\D\\
\bar G_{i,j}(\theta) \stackrel{\triangle}{=} \lim\limits_{n \rightarrow \infty}\frac{1}{n} \sum\limits_{m=0}^{n-1} g_{i,j}(X_m(\theta)) \le 0, \\\qquad\qquad\qquad\quad\forall i=1,\ldots,|C|, j=1,\ldots,|P|, \theta \in \bar\D \\
\bar H(\theta) \stackrel{\triangle}{=} \lim\limits_{n \rightarrow \infty}\frac{1}{n} \sum\limits_{m=0}^{n-1} h(X_m(\theta)) \le 0, \theta \in \bar\D.
\end{array}
\end{equation}
The difference between the above and the corresponding entitites defined in equation (1) of the main paper is that $\theta$ can take values in $\bar\D$ in the above. In lieu of Lemma 2, the above limits are well-defined for all $\theta \in \bar\D$.

\begin{lemma}
$\bar J(\theta), \bar G_{i,j}(\theta), i=1,\ldots,|C|, j=1,\ldots,|P|,$ and $\bar H(\theta)$ are continuously differentiable in $\theta \in \bar \D$.
\end{lemma}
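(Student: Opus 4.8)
The plan is to reduce the differentiability of the three long-run averages to the differentiability of the stationary distribution of the extended chain, and then to exploit the fact that the kernel depends on $\theta$ only through the finitely many, continuously differentiable weights $\beta_k(\theta)$.

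First I would invoke Lemma~\ref{lemma:markov}: for every $\theta \in \bar\D$ the chain $\{X_n(\theta)\}$ is ergodic, so it admits a unique stationary distribution $\pi_\theta$ and, by the ergodic theorem, the limits in \eqref{eqn:c-mpbar} exist and coincide with the stationary expectations
\[
\bar J(\theta) = \int_S c(x)\,\pi_\theta(dx), \quad \bar G_{i,j}(\theta) = \int_S g_{i,j}(x)\,\pi_\theta(dx), \quad \bar H(\theta) = \int_S h(x)\,\pi_\theta(dx).
\]
Since $c, g_{i,j}, h$ are continuous and bounded on the compact state space $S$ and do \emph{not} themselves depend on $\theta$, it suffices to show that the map $\theta \mapsto \pi_\theta$ is continuously differentiable (weakly, against bounded continuous test functions). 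The three averages are then $C^1$ as fixed continuous linear functionals applied to a $C^1$ family of measures.

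Second, I would isolate the $\theta$-dependence. By \eqref{eq:pthetabar} the kernel is the finite mixture $P_\theta = \sum_{k=1}^p \beta_k(\theta)\,P_{D^k}$, so $\theta$ enters only through the weight vector $\beta(\theta) = (\beta_1(\theta),\ldots,\beta_p(\theta))$, which was already shown to be continuously differentiable. Writing $\Phi(\beta)$ for the stationary distribution of the ergodic kernel $\sum_k \beta_k P_{D^k}$, the map $\beta \mapsto P_\beta$ is \emph{linear} and hence smooth, so all nonlinearity is confined to the stationary-distribution map $\Phi$; once $\Phi$ is shown to be $C^1$ on the relevant region of the simplex, the chain rule gives $\pi_\theta = \Phi(\beta(\theta))$ as a composition of $C^1$ maps. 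To establish smoothness of $\Phi$ I would characterize $\pi_\theta$ as the unique solution of the invariance relations $\pi_\theta(I - P_\theta) = 0$, $\pi_\theta(S) = 1$, equivalently passing through the Poisson equation $\bar J(\theta) + V_\theta(x) = c(x) + \int_S V_\theta(y)\,P_\theta(x,dy)$, whose solution is unique up to an additive constant. Ergodicity guarantees that $I - P_\theta$ is invertible on the mean-zero subspace (the deviation/fundamental operator exists and is bounded), which supplies the nonsingular ``Jacobian'' needed for an implicit-function-theorem argument. Differentiating the invariance equation and using $\pi_\theta P_\theta = \pi_\theta$ to cancel the $V_\theta$-derivative terms yields the closed expression
\[
\nabla_\theta \bar J(\theta) = \int_S \Big( \int_S V_\theta(y)\,\nabla_\theta P_\theta(x,dy) \Big)\,\pi_\theta(dx), \quad \nabla_\theta P_\theta = \sum_{k=1}^p \nabla\beta_k(\theta)\,P_{D^k},
\]
with the analogous identities for $\bar G_{i,j}$ and $\bar H$ obtained by replacing $c$ with $g_{i,j}$ and $h$; continuity of this derivative in $\theta$ then follows from continuity of each factor.

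The main obstacle is making the fundamental-operator / implicit-function argument rigorous on the continuous compact state space $S$ rather than a finite one. The two technical points I would need are (a) \emph{uniform} geometric ergodicity of the family $\{P_\theta\}$, which I expect to obtain precisely because $P_\theta$ is a finite convex combination of finitely many ergodic kernels $P_{D^k}$ over the compact set $\bar\D$, yielding a deviation operator bounded uniformly in $\theta$; and (b) continuity of $\theta \mapsto \nabla_\theta P_\theta$, which is immediate since $\nabla\beta_k(\cdot)$ is continuous. Together these give a uniformly bounded, continuously varying inverse operator, so the differentiated invariance equation can be solved for $\nabla_\theta \pi_\theta$ and the resulting derivative is continuous. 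This mirrors the sensitivity analysis of average-cost Markov reward processes in \citep{marbach2001simulation} and the argument behind Lemma~2 of \citep{shalabh2011stochastic}.
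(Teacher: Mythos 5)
Your overall route---reduce everything to smoothness of $\theta \mapsto \pi_\theta$, exploit that $P_\theta = \sum_{k=1}^{p}\beta_k(\theta)\,P_{D^k}$ depends on $\theta$ only through the $C^1$ weights, and then differentiate the invariance/Poisson equation to get the standard sensitivity formula---is the natural expansion of what the paper actually does here: its entire proof is the single line ``Follows in a similar manner as Lemma 3 of \citep{shalabh2011stochastic}'', and your derivative formula for $\nabla_\theta \bar J$ is the classical one (cf.\ \citep{marbach2001simulation}). So in spirit you are reconstructing the argument the paper outsources to its reference.

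There is, however, one concrete gap, and it sits under the load-bearing step. Your technical point (a) asserts that uniform geometric ergodicity of the family $\{P_\theta\}$ should hold ``precisely because $P_\theta$ is a finite convex combination of finitely many ergodic kernels $P_{D^k}$ over the compact set $\bar\D$.'' That inference fails. All that (A1) and Lemma \ref{lemma:markov} supply is ergodicity---a unique invariant measure and the ergodic theorem---for each $P_{D^k}$ and each mixture; on a compact \emph{continuous} state space this does not imply any geometric convergence rate, and even if each $P_{D^k}$ were geometrically ergodic, mixtures do not inherit geometric ergodicity, let alone with constants uniform in $\theta$, without common structure such as a simultaneous minorization $P_{D^k}(x,\cdot) \ge \epsilon\,\nu(\cdot)$ for all $k$ and $x$, or a common drift/Lyapunov condition. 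Note also that for each fixed $\theta$ only the weights attached to the grid points of the cell containing $\theta$ are nonzero, so the ``active'' kernels change as $\theta$ moves across cells---exactly the situation in which uniformity can break. Since the existence of a bounded, continuously varying deviation (fundamental) operator is precisely what makes your implicit-function-theorem step go through, this hypothesis must either be added as an assumption (e.g., a uniform Doeblin condition on the finitely many $P_{D^k}$) or proved from the service-system model; it cannot be deduced from the mixture form alone. Everything downstream of that point---the cancellation using $\pi_\theta P_\theta = \pi_\theta$, the closed-form expression for $\nabla_\theta \bar J(\theta)$, and the identical treatment of $\bar G_{i,j}$ and $\bar H$---is sound.
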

\begin{proof}
    Follows in a similar manner as Lemma 3 of \cite{shalabh2011stochastic}.
\end{proof}

We now prove the SASOC algorithms described previously are equivalent to their analogous continuous parameter $\bar\theta$ counterparts under the extended Markov process dynamics.
 
\begin{lemma}
\label{lemma:sasocequivalence}
Under the extended dynamics $p_\theta(i,j),i,j \in S$ of the Markov process $\{X_n(\theta)\}$ defined over all $\theta \in \bar \D$, we have
\begin{enumerate}[(i)]
    \item SASOC-G algorithm is analogous to its continuous counterpart where $\Gamma(\theta)$ and $\Gamma(\theta + \delta \triangle)$ are replaced by $\bar\Gamma(\theta)$ and $\bar\Gamma(\theta + \delta \triangle)$ respectively.
    \item SASOC-H algorithm is analogous to its continuous counterparts where $\Gamma(\theta)$ and $\Gamma(\theta + \delta_1 \triangle + \delta_2 \hat\triangle)$ are replaced by $\bar\Gamma(\theta)$ and $\bar\Gamma(\theta +\delta_1 \triangle + \delta_2 \hat\triangle)$ respectively.
\end{enumerate}

\end{lemma}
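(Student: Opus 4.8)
The plan is to establish the equivalence at the level of the one-step transition kernel of the state process $\{X_n\}$ and then transfer it to the full sequence of iterates. The only point at which an actual SASOC algorithm differs from its continuous counterpart is the simulation step: the actual algorithm feeds the \emph{discrete} parameter $\Gamma(\theta(n))$ (respectively the perturbed discrete parameter) into the simulator and hence uses the genuine discrete-parameter transition probabilities $p_{\Gamma(\theta)}(i,j)$, whereas the continuous counterpart feeds $\bar\Gamma(\theta(n))\in\bar\D$ into the simulator and uses the extended dynamics $p_\theta(i,j)$ of \eqref{eq:pthetabar}. Because the update rules \eqref{eqn:spsa-update-rule}--\eqref{eqn:hessian-update-rule} for $\theta$, $\lambda$, $\bar L$ and $\bar L'$ are written identically in terms of the observed states, it suffices to show that these observed states share the same conditional law in both schemes.

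First I would compute the marginal one-step transition probability of the actual scheme by conditioning on the random outcome of the $\Gamma$-projection. Recall that $\Gamma$ acts componentwise and that, for each coordinate, it rounds $\bar\theta_i$ to its lower or upper discrete neighbour with probabilities $f(\cdot)$ and $1-f(\cdot)$ inside the randomization band $[\tilde\D_1,\tilde\D_2]$, and deterministically outside it. By independence of the coordinatewise projections, the probability that $\Gamma(\theta)$ equals a given lattice point $D^k\in\D$ factorizes into the product of these coordinatewise probabilities, which is exactly the weight $\beta_k(\theta)$ appearing in the extended dynamics \eqref{eq:pthetabar} (the $N$-dimensional $\beta_k$ being the product of the one-dimensional weights displayed in its construction). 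Conditioning on $\Gamma(\theta)=D^k$ and then simulating one step therefore yields
\[
\Pr\{X_{n+1}=j\mid X_n=i\}=\sum_{k=1}^{p}\beta_k(\theta)\,p_{D^k}(i,j)=p_\theta(i,j),
\]
which is precisely the extended kernel employed by the continuous counterpart when its simulation parameter is $\bar\Gamma(\theta)=\theta$ (valid since $\theta\in\bar\D$ after each $\bar\Gamma$-update).

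Given that the one-step kernels coincide, for every fixed $(\theta,\lambda)$ the two state trajectories have the same law; hence the observations $c(X_m),g_{i,j}(X_m),h(X_m)$, the running averages $\bar L,\bar L'$, and the resulting $\theta$- and $\lambda$-recursions are statistically indistinguishable in the two schemes. The same argument applies verbatim to the perturbed simulation: for SASOC-G the perturbed discrete parameter $\Gamma(\theta+\delta\Delta)$ induces the marginal kernel $p_{\bar\Gamma(\theta+\delta\Delta)}(\cdot,\cdot)$, matching the continuous counterpart's use of $\bar\Gamma(\theta+\delta\Delta)$, and for SASOC-H the perturbed parameter $\Gamma(\theta+\delta_1\Delta+\delta_2\hat\Delta)$ matches $\bar\Gamma(\theta+\delta_1\Delta+\delta_2\hat\Delta)$ in the same way. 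This establishes both (i) and (ii).

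The main obstacle is the bookkeeping in the multidimensional projection: one must verify that the product of the coordinatewise $\Gamma$-probabilities coincides with the correctly indexed weight $\beta_k(\theta)$ for \emph{every} lattice point $D^k$, and in particular that the degenerate cases reduce correctly — the clipping cases $\bar\theta_i<0$ and $\bar\theta_i\ge W_{\max}$ and the deterministic cases $\bar\theta_i\notin[\tilde\D_1,\tilde\D_2]$ each collapse the corresponding coordinate's distribution to a point mass that agrees with the $\bar\Gamma$-clipped value. Once this identification is secured, the equivalence of the kernels, and hence of the algorithms, follows directly from \eqref{eq:pthetabar}.
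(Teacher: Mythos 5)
Your proposal is correct and follows essentially the same route as the paper's proof: both identify the probability that the randomized projection $\Gamma(\theta)$ lands on a lattice point $D^k$ with the weight $\beta_k(\theta)$ in \eqref{eq:pthetabar}, so that the actual scheme's (marginal) transition kernel coincides with the extended kernel $p_\theta(i,j)=\sum_k\beta_k(\theta)p_{D^k}(i,j)$ used by the continuous counterpart, with the boundary/clipping cases handled separately. The only cosmetic difference is that the paper fixes $\delta$ small enough to avoid clipping in the interior case and then treats boundary points of $\bar\D$ explicitly, whereas you absorb clipping via the observation that $\Gamma$ agrees with $\Gamma\circ\bar\Gamma$; this does not change the substance of the argument.
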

\begin{proof}
\textbf{(i)}:
Consider the SASOC-G algorithm which updates according to  equation (11) of the main paper. Let $\theta(m)$
be a given parameter update that lies in $\bar{D}^o$ (where $\bar{D}^o$ denotes
the interior of the set $\bar{D}$). Let $\delta>0$ be sufficiently small so that
$\bar{\theta}^1(m)=(\bar{\Gamma}_j(\theta_j(m)+\delta\Delta_j(m)),j=1,\ldots,N)^T$
$=(\theta_j(m)+\delta\Delta_j(m)),j=1,\ldots,N)^T$.

Consider now the $\Gamma$-projected parameters
$\theta^1(m) = (\Gamma_j(\theta_j(m)+\delta\Delta_j(m)),j=1,\ldots,N)^T$ and
$\theta^2(m) = (\Gamma_j(\theta_j(m)),j=1,\ldots,N)^T$, respectively.
By the construction of the generalized projection operator,
these parameters are equal to $\theta^k\in C$ with probabilities
$\beta_k((\theta_j(m)+\delta\Delta_j(m),j=1,\ldots,N)^T)$
and
$\beta_k(\theta_j(m),j=1,\ldots,N)^T)$, respectively.
When the operative parameter is $\theta^k$, the transition probabilities are
$p_{\theta^k}(i,l)$, $i,l\in S$. Thus with probabilities
$\beta_k((\theta_j(m)+\delta\Delta_j(m),j=1,\ldots,N)^T)$
and
$\beta_k((\theta_j(m),j=1,\ldots,N)^T)$, respectively,
the transition probabilities in the two simulations equal $p_{\theta^k}(i,l)$,
$i,l\in S$.

Next, consider the alternative (extended) system with parameters $\bar{\theta}^1(m) = (\bar\Gamma_j(\theta_j(m)+\delta\Delta_j(m))$
and $\bar{\theta}^2(m) = \bar\Gamma_j(\theta_j(m))$, respectively. The transition probabilities are now given by
\[p_{\bar{\theta}^i(m)}(j,l) = \sum_{k=1}^{p} \beta_k(\bar{\theta}^i(m))p_{\theta^k}(j,l),\]
$i=1,2$, $j,l\in S$. Thus with probability $\beta_k(\bar{\theta}^i(m))$, a transition
probability of $p_{\theta^k}(j,l)$ is obtained in the $i$th system.
Thus the two systems (original and the one with extended dynamics) are analogous.

Now consider the case
when $\theta(m)\in \partial\bar{\D}$, i.e., is a point on the boundary of $\bar{\D}$).
Then, one or more components of $\theta(m)$ are extreme points. For simplicity,
assume that only one component (say the $i$th component) is an extreme point as the
same argument carries over if there are more parameter components that are extreme
points. By
the $i$th component of $\theta(m)$ being an extreme
point, we mean that $\theta_i(m)$ is either $0$ or $W_{\max}$.
The other components $j=1,\ldots,N,j\not=i$ are not extreme.
Thus, $\theta_i(m)+\delta\Delta_i(m)$ can lie outside of the interval $[0,W_{\max}]$. For instance, suppose
that $\theta_i(m)=W_{\max}$ and that $\theta_i(m)+\delta\Delta_i(m)>W_{\max}$
(which will happen if $\Delta_i(m)=+1$). In such a case,
$\theta^1_i(m) = \Gamma_i(\theta_i(m)+\delta\Delta_i(m)) = W_{\max}$ with probability one.
Then, as before, $\theta^1(m)$ can be written as the convex combination
${\displaystyle \theta^1(m) = \sum_{k=1}^{p}\beta_k(\theta^1(m))
\theta^k}$ and
the rest follows as before.

\textbf{(ii)}: Follows in a similar manner as part (i) above. 
\end{proof}

As a consequence of Lemma \ref{lemma:sasocequivalence}, we can analyze the SASOC algorithms with the continuous parameter $\bar\theta$ used in place of $\theta$ and under the extended transition dynamics \eqref{eq:pthetabar}. By an abuse of notation, we shall henceforth use $\theta$ to refer to the latter. 

\subsection*{SASOC-G}
 The convergence analysis of SASOC-G can be split into four stages:

\begin{inparaenum}[\bfseries (I)]
\hspace{-1em}\item The fastest time-scale in SASOC-G is $\{d(n)\}$ which is used to update the Lagrangian estimates $\bar{L}$ and $\bar{L}'$ corresponding to simulations with $\theta$ and $\theta + \delta \Delta$ respectively. Firstly, we show that these estimates indeed converge to the Lagrangian values $L(\theta, \lambda)$ and $L(\theta + \delta \Delta, \lambda)$ defined in equation (9) of the main paper. Note that the $\theta$ and $\lambda$ which are updated on slower time-scales, can be assumed to be time invariant quantities for the purpose of analysis of these Lagrangian estimates.\\
\item Next, we show that  the parameter updates $\theta(n)$ using SASOC-G converge to a limit point of the ODE
\begin{equation}
\label{eqn:sasoc-g:theta-ode}
\dot{\theta}(t) = \check{\Gamma}\left ( -\nabla_\theta L(\theta(t), \lambda) \right ),
\end{equation}
where $\check{\Gamma}$ is defined as follows: For any bounded continuous function $\epsilon(\cdot)$,
\begin{equation}
\label{eqn:Pi-bar-operator}
\check{\Gamma}(\epsilon(\theta(t))) = \lim\limits_{\eta \downarrow 0} \dfrac{\Pi(\theta(t) + \eta \epsilon(\theta(t))) - \theta(t)}{\eta}.
\end{equation}
The projection operator $\check{\Gamma}(\cdot)$ ensures that the evolution of $\theta$ stays within the bounded set $M$. Again for the analysis of the $\theta$-update, the value of $\lambda$ which is updated on the slowest time-scale is assumed constant.\\
\item We show that
$\lambda_{i, j}$s and $\lambda_f$ converge respectively
to the limit points of the ODEs \[\begin{array}{l}
\dot{\lambda}_{i,j}(t) = \check\Pi \left ( G_{i, j}(\theta^*) \right ), \forall
i
=
1, 2, \dots, |C|, j = 1, 2, \dots, |P|,\\[1ex]
\dot{\lambda}_f(t) = \check\Pi \left ( H(\theta^*) \right ),
\end{array}\]
where $\theta^*$ is the converged parameter value of SASOC-G/H corresponding to
Lagrange parameter $\lambda(t) \stackrel{\triangle}{=} (\lambda_{i,j}(t),
\lambda_f(t), i=1,\ldots,|C|, j=1,\ldots,|P|)^T$, and for any bounded continuous
functions $\bar{\epsilon}(\cdot)$, \[\check\Pi(\bar{\epsilon}(\lambda(t))) =
\lim\limits_{\eta \downarrow 0} \dfrac{(\lambda(t) + \eta
\bar{\epsilon}(\lambda(t)))^+ - \lambda(t)}{\eta}.\] 
Here again, the projection
operator $\check\Pi$ ensures that the evolution of each component
of $\lambda$ stays non-negative. From the definition of the Lagrangian given in equation (9) of the  main paper, the gradient of the Lagrangian w.r.t. $\lambda_{i, j}$ can be seen to  be $G_{i,j}(\theta^*)$ and that w.r.t. $\lambda_f$ to be $H(\theta^*)$. Thus, the above ODEs suggest that in SASOC-G $\lambda_{i, j}$s and $\lambda_f$ are ascending in the Lagrangian value and converge to a local maximum point.\\
\item Finally, we show that the algorithm indeed converges to a (local) saddle point of the Lagrangian with local maximum in $\lambda_{i, j}$s and $\lambda_f$, and local minimum in $\theta$.
\end{inparaenum}

\begin{lemma}
\label{lemma:Lagrangian}
$\|\bar{L}(n) - L(\theta(n), \lambda(n)) \| \rightarrow 0$ w.p. 1, as $n \rightarrow \infty$.

\begin{proof}
{\rm $\theta$ and $\lambda$ values are being updated on slower time-scales, thus assumed to be constant in this proof. Let \[l(X_m) \stackrel{\triangle}{=} c(X_{nK+m}) + \sum\limits_{i=1}^{|C|}\sum\limits_{j=1}^{|P|} \lambda_{i,j}(nK) g_{i,j}(X_{nK+m}) + \lambda_f h(X_{nK+m}).\]
The $\bar{L}$ update can be re-written as
\[\bar{L}(m + 1) = \bar{L}(m) + d(m)\left ( L(\theta(m),\lambda(m)) + \xi_1(m) - \bar{L}(m) + M_{m + 1} \right ), \]
where $\xi_1(m) = (E[ l(X_m) | \mathcal{F}_{m - 1} ] - - L(\theta(m),\lambda(m)), m \ge 0$ and \\$\mathcal{F}_m = \sigma(X_n, \lambda(n), \theta(n), n \le m), m \ge 0$ are the associated $\sigma$-fields. Also, $M_{m + 1} = l(X_m) - E[ l(X_m) | \mathcal{F}_{m - 1} ], m \ge 0$ is a martingale difference sequence. Let $N_m = \sum_{n = 0}^{m} d(n) M_{n + 1}$. It can be easily verified that $(N_m, \mathcal{F}_m), m \ge 0$ is a square-integrable martingale obtained from the corresponding martingale difference $\{M_m\}$. Further, from the square summability of $d(n), n \ge 0$, and the facts that $S$ is compact and $l$ is Lipschitz continuous, it can be verified from the martingale convergence theorem that $\{N_m, m \ge 0\}$, converges almost surely.

Now from Lemma \ref{lemma:markov} $\{(X_m)\}$ is ergodic Markov for any given $\theta(m)$. Hence,
$|E[ l(X_m) | \mathcal{F}_{m - 1} ] - - L(\theta(m),\lambda(m)| \rightarrow 0$ almost surely on the `natural timescale', as $m \rightarrow \infty$.  The `natural timescale' is clearly faster than the algorithm's timescale and hence $\xi_1(m)$ can be ignored in the analysis of $\bar L$-recursion, see \cite[Chapter 6.2]{borkar2008stochastic} for detailed treatment of natural timescale algorithms.
The rest of the proof follows from the Hirsch lemma \cite[Theorem 1, pp. 339]{hirsch1989convergent}.
}
\end{proof}
\end{lemma}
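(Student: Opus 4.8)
The plan is to treat the $\bar L$-recursion as a single-timescale stochastic approximation scheme, leaning on the time-scale separation guaranteed by (A3). Since $d(n)$ is the fastest step-size (so that $b(n)/d(n)\to 0$, and hence $a(n)/d(n)\to 0$ as well), the slower iterates $\theta(n)$ and $\lambda(n)$ are quasi-static relative to the $\bar L$-updates and may be held fixed at $\theta,\lambda$ throughout this analysis. First I would introduce the single-stage Lagrangian $l(X_m)=c(X_m)+\sum_{i,j}\lambda_{i,j}g_{i,j}(X_m)+\lambda_f h(X_m)$, so that the recursion collapses to $\bar L(m+1)=\bar L(m)+d(m)(l(X_m)-\bar L(m))$, and then rewrite the driving term by adding and subtracting its conditional expectation relative to the filtration $\mathcal{F}_m=\sigma(X_n,\theta(n),\lambda(n):n\le m)$.

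This decomposition splits the increment into three pieces: the mean drift $L(\theta,\lambda)-\bar L(m)$, a martingale-difference noise $M_{m+1}=l(X_m)-E[l(X_m)\mid\mathcal{F}_{m-1}]$, and a bias term $\xi_1(m)=E[l(X_m)\mid\mathcal{F}_{m-1}]-L(\theta,\lambda)$ that measures the gap between the one-step conditional mean and the stationary average. The next step is to dispatch the martingale term: setting $N_m=\sum_{n=0}^{m}d(n)M_{n+1}$, I would verify that $(N_m,\mathcal{F}_m)$ is a square-integrable martingale whose quadratic variation is summable, because $S$ is compact and $l(\cdot)$ is (Lipschitz) continuous and hence bounded, while $\sum_n d(n)^2<\infty$ by (A3). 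The martingale convergence theorem then yields almost-sure convergence of $N_m$, so the noise does not perturb the asymptotic behaviour.

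The crux is controlling the bias $\xi_1(m)$. Here I would invoke ergodicity of $\{X_m(\theta)\}$ for each fixed $\theta$ (Lemma \ref{lemma:markov}): since the chain mixes on its own ``natural'' time-scale, which is faster than the algorithm's $\{d(n)\}$ time-scale, the conditional mean $E[l(X_m)\mid\mathcal{F}_{m-1}]$ relaxes to the stationary value $L(\theta,\lambda)$, so $\xi_1(m)\to 0$ almost surely and may be neglected (cf.\ the natural-time-scale treatment in \cite[Chapter 6.2]{borkar2008stochastic}). With both noise and bias asymptotically negligible, the recursion tracks the scalar linear ODE $\dot{\bar L}(t)=L(\theta,\lambda)-\bar L(t)$, whose unique globally asymptotically stable equilibrium is $\bar L=L(\theta,\lambda)$. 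A direct appeal to the Hirsch lemma \cite[Theorem 1, pp.\ 339]{hirsch1989convergent} then delivers $\|\bar L(n)-L(\theta(n),\lambda(n))\|\to 0$ w.p.\ 1; the companion estimate $\bar L'$ (and the SASOC-H variants with the perturbed parameter) follow by the identical argument. I expect the bias step to be the main obstacle, since it is exactly there that the inter-timescale separation and the ergodic mixing of the constrained Markov process must be combined carefully, whereas the martingale and ODE-tracking steps are comparatively routine.
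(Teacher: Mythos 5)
Your proposal is correct and follows essentially the same route as the paper's own proof: the identical decomposition of the increment into drift, martingale-difference noise, and conditional-mean bias; the same square-integrable-martingale argument exploiting compactness of $S$ and $\sum_n d(n)^2 < \infty$; the same appeal to ergodicity and the natural-time-scale argument to kill the bias $\xi_1(m)$; and the same concluding invocation of the Hirsch lemma. Your only addition is to spell out the limiting scalar ODE $\dot{\bar L}(t) = L(\theta,\lambda) - \bar L(t)$ explicitly, which the paper leaves implicit in its Hirsch-lemma step.
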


On similar lines, $\|\bar{L}'(n) - L(\theta(n) + \delta \Delta(n), \lambda(n)) \| \rightarrow 0$ w.p. 1, as $n \rightarrow \infty$.  Thus, $\theta$ updates which are on the slower time scale $\{b(n)\}$, can be re-written as
\begin{equation}
W_{i}(n + 1) = W_{i}(n) - b(n) \left ( \frac{L(\theta(n) + \delta \Delta_{i}(n), \lambda) - L(\theta(n), \lambda)}{\delta \Delta_{i}(n)}  \right) + b(n) \chi_{n+1} ,
\label{weq}
\end{equation}
$\forall i = 1, 2, \dots, |A| \times |B|$, where $\chi_n = o(1)$ in view of Lemma \ref{lemma:Lagrangian}. Note here that $\lambda(n) \equiv \lambda, \quad \forall n$.
Now for the ODE (\ref{eqn:sasoc-g:theta-ode}), $V^{\lambda}(\cdot) = L(\cdot,\lambda)$ serves as an associated Lyapunov function and the stable fixed points of this ODE lie within the set $K^{\lambda} = \{ \theta \in S: \check{\Gamma}\left ( -\nabla L(\theta(t), \lambda) \right ) = 0 \}$.

\begin{theorem}
Under (A1)-(A3), in the limit as $\delta \rightarrow 0$,
$\theta(R) \rightarrow \theta^* \in K^{\lambda}$ almost surely as $R \rightarrow \infty$.
\begin{proof}
{\rm
From assumption (A2), $L(\theta, \lambda)$ is assumed to be continuous. Hence over the compact set $M$, $L(\theta, \lambda)$ is uniformly bounded. Thus, from Lasalle's invariance theorem \cite{lasalle} \cite[Theorem 2.3, pp. 76]{kushner-yin}, $\theta(R) \rightarrow \theta^* \in K^{\lambda}$ a.s. as $R \rightarrow \infty$.
}
\end{proof}
\end{theorem}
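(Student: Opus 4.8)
The plan is to treat the rewritten $\theta$-recursion \eqref{weq} as a single stochastic approximation scheme on the middle timescale $\{b(n)\}$, with $\lambda$ frozen at its slow-timescale value $\lambda(n)\equiv\lambda$ and with the Lagrangian estimates $\bar L,\bar L'$ already collapsed onto the true values $L(\theta,\lambda)$ and $L(\theta+\delta\Delta,\lambda)$ by Lemma \ref{lemma:Lagrangian}, so that the error $\chi_n$ in \eqref{weq} is $o(1)$. First I would analyze the conditional expectation of the SPSA increment. Writing $\mathcal{F}_n$ for the natural filtration and averaging over the $\pm 1$-valued Bernoulli perturbations $\Delta(n)$, a Taylor expansion of $L(\theta(n)+\delta\Delta(n),\lambda)$ about $\theta(n)$ together with the independence, zero-mean and unit-square properties of the $\Delta_i(n)$ yields
\[
E\left[\frac{L(\theta(n)+\delta\Delta(n),\lambda)-L(\theta(n),\lambda)}{\delta\Delta_i(n)}\,\Big|\,\mathcal{F}_n\right] = \nabla_{\theta_i} L(\theta(n),\lambda) + O(\delta),
\]
where the $O(\delta)$ bias is uniform over the compact hull $\bar\D$ by the bounded-second-derivative hypothesis in (A2'). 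Thus the bracketed term in \eqref{weq} is a biased estimate of the gradient whose bias vanishes as $\delta\downarrow 0$.

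Next I would decompose the increment into (i) the mean drift $-\nabla_\theta L(\theta(n),\lambda)$, (ii) a martingale-difference term, (iii) the $O(\delta)$ SPSA bias, and (iv) the $o(1)$ tracking error $\chi_{n+1}$. Square-summability of $\{b(n)\}$ from (A3), together with boundedness of $L$ over $\bar\D$, gives via the martingale convergence theorem that the martingale part converges almost surely, while (iii)--(iv) are asymptotically negligible (the residual $O(\delta)$ being deferred to the final $\delta\downarrow0$ limit). This places \eqref{weq} in the standard Kushner--Clark framework, so that $\{\theta(n)\}$ asymptotically tracks the trajectories of the projected mean ODE \eqref{eqn:sasoc-g:theta-ode}, the operator $\check\Gamma$ of \eqref{eqn:Pi-bar-operator} arising precisely because the $\bar\Gamma$-projection confines the updates to $\bar\D$ \citep[Chapter 6]{borkar2008stochastic}.

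The convergence of the ODE itself is then argued with the Lyapunov function $V^\lambda(\theta)=L(\theta,\lambda)$. Along trajectories of \eqref{eqn:sasoc-g:theta-ode} one computes $\dot V^\lambda = \nabla_\theta L(\theta,\lambda)^T\,\check\Gamma(-\nabla_\theta L(\theta,\lambda)) \le 0$, with equality exactly on $K^\lambda=\{\theta\in\bar\D:\check\Gamma(-\nabla_\theta L(\theta,\lambda))=0\}$; the key fact is the standard property that the projected drift $\check\Gamma(-\nabla_\theta L)$ forms a non-positive inner product with $\nabla_\theta L$, valid both in the interior (where $\check\Gamma$ is the identity) and on $\partial\bar\D$. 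LaSalle's invariance principle then drives the ODE trajectories into $K^\lambda$, and combining this with the tracking result gives $\theta(R)\to\theta^*\in K^\lambda$ almost surely; the final $\delta\downarrow0$ limit removes the residual $O(\delta)$ bias.

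The main obstacle I anticipate is the Lyapunov step on the boundary of $\bar\D$: one must verify that $\check\Gamma$ introduces no spurious descent directions and that the only zeros of the projected drift are the genuine stationary points collected in $K^\lambda$ (the paper itself notes that $K^\lambda$ may contain spurious boundary fixed points, which is exactly this subtlety). A secondary technical point is the clean interchange of the two limits $R\to\infty$ and $\delta\downarrow0$, which is why the statement is phrased ``in the limit as $\delta\to0$'' rather than for a fixed $\delta$.
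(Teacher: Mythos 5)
Your proposal is correct and takes essentially the same route as the paper: rewriting the $\theta$-recursion as \eqref{weq} via Lemma \ref{lemma:Lagrangian} with $\lambda$ frozen, viewing it as a noisy projected Euler discretization of the ODE \eqref{eqn:sasoc-g:theta-ode}, and concluding via the Lyapunov function $V^{\lambda}(\cdot)=L(\cdot,\lambda)$ and LaSalle's invariance theorem, with the residual SPSA bias removed in the final $\delta \downarrow 0$ limit. The only difference is level of detail: you make explicit the Taylor-expansion bias bound, the martingale decomposition, and the Kushner--Clark tracking step, which the paper compresses into its citation of LaSalle and its remark that \eqref{weq} is an Euler discretization of the projected ODE.
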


Thus (\ref{weq}) can be seen to be an Euler discretization of (\ref{eqn:sasoc-g:theta-ode}) and converges a.s. to $K^{\lambda}$ in the limit as $\delta \rightarrow 0$.

For $\{\lambda(n)\}$ updates on the slowest time-scale $\{a(n)\}$, we can assume that $\theta$ has converged to $\theta^* \in K^\lambda$. Let \[F^{\theta^*} = \left \{ \lambda \ge 0 : \check\Pi \left ( G_{i, j}(\theta^*) \right ) = 0, \forall i = 1, 2, \dots, |C|, j = 1, 2, \dots, |P|; \check\Pi \left ( H(\theta^*) \right ) = 0 \right \}.\]

%
%

\begin{theorem}
\label{theorem:sasoc-g-lambda}
$\lambda(R) \rightarrow \lambda^* \in F$ w.p. 1 as $R \rightarrow \infty$.
\begin{proof}
{\rm
The $\lambda$ update in equation (11) of the main paper can be re-written as
\[\lambda_{i, j}(n + 1) = \lambda_{i, j}(n) + a(n) \left [ G_{i, j}(\theta^*) + N_{n + 1} + M_{n + 1} \right ],\]
where $N_{n + 1} = E[ g_{i, j}(X_n) | \mathcal{F}_{n - 1} ] - G_{i, j}(\theta^*)$, $M_{n + 1} = g_{i, j}(X_n) - E[ g_{i, j}(X_n) | \mathcal{F}_{n - 1} ]$. It is easy to see that from Lemma \ref{lemma:markov} that $N_{n} \rightarrow 0$ as $n \rightarrow \infty$ along the natural timescale (see Lemma \ref{lemma:Lagrangian}).
Further, $\{M_n\}$  is a martingale difference sequence with $\sum_{i = 0}^{n} a(i) M_{i + 1}, n \ge 0$, being the associated martingale that can be seen to be a.s. convergent (See Prop. 4.4 of \cite{shalabh2011constrained}).
 Thus from \cite[Extension 3 of Section 2.2]{borkar2008stochastic}, the result follows for $\lambda_{i, j}$s. Similarly, one can show convergence for $\lambda_f$.
}
\end{proof}
\end{theorem}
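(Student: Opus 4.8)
The plan is to analyze the $\lambda$-recursion as a projected single-timescale stochastic approximation scheme in which the faster iterates are effectively frozen at their limits. By the step-size separation enforced by (A3) (in particular $a(n)/b(n) \to 0$), on the slowest timescale $\{a(n)\}$ the $\theta$-iterates and the Lagrangian estimates have already equilibrated, so I would assume $\theta(n) \equiv \theta^* \in K^{\lambda}$ and invoke Lemma~\ref{lemma:Lagrangian} to treat $\bar L, \bar L'$ as having tracked their targets. The goal then reduces to showing that the projected recursions for $\lambda_{i,j}$ and $\lambda_f$ asymptotically track the projected ODEs $\dot\lambda_{i,j} = \check\Pi(G_{i,j}(\theta^*))$ and $\dot\lambda_f = \check\Pi(H(\theta^*))$, and converge to their equilibrium set $F^{\theta^*}$.

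First I would put each update, e.g.\ $\lambda_{i,j}(n+1) = (\lambda_{i,j}(n) + a(n)g_{i,j}(X_n))^{+}$, into the canonical form $\lambda_{i,j}(n+1) = \lambda_{i,j}(n) + a(n)\left[ G_{i,j}(\theta^*) + N_{n+1} + M_{n+1}\right]$ (modulo the projection $\check\Pi$), by adding and subtracting the conditional mean $E[g_{i,j}(X_n)\mid\mathcal{F}_{n-1}]$ and the long-run average $G_{i,j}(\theta^*)$. Here $M_{n+1} = g_{i,j}(X_n) - E[g_{i,j}(X_n)\mid\mathcal{F}_{n-1}]$ is a martingale difference and $N_{n+1} = E[g_{i,j}(X_n)\mid\mathcal{F}_{n-1}] - G_{i,j}(\theta^*)$ is a vanishing bias arising from using single-stage constraint samples rather than their stationary averages. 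The two error terms are then disposed of separately. For the martingale part, continuity of $g_{i,j}$ and $h$ together with compactness of $S$ (so these are bounded), combined with $\sum_n a^2(n) < \infty$ from (A3), yield square-integrability and hence a.s.\ convergence of the associated martingale $\sum_i a(i) M_{i+1}$ by the martingale convergence theorem, citing Prop.~4.4 of \cite{shalabh2011constrained}. For the bias, I would invoke ergodicity of $\{X_n(\theta^*)\}$ (Lemma~\ref{lemma:markov}): along the `natural timescale' the chain equilibrates faster than the $\lambda$-updates, so $N_{n+1} \to 0$ a.s., exactly as in the argument of Lemma~\ref{lemma:Lagrangian}.

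With both error terms controlled, the recursion is an Euler discretization of the projected ascent ODE, and the Kushner--Clark / projected-ODE framework (the extension for projected schemes in \cite[Section 2.2]{borkar2008stochastic}) gives convergence to the asymptotically stable invariant set. Since the Lagrangian in \eqref{eqn:Lagrangian} is affine in $\lambda$, its $\lambda$-gradient is the constant vector $(G_{i,j}(\theta^*), H(\theta^*))$, so the equilibria of the projected ascent ODE are precisely the points where $\check\Pi$ of this gradient vanishes, i.e.\ the set $F^{\theta^*}$; this identifies the limit set, and the identical reasoning applied to $\lambda_f$ closes the argument. I expect the main obstacle to be the rigorous justification of the timescale-separation step, namely that freezing $\theta$ at $\theta^*$ and appealing to the natural-timescale vanishing of $N_{n+1}$ is legitimate. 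This is where assumptions (A1), (A3) and the ergodicity guaranteed by Lemma~\ref{lemma:markov} do the real work, since everything hinges on the ergodic averaging of the constraint functions outrunning the slow $\lambda$-iteration.
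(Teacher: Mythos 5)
Your proposal is correct and follows essentially the same route as the paper's proof: the identical decomposition $g_{i,j}(X_n) = G_{i,j}(\theta^*) + N_{n+1} + M_{n+1}$, disposal of the martingale term via Prop.~4.4 of \cite{shalabh2011constrained}, the natural-timescale ergodicity argument (Lemma~\ref{lemma:markov}) for the bias term, and the projected-ODE conclusion from \cite[Section 2.2]{borkar2008stochastic}. The extra detail you supply (boundedness from compactness for square-integrability, and identifying $F^{\theta^*}$ via affineness of the Lagrangian in $\lambda$) merely fills in steps the paper leaves implicit.
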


Now, we need to show that the convergence of the algorithm is indeed to a saddle point, i.e., $\theta^* \in K^{\lambda^*}$ and $\lambda^* \in F^{\theta^*}$. This can be shown by invoking the envelope theorem of mathematical economics \cite[pp. 964-966]{mas1995microeconomic}; see remark (2) in \cite[pp 15]{shalabh2011constrained}.

\subsection*{SASOC-H}
Convergence analysis of SASOC-H follows along similar lines as that of the SASOC-G algorithm as given below. Note that we first analyse the case when the Hessian is inverted directly in SASOC-H and then give the necessary modifications for the proof to work when Woodbury's identity is employed.
\begin{enumerate}
\item As in Lemma \ref{lemma:Lagrangian}, one can see that $\bar{L}$ and $\bar{L}'$ iterations converge almost surely as follows:
\begin{align*}
\hspace{-2em}\|\bar{L}(n) - L(\theta(n), \lambda(n)) \|, \|\bar{L}'(n) - L(\theta(n) + \delta_1 \Delta(n) + \delta_2 \widehat\Delta(n), \lambda(n)) \| \rightarrow 0 \textrm{ as } n \rightarrow \infty.
\end{align*}
\item Next, we show that the parameter updates $\theta(n)$ of SASOC-H converge to a limit point of the ODE
\begin{equation}
\label{eqn:sasoc-h:theta-ode}
\dot{\theta}(t) = \check{\Gamma}\left ( - \Upsilon(\nabla^2_\theta L(\theta(t), \lambda))^{-1} \nabla_\theta L(\theta(t), \lambda) \right ),
\end{equation}
where $\check{\Gamma}$ is as defined in equation (\ref{eqn:Pi-bar-operator}).
\item The rest of the analysis of slower time-scale updates of $\lambda_{i, j}$s and $\lambda_f$, and saddle point behaviour follows from that of SASOC-G.
\end{enumerate}

\begin{lemma}
\label{lemma:sasoc-h:gradient}
\[\left \| \dfrac{L(\theta(n) + \delta_1 \Delta(n) + \delta_2 \widehat\Delta(n), \lambda(n)) - L(\theta(n),\lambda(n))}{\delta_2 \widehat\Delta_i(n)} - \nabla_{\theta_{i}} L(\theta(n), \lambda(n)) \right \| \rightarrow 0\textrm{ w.p. 1},
\]
with $\delta_1, \delta_2 \rightarrow 0$ as $n \rightarrow \infty \quad \forall i \in \{1, 2, \dots, |A| \times |B| \}$.
\begin{proof}
{\rm
Follows from \cite[Proposition 4.10]{shalabh2011constrained}.
}
\end{proof}
\end{lemma}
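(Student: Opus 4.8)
The plan is to prove the lemma by a second-order Taylor expansion of the Lagrangian about $\theta(n)$, followed by the martingale/averaging argument that the resulting estimate inherits once it is driven through the $b(n)$-timescale $\theta$-recursion. Write $\widehat L_i(n)$ for the SPSA quotient $[L(\theta(n)+\delta_1\Delta(n)+\delta_2\widehat\Delta(n),\lambda(n))-L(\theta(n),\lambda(n))]/(\delta_2\widehat\Delta_i(n))$ whose limit the lemma concerns. Under (A2), $\theta\mapsto L(\theta,\lambda)$ is smooth with bounded third derivative on the compact hull $\bar\D$, so
\[
L(\theta+\delta_1\Delta+\delta_2\widehat\Delta,\lambda)-L(\theta,\lambda)
= (\delta_1\Delta+\delta_2\widehat\Delta)^T\nabla_\theta L
+ \tfrac12(\delta_1\Delta+\delta_2\widehat\Delta)^T\nabla_\theta^2 L\,(\delta_1\Delta+\delta_2\widehat\Delta)
+ O\big((\delta_1+\delta_2)^3\big),
\]
with the remainder uniform in $\theta,\lambda$ since $S$ is compact and $c,g_{i,j},h$ are continuous.

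Next I would divide by $\delta_2\widehat\Delta_i(n)$ and isolate $\nabla_{\theta_i}L$. Using that $\Delta$ and $\widehat\Delta$ are mutually independent, zero-mean, $\pm1$-valued Bernoulli vectors — so that $\widehat\Delta_i^{-1}=\widehat\Delta_i$, $E[\widehat\Delta_j\widehat\Delta_i]=\delta_{ij}$ and $E[\Delta_j\widehat\Delta_i]=0$ — the first-order part of $\widehat L_i(n)$ splits into the target $\nabla_{\theta_i}L(\theta(n),\lambda(n))$ plus the cross terms $\sum_{j\neq i}(\widehat\Delta_j/\widehat\Delta_i)\,\partial_j L$ and $(\delta_1/\delta_2)\sum_j(\Delta_j/\widehat\Delta_i)\,\partial_j L$, each of zero conditional mean given the $\sigma$-field $\mathcal{F}_n=\sigma(X_m,\theta(m),\lambda(m):m\le n)$. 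A short moment computation — odd products of independent symmetric Bernoulli variables have zero mean — shows that the second-order terms each carry a leftover $\widehat\Delta_i$ factor and hence also vanish in conditional mean, so that $E[\widehat L_i(n)\mid\mathcal{F}_n]=\nabla_{\theta_i}L(\theta(n),\lambda(n))+O\big((\delta_1+\delta_2)^3/\delta_2\big)$; under the prescribed symmetric choice $\delta_1=\delta_2$ this bias is $O(\delta_2^2)$ and vanishes as $\delta_1,\delta_2\downarrow0$.

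Finally I would dispose of the zero-mean fluctuation terms by the averaging inherent in the recursion. The difference $\widehat L_i(n)-E[\widehat L_i(n)\mid\mathcal{F}_n]$ is a bounded martingale-difference sequence, boundedness following from compactness of $S$, continuity of the single-stage functions in (A2), and the a.s.\ convergence of $\bar L,\bar L'$ to the true Lagrangian values already established for SASOC-H in the analogue of Lemma~\ref{lemma:Lagrangian}. Substituting $\widehat L_i(n)$ into the $\theta$-update, the accumulated noise $\sum_{m\le n}b(m)\big(\widehat L_i(m)-E[\widehat L_i(m)\mid\mathcal{F}_m]\big)$ is a square-integrable martingale that converges almost surely by the square-summability of $\{b(n)\}$ in (A3) and the martingale convergence theorem, exactly as in the proof of Lemma~\ref{lemma:Lagrangian}. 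Combined with the vanishing conditional bias and the ergodic-averaging of Lemma~\ref{lemma:markov}, this yields the asserted convergence w.p.\ 1.

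The hard part is the correct reading of the statement together with the moment bookkeeping of the second paragraph. The cross terms $\sum_{j\neq i}(\widehat\Delta_j/\widehat\Delta_i)\partial_j L$ and $(\delta_1/\delta_2)\sum_j(\Delta_j/\widehat\Delta_i)\partial_j L$ are $O(1)$ — with $\delta_1=\delta_2$ the ratio $\delta_1/\delta_2$ is not small — so $\widehat L_i(n)$ does \emph{not} converge pathwise to $\nabla_{\theta_i}L$; it does so only in conditional mean. The convergence claimed here must therefore be understood in the stochastic-approximation sense: once $\widehat L_i(n)$ is inserted into the recursion it acts as the true gradient plus an almost surely summable martingale noise plus an asymptotically negligible bias. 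Making this precise — verifying the vanishing of the various conditional means and that the residual $(\delta_1+\delta_2)^3/\delta_2\to0$ under the chosen schedule — is the delicate step; the remainder reduces to the martingale and ergodic-averaging arguments already used in Lemmas~\ref{lemma:markov} and~\ref{lemma:Lagrangian}.
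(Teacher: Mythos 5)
Your proposal is correct, and it supplies the argument that the paper itself outsources: the paper's entire proof of this lemma is the citation to Proposition 4.10 of \citep{shalabh2011constrained}, and that proposition is established by precisely the second-order Taylor expansion and symmetric-Bernoulli moment cancellation you carry out. The most valuable part of your write-up is the final paragraph. As literally written (with no conditional expectation inside the norm), the lemma is false: after dividing the first-order Taylor term by $\delta_2\widehat\Delta_i(n)$, the cross terms $\sum_{j\neq i}\bigl(\widehat\Delta_j(n)/\widehat\Delta_i(n)\bigr)\nabla_{\theta_j}L$ and $(\delta_1/\delta_2)\sum_{j}\bigl(\Delta_j(n)/\widehat\Delta_i(n)\bigr)\nabla_{\theta_j}L$ are $O(1)$ pathwise and do not shrink as $\delta_1,\delta_2\downarrow 0$, exactly as you observe. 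The cited proposition is stated with the conditional expectation $E[\,\cdot\mid\mathcal{F}(n)]$ wrapped around the SPSA quotient, and the statement here has silently dropped it; the ``w.p.~1'' qualifier only makes sense under that reading. Your interpretation --- vanishing conditional bias of order $O\bigl((\delta_1+\delta_2)^3/\delta_2\bigr)$ plus a zero-mean fluctuation absorbed as almost-surely convergent martingale noise on the $\{b(n)\}$ timescale --- is both the correct one and the form in which the lemma is actually used afterwards, when the $\theta$-recursion of SASOC-H is rewritten as a noisy Euler discretization of its limiting ODE. The only caveat worth recording is that your bias bound needs a mild compatibility condition between the two perturbation parameters (essentially $\delta_1^2/\delta_2\to 0$ for the second-order terms and $(\delta_1+\delta_2)^3/\delta_2\to 0$ for the remainder), which holds automatically for the paper's choice $\delta_1=\delta_2$ and is implicit in the cited result.
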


\begin{lemma}
\label{lemma:hessian-1}
\[\left \| \dfrac{L(\theta(n) + \delta_1 \Delta(n) + \delta_2 \widehat\Delta(n), \lambda(n)) - L(\theta(n),\lambda(n))}{\delta_1 \Delta_i(n) \delta_2 \widehat\Delta_j(n)} - \nabla^2_{\theta_{i, j}} L(\theta(n), \lambda(n)) \right \| \rightarrow 0\textrm{ w.p. 1},
\]
with $\delta_1, \delta_2 \rightarrow 0$ as $n \rightarrow \infty, \quad \forall i, j \in \{1, 2, \dots, |A| \times |B| \}$.
\begin{proof}
{
\rm
Follows from \cite[Proposition 4.9]{shalabh2011constrained}.
}
\end{proof}
\end{lemma}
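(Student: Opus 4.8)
The plan is to combine a third-order Taylor expansion of the Lagrangian with the cancellation properties of the symmetric Bernoulli perturbations, and then to upgrade the resulting pointwise (conditional-mean) identity to almost-sure convergence via the same stochastic-approximation averaging that underlies Lemma \ref{lemma:Lagrangian}; the structure parallels the gradient estimate in Lemma \ref{lemma:sasoc-h:gradient}. Writing $\Phi(n) = \delta_1\Delta(n) + \delta_2\widehat\Delta(n)$ and invoking (A2), which guarantees that $L(\cdot,\lambda)$ is twice continuously differentiable with bounded third derivative, I would expand
\[
L(\theta(n)+\Phi(n),\lambda) - L(\theta(n),\lambda) = \Phi(n)^T \nabla_\theta L + \tfrac{1}{2}\Phi(n)^T \nabla^2_\theta L\,\Phi(n) + R_3(n),
\]
where $\nabla_\theta L, \nabla^2_\theta L$ are evaluated at $(\theta(n),\lambda)$ and $|R_3(n)| \le C\|\Phi(n)\|^3$ for a constant $C$. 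Dividing by $\delta_1\delta_2\Delta_i(n)\widehat\Delta_j(n)$ and taking the conditional expectation given the history $\mathcal{F}_n$ (under which $\theta(n),\lambda$ are measurable while $\Delta(n),\widehat\Delta(n)$ are fresh, mutually independent, zero-mean $\pm 1$ variables) is the heart of the argument.

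The key computation shows that every term except the desired one vanishes in this conditional expectation. The first-order term splits into $\sum_k \Delta_k g_k/(\delta_2\Delta_i\widehat\Delta_j)$ and $\sum_k \widehat\Delta_k g_k/(\delta_1\Delta_i\widehat\Delta_j)$ (with $g = \nabla_\theta L$); both have zero conditional mean because the uncanceled factor gives $E[1/\widehat\Delta_j] = E[\widehat\Delta_j] = 0$ and $E[1/\Delta_i] = E[\Delta_i] = 0$. In the quadratic term, the pure $\delta_1^2$ and $\delta_2^2$ pieces again carry an uncanceled $1/\widehat\Delta_j$ or $1/\Delta_i$ factor and so vanish; only the mixed $\delta_1\delta_2$ pieces survive, and there $E[\Delta_k\widehat\Delta_l/(\Delta_i\widehat\Delta_j)] = E[\Delta_k/\Delta_i]\,E[\widehat\Delta_l/\widehat\Delta_j]$, with $E[\Delta_k/\Delta_i] = \mathbf{1}_{\{k=i\}}$ and $E[\widehat\Delta_l/\widehat\Delta_j] = \mathbf{1}_{\{l=j\}}$. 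Collecting both mixed pieces and using the symmetry $\nabla^2_{\theta_{i,j}}L = \nabla^2_{\theta_{j,i}}L$ yields exactly $\nabla^2_{\theta_{i,j}}L(\theta(n),\lambda)$. The remainder contributes at most $O\!\big(\|\Phi(n)\|^3/(\delta_1\delta_2)\big) = O(\delta_1+\delta_2+\delta_1^2/\delta_2+\delta_2^2/\delta_1)$, which is $o(1)$ as $\delta_1,\delta_2\downarrow 0$ at comparable rates (in particular when $\delta_1 = \delta_2 \downarrow 0$).

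To obtain the stated w.p.\ 1 convergence I would then invoke the two-layer averaging used for Lemma \ref{lemma:Lagrangian}. First, replacing the exact values $L(\theta(n)+\Phi(n),\lambda)$ and $L(\theta(n),\lambda)$ by the running estimates $\bar L'(nK),\bar L(nK)$ introduces only an $o(1)$ error, since those estimates converge on the fastest timescale. Second, the single-sample ratio differs from its conditional mean by a martingale-difference term; as this term is bounded for fixed $\delta_1,\delta_2$ and the Hessian recursion runs with square-summable step-sizes $\{b(n)\}$, the associated martingale converges, so the recursive estimate $H_{i,j}(n)$ tracks the averaged (conditional-mean) quantity and hence approaches $\nabla^2_{\theta_{i,j}}L$ up to the $o(1)$ bias identified above. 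I expect the main obstacle to be exactly this point: the single-realization estimate appearing in the lemma is \emph{not} itself convergent, and the statement is meaningful only through the averaging supplied by the stochastic-approximation recursion. Making the interchange of the $\delta_1,\delta_2\downarrow 0$ and $n\to\infty$ limits rigorous, while simultaneously controlling the $o(1)$ error incurred by using $\bar L,\bar L'$ in place of $L$, is the delicate part of the argument.
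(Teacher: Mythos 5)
Your proposal is correct and is, in substance, the very argument the paper relies on: the paper's entire proof is a one-line citation to Proposition 4.9 of \citep{shalabh2011constrained}, and that proposition is established precisely by the third-order Taylor expansion, the zero-mean cancellation of the $\pm 1$ Bernoulli perturbation terms under conditional expectation, and the $O\bigl((\delta_1+\delta_2)^3/(\delta_1\delta_2)\bigr)$ remainder bound that you work out. Your closing observation is also the correct reading of the situation — the single-realization ratio is not pathwise convergent, and the statement is only meaningful once the conditional expectation (present in the cited proposition but dropped in this paper's lemma statement) is taken, with the residual martingale noise absorbed by the square-summable step-sizes of the Hessian recursion in the subsequent lemma.
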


\begin{lemma}
\label{lemma:hessian}
\[\left \| H_{i, j}(n) - \nabla^2_{\theta_{i, j}} L(\theta(n), \lambda(n)) \right \| \rightarrow 0\textrm{ w.p. 1},
\]
with $\delta_1, \delta_2 \rightarrow 0$ as $n \rightarrow \infty, \quad \forall i, j \in \{1, 2, \dots, |A| \times |B| \}$.

\begin{proof}
{\rm
Follows from Lemma \ref{lemma:hessian-1} applied to the Hessian update of SASOC-H.
}
\end{proof}
\end{lemma}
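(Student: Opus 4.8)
The plan is to recognize the Hessian recursion in \eqref{eqn:hessian-update-rule} as an averaging stochastic approximation scheme on the $\{b(n)\}$ timescale and to show that it tracks the true Hessian by combining the Lagrangian-convergence result of Lemma \ref{lemma:Lagrangian} with the SPSA Hessian-estimate result of Lemma \ref{lemma:hessian-1}. First I would rewrite the update as
\[H_{i,j}(n+1) = (1-b(n))\, H_{i,j}(n) + b(n)\, Y_{i,j}(n),\]
where $Y_{i,j}(n) = \dfrac{\bar{L}'(nK) - \bar{L}(nK)}{\delta_1 \Delta_j(n)\,\delta_2 \widehat\Delta_i(n)}$ is the SPSA Hessian sample. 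Using Lemma \ref{lemma:Lagrangian}, I would replace $\bar{L}'(nK)$ and $\bar{L}(nK)$ by $L(\theta(n)+\delta_1\Delta(n)+\delta_2\widehat\Delta(n),\lambda(n))$ and $L(\theta(n),\lambda(n))$ respectively, absorbing the discrepancy into an error term $\xi_2(n)$ that is $o(1)$ w.p.\ $1$ on the natural timescale. Since $c, g_{i,j}, h$ are bounded and $\Delta, \widehat\Delta$ are $\pm 1$-valued with $\delta_1,\delta_2$ fixed, the increments $Y_{i,j}(n)$ are bounded, so the recursion decomposes in the standard way into a mean field, the bias $\xi_2(n)$, and a martingale-difference noise $M_{n+1}$.

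Next I would dispose of the noise and identify the mean field. The martingale $\{\sum_{k\le n} b(k)\, M_{k+1}\}$ is square-integrable with bounded increments, so by the square-summability of $\{b(n)\}$ in (A3) and the martingale convergence theorem it converges a.s., exactly as in the proof of Lemma \ref{lemma:Lagrangian}. The conditional mean $E[Y_{i,j}(n)\mid \mathcal{F}_{n-1}]$, taken over the Bernoulli perturbations, is precisely the two-sided SPSA Hessian estimator built from the true Lagrangian, which by Lemma \ref{lemma:hessian-1} equals $\nabla^2_{\theta_{i,j}} L(\theta(n),\lambda(n))$ up to a term vanishing as $\delta_1,\delta_2\downarrow 0$. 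The recursion therefore tracks the limiting ODE
\[\dot{H}_{i,j}(t) = \nabla^2_{\theta_{i,j}} L(\theta(t),\lambda(t)) - H_{i,j}(t),\]
whose unique globally asymptotically stable equilibrium, for frozen $(\theta,\lambda)$, is $H_{i,j} = \nabla^2_{\theta_{i,j}} L(\theta,\lambda)$ by virtue of the contraction supplied by the $-H_{i,j}(t)$ term.

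The main obstacle is that $\theta$ and $H$ are updated on the \emph{same} timescale $\{b(n)\}$, so the target $\nabla^2_{\theta_{i,j}} L(\theta(n),\lambda(n))$ is itself moving rather than frozen, and I must argue that it varies slowly enough for $H$ to track it. Here assumption (A2) is essential: the bounded third derivative of $L$ makes $\nabla^2_\theta L$ Lipschitz in $\theta$, and since each $\theta$-update and each $\lambda$-update changes the parameters by only $O(b(n))$ and $O(a(n))$ respectively (with $a(n)/b(n)\to 0$ by (A3)), the target drifts by $O(b(n))$ per step, which the contraction in the $H$-recursion absorbs; once the $\theta$-recursion of SASOC-H converges to $\theta^*$, the target freezes and the tracking error relaxes to zero. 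Assembling these pieces via the Hirsch lemma \citep[Theorem 1, pp.~339]{hirsch1989convergent}, as in Lemma \ref{lemma:Lagrangian}, yields $\|H_{i,j}(n) - \nabla^2_{\theta_{i,j}} L(\theta(n),\lambda(n))\| \to 0$ w.p.\ $1$ in the limit $\delta_1,\delta_2\to 0$, $n\to\infty$, which is the claim.
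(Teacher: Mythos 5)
Your proposal is correct and takes essentially the same route as the paper: the paper's entire proof is the one-line remark that the claim ``follows from Lemma~\ref{lemma:hessian-1} applied to the Hessian update of SASOC-H,'' and your argument---rewriting the recursion as an averaging scheme, substituting the true Lagrangian values via Lemma~\ref{lemma:Lagrangian}, identifying the mean field through Lemma~\ref{lemma:hessian-1}, disposing of the martingale noise, and invoking the contraction in the limiting ODE $\dot H_{i,j} = \nabla^2_{\theta_{i,j}}L - H_{i,j}$---is precisely that application written out in full, mirroring the machinery the paper itself uses for Lemma~\ref{lemma:Lagrangian}. Your additional discussion of the moving target (since $\theta$ and $H$ share the step-size $\{b(n)\}$, with (A2) supplying Lipschitz continuity of $\nabla^2_\theta L$) addresses a subtlety the paper's one-line proof passes over silently, but it is a refinement of, not a departure from, the paper's approach.
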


\begin{lemma}
\label{lemma:inverse-hessian}
\[\left \| M(n) - \Upsilon(\nabla^2_{\theta} L(\theta(n), \lambda(n)))^{-1} \right \| \rightarrow 0\textrm{ w.p. 1},
\]
with $\delta_1, \delta_2 \rightarrow 0$ as $n \rightarrow \infty, \quad \forall i, j \in \{1, 2, \dots, |A| \times |B| \}$.

\begin{proof}
{\rm
Follows from Lemma \ref{lemma:hessian} and \cite[Lemma A.9]{bhatnagar2007adaptive}.
}
\end{proof}
\end{lemma}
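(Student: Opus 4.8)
The plan is to combine the already-established convergence of the Hessian estimate (Lemma \ref{lemma:hessian}) with the continuity properties of the projection operator $\Upsilon(\cdot)$ guaranteed by (A4), and then to invoke the continuity of matrix inversion restricted to the well-conditioned range of $\Upsilon$. First I would set $A_n = H(n)$ and $B_n = \nabla^2_\theta L(\theta(n),\lambda(n))$; Lemma \ref{lemma:hessian} gives $\|A_n - B_n\| \to 0$ w.p.\ 1. The first clause of (A4) then yields $\|\Upsilon(H(n)) - \Upsilon(\nabla^2_\theta L(\theta(n),\lambda(n)))\| \to 0$ w.p.\ 1, which reduces the problem to transferring this convergence through the inversion map.

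Next I would upgrade convergence of the matrices to convergence of their inverses. The key observation is that $\Upsilon$ maps into symmetric positive-definite matrices whose eigenvalues are confined to $[\epsilon, 1/\epsilon]$, so by the boundedness clauses of (A4) the sequences $\{\Upsilon(H(n))\}$ and $\{\Upsilon(H(n))^{-1}\}$ are uniformly bounded in norm almost surely. On the resulting compact set of uniformly well-conditioned matrices, inversion is Lipschitz: using $X^{-1} - Y^{-1} = X^{-1}(Y - X)Y^{-1}$ one obtains $\|\Upsilon(H(n))^{-1} - \Upsilon(\nabla^2_\theta L)^{-1}\| \le c\,\|\Upsilon(H(n)) - \Upsilon(\nabla^2_\theta L)\|$ for a constant $c$ depending only on $\epsilon$. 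This is exactly the estimate isolated in \cite[Lemma A.9]{bhatnagar2007adaptive}, and combining it with the previous paragraph gives $\|M(n) - \Upsilon(\nabla^2_\theta L(\theta(n),\lambda(n)))^{-1}\| \to 0$ w.p.\ 1 for SASOC-H, where $M(n) = \Upsilon(H(n))^{-1}$.

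For SASOC-W, where $M(n)$ is maintained by the Woodbury recursion \eqref{eqn:wudbury-update-rule} rather than by explicit inversion, I would first note that Woodbury's identity is an exact algebraic identity, so the recursively propagated matrix coincides, at every step, with the true inverse of the rank-one-updated Hessian; thus up to the projection the SASOC-W iterate equals $\Upsilon(H(n))^{-1}$. I would also record the compatibility $\Upsilon(H)^{-1} = \Upsilon(H^{-1})$, which holds because the projection range $[\epsilon, 1/\epsilon]$ is closed under $x \mapsto 1/x$ and $\Upsilon$ acts on eigenvalues while fixing eigenvectors; this reconciles the fact that SASOC-H projects before inverting whereas SASOC-W projects the inverse. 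Hence the conclusion of the lemma applies verbatim to SASOC-W as well.

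The main obstacle is the passage from convergence of the matrices to convergence of their inverses, since inversion is continuous only away from singular matrices and admits no global modulus of continuity. This difficulty is resolved precisely by (A4): its boundedness clauses force $\sup_n \|\Upsilon(H(n))^{-1}\| < \infty$, confining the iterates to a compact region of well-conditioned matrices on which inversion is Lipschitz. A secondary point to verify is that the equivalence used for SASOC-W is exact in the analysis, i.e.\ the Woodbury recursion introduces no drift analytically, so that no additional error term needs to be tracked.
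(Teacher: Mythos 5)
Your proposal is correct and takes essentially the same route as the paper: the paper's proof simply invokes Lemma \ref{lemma:hessian} together with Lemma A.9 of \citep{bhatnagar2007adaptive}, whose content is precisely what you spell out --- the (A4)-continuity step $\|\Upsilon(H(n))-\Upsilon(\nabla^2_\theta L)\|\rightarrow 0$, the uniform boundedness of the projected matrices and their inverses from the boundedness clauses of (A4), and the resolvent identity $X^{-1}-Y^{-1}=X^{-1}(Y-X)Y^{-1}$ giving convergence of the inverses. Your third paragraph on SASOC-W is extraneous to this lemma (the paper handles that case in a separate, subsequent lemma), and the side identity $\Upsilon(H)^{-1}=\Upsilon(H^{-1})$ you record there can fail under the implemented eigenvalue-projection $\Upsilon$ when $H$ has negative eigenvalues, but this does not affect the SASOC-H claim that the lemma actually asserts.
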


Let \[\bar{K}^\lambda = \left \{ \theta \in S: \dfrac{d L (\theta(t), \lambda)}{dt} = - \nabla_{\theta} L (\theta(t), \lambda)^T \Upsilon(\nabla^2_\theta L(\theta(t), \lambda))^{-1} \nabla_\theta L(\theta(t), \lambda) = 0 \right \}.\]
\begin{theorem}
Under assumptions (A1)-(A4), in the limit as $\delta_1, \delta_2 \rightarrow 0$,
$\theta(R) \rightarrow \theta^* \in \bar{K}^{\lambda}$ almost surely as $R \rightarrow \infty$.
\begin{proof}
{\rm
Following Lemmas \ref{lemma:Lagrangian}, \ref{lemma:sasoc-h:gradient} and \ref{lemma:inverse-hessian}, with $\delta_1, \delta_2 \rightarrow 0$, the update of parameter $\theta$ can we re-written in vector form as
\[ \theta_{n + 1} = \Pi \left ( \theta_n - b(n) \Upsilon(\nabla^2_\theta L(\theta(t), \lambda))^{-1} \nabla_\theta L(\theta(t), \lambda) + b(n) \chi_n \right )\]
with $\chi_n = o(1)$. Thus, the update of parameter $\theta$ can be viewed as a noisy Euler discretization of the ODE (\ref{eqn:sasoc-h:theta-ode}) using a standard approximation argument as in \cite[pp. 191-196]{kushner-clark}. Note that $V^\lambda(\cdot) = L(\cdot, \lambda)$ itself serves as the associated Lyapunov function \cite[pp. 75]{kushner-yin} for the ODE (\ref{eqn:sasoc-h:theta-ode}) with stable limit points of the ODE lying within the set $\bar{K}^\lambda$. From assumption (A2), $L(\theta, \lambda)$ is assumed to be continuous. Hence over the compact set $M$, $L(\theta, \lambda)$ is uniformly bounded. Thus, from Lasalle's invariance theorem \cite{lasalle}, $\theta(n) \rightarrow \theta^* \in K^{\lambda}$ a.s. as $n \rightarrow \infty$.
}
\end{proof}
\end{theorem}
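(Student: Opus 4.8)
The plan is to exploit the multi-timescale structure of SASOC-H and reduce the coupled $(\theta,H)$ recursion to a single projected ODE for which $L(\cdot,\lambda)$ serves as a Lyapunov function. First I would invoke the step-size conditions in (A3): because $a(n)/b(n)\to 0$, the multiplier $\lambda(n)$ lives on the slowest scale and may be frozen at a constant $\lambda$ throughout the $\theta$-analysis, while because $b(n)/d(n)\to 0$, the averaged Lagrangian estimates $\bar L$ and $\bar L'$ are essentially equilibrated relative to the $\theta$-update. Concretely, Lemma~\ref{lemma:Lagrangian} lets me substitute $\bar L(nK)\to L(\theta(n),\lambda)$ and $\bar L'(nK)\to L(\theta(n)+\delta_1\Delta(n)+\delta_2\widehat\Delta(n),\lambda)$ up to an $o(1)$ error to be absorbed into a term $\chi_n$.

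Next I would handle the search direction. The quotient $(\bar L(nK)-\bar L'(nK))/(\delta_2\widehat\Delta_j(n))$ is the negative of the SPSA gradient estimate of Lemma~\ref{lemma:sasoc-h:gradient}, so in the limit $\delta_1,\delta_2\downarrow 0$ it tends to $-\nabla_{\theta_j}L(\theta(n),\lambda)$ with probability one; simultaneously, Lemma~\ref{lemma:inverse-hessian} gives $M(n)\to\Upsilon(\nabla^2_\theta L(\theta(n),\lambda))^{-1}$. Combining these, the increment becomes
\[
\theta(n+1)=\bar\Gamma\!\left(\theta(n)-b(n)\,\Upsilon(\nabla^2_\theta L(\theta(n),\lambda))^{-1}\nabla_\theta L(\theta(n),\lambda)+b(n)\chi_n\right),
\]
with $\chi_n=o(1)$. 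I would then isolate the stochastic part as a martingale difference sequence (the per-step observation minus its conditional expectation); compactness of $S$, Lipschitz continuity of the single-stage cost and constraint functions, and square-summability of $\{b(n)\}$ from (A3) render the associated martingale a.s.\ convergent, so the noise is asymptotically negligible in the Kushner--Clark sense.

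With these reductions, the recursion is a projected, noisy Euler discretization of the ODE~\eqref{eqn:sasoc-h:theta-ode}. I would take $V^\lambda(\theta)=L(\theta,\lambda)$ as the candidate Lyapunov function and compute, in the interior of $\bar\D$ where $\check\Gamma$ acts as the identity,
\[
\frac{d}{dt}V^\lambda(\theta(t))=-\nabla_\theta L(\theta(t),\lambda)^T\,\Upsilon(\nabla^2_\theta L(\theta(t),\lambda))^{-1}\,\nabla_\theta L(\theta(t),\lambda)\le 0,
\]
the inequality holding because (A4) forces $\Upsilon(\cdot)$ to return a symmetric positive definite matrix, whose inverse is therefore also positive definite, and the derivative vanishing exactly on $\bar K^\lambda$. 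Invoking LaSalle's invariance principle (equivalently the Kushner--Clark theorem) then yields $\theta(R)\to\theta^*\in\bar K^\lambda$ almost surely as $R\to\infty$ for each fixed small $(\delta_1,\delta_2)$, and letting $\delta_1,\delta_2\downarrow 0$ removes the residual SPSA bias.

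The hard part, and the step on which I would spend the most care, is making the three simultaneous reductions rigorous and mutually consistent: the per-step SPSA bias is $O(\delta_1+\delta_2)$, the Lagrangian- and Hessian-inverse tracking errors are each $o(1)$ only along the natural (faster) timescale, and crucially $H$ and $\theta$ are updated on the same scale $\{b(n)\}$, so one must verify that $M(n)$ genuinely tracks $\Upsilon(\nabla^2_\theta L(\theta(n),\lambda))^{-1}$ evaluated at the current drifting iterate rather than a stale one. I would control this through the standard two-stage order of limits—first fix $(\delta_1,\delta_2)$ and establish convergence to an $O(\delta_1+\delta_2)$-neighborhood of $\bar K^\lambda$, then let $\delta_1,\delta_2\downarrow 0$—and by appealing to the continuity property of $\Upsilon$ in (A4) to pass the limit through the projection and the matrix inversion. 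A secondary technical point is ensuring $V^\lambda$ stays non-increasing under the boundary projection $\check\Gamma$, which follows from the standard characterization of projected dynamical systems on the convex set $\bar\D$.
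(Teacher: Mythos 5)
Your proposal is correct and follows essentially the same route as the paper's proof: freeze $\lambda$ via the timescale separation in (A3), invoke Lemmas \ref{lemma:Lagrangian}, \ref{lemma:sasoc-h:gradient} and \ref{lemma:inverse-hessian} to rewrite the recursion as a projected noisy Euler discretization of the ODE \eqref{eqn:sasoc-h:theta-ode}, and conclude with $V^\lambda(\cdot)=L(\cdot,\lambda)$ as Lyapunov function together with LaSalle's invariance theorem in the Kushner--Clark framework. The extra details you supply --- the explicit computation of $\frac{d}{dt}V^\lambda\le 0$ from positive definiteness under (A4), the same-timescale tracking of $M(n)$, and the two-stage order of limits in $(\delta_1,\delta_2)$ --- are elaborations of steps the paper leaves implicit or delegates to cited results, not a different argument.
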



Convergence analysis of SASOC-H when the Hessian in inverted using an iterative procedure based on Woodbury's identity, follows from the above analysis for the SASOC-H algorithm (with direct inversion of the Hessian) given the following lemma instead of Lemma \ref{lemma:inverse-hessian}.

\begin{lemma}
\label{lemma:sasoc-w-inverse-hessian}
\[\left \| M(n) - \Upsilon(\nabla^2_{\theta} L(\theta(n), \lambda(n)))^{-1} \right \| \rightarrow 0\textrm{ w.p. 1},
\]
with $\delta_1, \delta_2 \rightarrow 0$ as $n \rightarrow \infty, \quad \forall i, j \in \{1, 2, \dots, |A| \times |B| \}$.
\begin{proof}
{\rm
From Woodbury's identity, since $M(n), n \ge 1$ sequence of SASOC-W is identical to the $\Upsilon(H(n))^{-1}, n \ge 1$ sequence of SASOC-H, the result follows from Lemma \ref{lemma:inverse-hessian}.
}
\end{proof}
\end{lemma}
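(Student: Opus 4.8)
The plan is to reduce Lemma \ref{lemma:sasoc-w-inverse-hessian} to the already-established Lemma \ref{lemma:inverse-hessian}, exploiting the fact that Woodbury's identity is an \emph{exact} algebraic identity rather than an approximation. Consequently the Woodbury-based recursion for $M(n)$ in SASOC-W introduces no extra error term requiring fresh stochastic-approximation analysis; it merely recomputes, step for step, the same inverse that SASOC-H obtains by explicit inversion. Once this equivalence of iterates is in place, the convergence $\|M(n) - \Upsilon(\nabla^2_\theta L(\theta(n),\lambda(n)))^{-1}\| \to 0$ w.p.~1 follows verbatim from Lemma \ref{lemma:inverse-hessian}.

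Concretely, I would first record the rank-one structure underlying the Hessian recursion. As shown in Section \ref{sec:sasoc-w}, the unprojected SASOC-H Hessian update can be written
\[
H(n+1) = (1-b(n))\,H(n) + P(n)\,Z(nK)\,Q(n),
\]
where $P(n)$ is a column vector, $Q(n)$ a row vector and $Z(nK)$ a scalar, so that $P(n)Z(nK)Q(n)$ is rank one. This places $H(n+1)$ in the canonical Woodbury form $A + BCD$ with $A = (1-b(n))H(n)$. I would then argue by induction that the Woodbury iterate equals the true inverse: assuming $M(n) = H(n)^{-1}$, the substitution $A^{-1} = M(n)/(1-b(n))$ into Woodbury's identity produces exactly the SASOC-W update \eqref{eqn:wudbury-update-rule} for $M(n+1)$, and this equals $H(n+1)^{-1}$. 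Thus the unprojected Woodbury recursion reproduces the unprojected inverse Hessian of SASOC-H step for step, and under the same projection $\Upsilon$ applied at each stage the two $M(n)$ sequences coincide.

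The main obstacle is the interplay between the projection operator $\Upsilon$ and the Woodbury step: since SASOC-W applies $\Upsilon$ after each update, the matrix entering the next Woodbury iteration is the \emph{projected} $M(n)$, whereas the inductive identity above is stated for the unprojected $H(n)^{-1}$; the base case likewise requires reconciling the initializations $M(0)$ and $\Upsilon(H(0))^{-1}$. I would handle both by carrying the exact inversion identity on the unprojected iterates and then transferring to the projected quantities through the continuity and uniform-boundedness properties of $\Upsilon$ granted by assumption (A4), together with \cite[Lemma A.9]{bhatnagar2007adaptive}, exactly as in the proof of Lemma \ref{lemma:inverse-hessian}. Because the convergence statement is asymptotic and holds in the limit $\delta_1,\delta_2 \to 0$, any discrepancy in the initialization washes out and is immaterial to the limit, so the claim reduces cleanly to Lemma \ref{lemma:inverse-hessian}.
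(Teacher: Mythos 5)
Your proposal follows essentially the same route as the paper: both reduce the claim to Lemma \ref{lemma:inverse-hessian} by observing that Woodbury's identity is an exact algebraic identity, so that the $M(n)$ sequence generated by SASOC-W reproduces the $\Upsilon(H(n))^{-1}$ sequence of SASOC-H, after which the convergence statement is inherited verbatim. If anything, you are more careful than the paper, which simply asserts that the two sequences are identical, whereas you correctly flag (and sketch how to handle, via (A4) and the argument of Lemma \ref{lemma:inverse-hessian}) the subtlety that SASOC-W applies $\Upsilon$ to the inverse at every step and feeds the projected matrix back into the next Woodbury update, while SASOC-H projects $H(n)$ itself, so the step-for-step identity holds exactly only on the unprojected iterates.
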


\bibliographystyle{plainnat}

\bibliography{sasoc-simulation}

\begin{thebibliography}{32}
\providecommand{\natexlab}[1]{#1}
\providecommand{\url}[1]{\texttt{#1}}
\expandafter\ifx\csname urlstyle\endcsname\relax
  \providecommand{\doi}[1]{doi: #1}\else
  \providecommand{\doi}{doi: \begingroup \urlstyle{rm}\Url}\fi

\bibitem[Alter(2008)]{alter2008service}
S.~Alter.
\newblock Service system fundamentals: Work system, value chain, and life
  cycle.
\newblock \emph{IBM Systems Journal}, 47\penalty0 (1):\penalty0 71--85, 2008.

\bibitem[April et~al.(2001)April, Glover, Kelly, and
  Laguna]{laguna1998optimization}
J.~April, F.~Glover, J.~Kelly, and M.~Laguna.
\newblock Simulation/optimization using "real-world" applications.
\newblock In \emph{Winter Simulation Conference}, volume~1, pages 134--138,
  2001.

\bibitem[Banerjee et~al.(2011)Banerjee, Desai, and
  Dasgupta]{banerjee2011simulation}
D.~Banerjee, N.~Desai, and G.~Dasgupta.
\newblock Simulation-based evaluation of dispatching policies in service
  systems.
\newblock In \emph{Winter simulation conference}, 2011.

\bibitem[Bhatnagar(2005)]{bhatnagar2005adaptive}
S.~Bhatnagar.
\newblock Adaptive multivariate three-timescale stochastic approximation
  algorithms for simulation based optimization.
\newblock \emph{ACM Transactions on Modeling and Computer Simulation (TOMACS)},
  15\penalty0 (1):\penalty0 74--107, 2005.

\bibitem[Bhatnagar(2007)]{bhatnagar2007adaptive}
S.~Bhatnagar.
\newblock Adaptive {Newton}-based multivariate smoothed functional algorithms
  for simulation optimization.
\newblock \emph{ACM Transactions on Modeling and Computer Simulation (TOMACS)},
  18\penalty0 (1):\penalty0 1--35, 2007.

\bibitem[Bhatnagar et~al.(2003)Bhatnagar, Fu, Marcus, and
  Wang]{bhatnagar2003two}
S.~Bhatnagar, M.C. Fu, S.I. Marcus, and I.~Wang.
\newblock {Two-timescale simultaneous perturbation stochastic approximation
  using deterministic perturbation sequences}.
\newblock \emph{ACM Transactions on Modeling and Computer Simulation (TOMACS)},
  13\penalty0 (2):\penalty0 180--209, 2003.
\newblock ISSN 1049-3301.

\bibitem[Bhatnagar et~al.(2011{\natexlab{a}})Bhatnagar, Hemachandra, and
  Mishra]{shalabh2011constrained}
S.~Bhatnagar, N.~Hemachandra, and V.K. Mishra.
\newblock Stochastic approximation algorithms for constrained optimization via
  simulation.
\newblock \emph{ACM Transactions on Modeling and Computer Simulation (TOMACS)},
  21\penalty0 (3):\penalty0 15, 2011{\natexlab{a}}.

\bibitem[Bhatnagar et~al.(2011{\natexlab{b}})Bhatnagar, Mishra, and
  Hemachandra]{shalabh2011stochastic}
S.~Bhatnagar, V.~Mishra, and N.~Hemachandra.
\newblock Stochastic algorithms for discrete parameter simulation optimization.
\newblock \emph{IEEE Transactions on Automation Science and Engineering},
  8\penalty0 (4):\penalty0 780 --793, 2011{\natexlab{b}}.

\bibitem[Bhatnagar et~al.(2013)Bhatnagar, Prasad, and Prashanth]{book}
S.~Bhatnagar, H.L. Prasad, and L.A. Prashanth.
\newblock \emph{Stochastic Recursive Algorithms for Optimization}, volume 434.
\newblock Springer, 2013.

\bibitem[Bhatnagar and Borkar(2003)]{bhatnagar2003multiscale}
Shalabh Bhatnagar and Vivek~S Borkar.
\newblock {Multiscale chaotic SPSA and smoothed functional algorithms for
  simulation optimization}.
\newblock \emph{Simulation}, 79\penalty0 (10):\penalty0 568--580, 2003.

\bibitem[Bhulai et~al.(2008)Bhulai, Koole, and Pot]{bhulai2008simple}
S.~Bhulai, G.~Koole, and A.~Pot.
\newblock Simple methods for shift scheduling in multi-skill call centers.
\newblock \emph{Manufacturing and Service Operations Management}, 10\penalty0
  (3), 2008.

\bibitem[Borkar(2005)]{borkar2005actor}
VS~Borkar.
\newblock An actor-critic algorithm for constrained {Markov} decision
  processes.
\newblock \emph{Systems \& control letters}, 54\penalty0 (3):\penalty0
  207--213, 2005.

\bibitem[Borkar(2008)]{borkar2008stochastic}
V.S. Borkar.
\newblock \emph{Stochastic approximation: a dynamical systems viewpoint}.
\newblock Cambridge Univ Pr, 2008.

\bibitem[Brickner et~al.(2010)Brickner, Indrawan, Williams, and
  Chakravarthy]{brickner2010simulation}
C.~Brickner, D.~Indrawan, D.~Williams, and S.R. Chakravarthy.
\newblock Simulation of a stochastic model for a service system.
\newblock In \emph{Winter Simulation Conference}, pages 1636 --1647, dec. 2010.
\newblock \doi{10.1109/WSC.2010.5678905}.

\bibitem[Cezik and L'Ecuyer(2008)]{cezik2008staffing}
M.T. Cezik and P.~L'Ecuyer.
\newblock Staffing multiskill call centers via linear programming and
  simulation.
\newblock \emph{Management Science}, 54\penalty0 (2):\penalty0 310--323, 2008.

\bibitem[Chen et~al.(1999)Chen, Duncan, and Pasik-Duncan]{chen1999kiefer}
H.F. Chen, T.E. Duncan, and B.~Pasik-Duncan.
\newblock A kiefer-wolfowitz algorithm with randomized differences.
\newblock \emph{IEEE Transactions on Automatic Control}, 44\penalty0
  (3):\penalty0 442--453, 1999.

\bibitem[Gill et~al.(1981)Gill, Murray, and Wright]{gill1981practical}
P.E. Gill, W.~Murray, and M.H. Wright.
\newblock \emph{Practical optimization}.
\newblock Academic press, 1981.

\bibitem[Hirsch(1989)]{hirsch1989convergent}
M.W. Hirsch.
\newblock Convergent activation dynamics in continuous time networks.
\newblock \emph{Neural Networks}, 2\penalty0 (5):\penalty0 331--349, 1989.

\bibitem[Kushner and Clark(1978)]{kushner-clark}
Harold~J. Kushner and Dean~S. Clark.
\newblock \emph{Stochastic approximation methods for constrained and
  unconstrained systems}.
\newblock Springer-Verlag, 1978.
\newblock ISBN 0-387-90341-0.

\bibitem[Kushner and Yin(1997)]{kushner-yin}
Harold~Joseph Kushner and George Yin.
\newblock \emph{Stochastic approximation and recursive algorithms and
  applications}.
\newblock Springer, 1997.
\newblock ISBN 0-387-00894-2.

\bibitem[Lasalle and Le~fschetz(1961)]{lasalle}
J.~P. Lasalle and S.~Le~fschetz.
\newblock \emph{Stability by Liapunov's Direct Method with Applications}.
\newblock Academic Press, New York., 1961.

\bibitem[Marbach and Tsitsiklis(2001)]{marbach2001simulation}
P.~Marbach and J.N. Tsitsiklis.
\newblock Simulation-based optimization of {Markov} reward processes.
\newblock \emph{IEEE Transactions on Automatic Control}, 46\penalty0
  (2):\penalty0 191--209, 2001.

\bibitem[Mas-Colell et~al.(1995)Mas-Colell, Whinston, and
  Green]{mas1995microeconomic}
A.~Mas-Colell, M.D. Whinston, and J.R. Green.
\newblock \emph{Microeconomic theory}.
\newblock Oxford University Press, 1995.
\newblock ISBN 9780195073409.

\bibitem[Prasad et~al.(2013)Prasad, Prashanth, Desai, and
  Bhatnagar]{prashanth2011ss}
H.L. Prasad, L.A. Prashanth, N.~Desai, and S.~Bhatnagar.
\newblock Adaptive smoothed functional algorithms for optimal staffing levels
  in service systems.
\newblock \emph{Service Science}, 5\penalty0 (1):\penalty0 29--55, 2013.

\bibitem[Prashanth et~al.(2011)Prashanth, Prasad, Desai, Bhatnagar, and
  Dasgupta]{prashanth2011icsoc}
L.A. Prashanth, H.L. Prasad, N.~Desai, S.~Bhatnagar, and G.~Dasgupta.
\newblock Stochastic optimization for adaptive labor staffing in service
  systems.
\newblock \emph{Service-Oriented Computing}, pages 487--494, 2011.

\bibitem[Robbins and Harrison(2008)]{robbins2008simulation}
T.R. Robbins and T.P. Harrison.
\newblock A simulation based scheduling model for call centers with uncertain
  arrival rates.
\newblock In \emph{Winter Simulation Conference}, pages 2884--2890, 2008.

\bibitem[Spall(1992)]{spall92multivariate}
J.C. Spall.
\newblock {Multivariate stochastic approximation using a simultaneous
  perturbation gradient approximation}.
\newblock \emph{IEEE Transactions on Automatic Control}, 37\penalty0
  (3):\penalty0 332--341, 1992.
\newblock ISSN 0018-9286.

\bibitem[Spall(1997)]{spall1997one}
J.C. Spall.
\newblock {A one-measurement form of simultaneous perturbation stochastic
  approximation}.
\newblock \emph{Automatica}, 33\penalty0 (1):\penalty0 109--112, 1997.
\newblock ISSN 0005-1098.

\bibitem[Spall(2000)]{spall2000adaptive}
J.C. Spall.
\newblock Adaptive stochastic approximation by the simultaneous perturbation
  method.
\newblock \emph{IEEE Transactions on Automatic Control}, 45\penalty0
  (10):\penalty0 1839--1853, 2000.

\bibitem[Spohrer et~al.(2007)Spohrer, Maglio, Bailey, and
  Gruhl]{spohrer2007steps}
J.~Spohrer, P.P. Maglio, J.~Bailey, and D.~Gruhl.
\newblock Steps toward a science of service systems.
\newblock \emph{Computer}, 40\penalty0 (1):\penalty0 71--77, 2007.

\bibitem[Verma et~al.(2011)Verma, Desai, Bhamidipaty, Jain, Nallacherry, Roy,
  and Barnes]{vermaautomated}
A.~Verma, N.V. Desai, A.~Bhamidipaty, A.N. Jain, J.~Nallacherry, S.~Roy, and
  S.~Barnes.
\newblock Automated optimal dispatching of service requests.
\newblock \emph{SRII Global Conference}, 2011.

\bibitem[Wasserkrug et~al.(2008)Wasserkrug, Taub, Zeltyn, Gilat, Lipets,
  Feldman, and Mandelbaum]{wasserkrug2008creating}
S.~Wasserkrug, S.~Taub, S.~Zeltyn, D.~Gilat, V.~Lipets, Z.~Feldman, and
  A.~Mandelbaum.
\newblock Creating operational shift schedules for third-level {IT} support:
  challenges, models and case study.
\newblock \emph{International Journal of Services Operations and Informatics},
  3\penalty0 (3):\penalty0 242--257, 2008.

\end{thebibliography}

\end{document}